\pdfoutput=1
\documentclass{article}

\usepackage{arxiv}

\usepackage[utf8]{inputenc}
\usepackage[T1]{fontenc}
\usepackage{hyperref}
\usepackage{url,amsfonts,xcolor}
\usepackage{graphicx}
\usepackage{epstopdf}
\usepackage{algorithmic}
\usepackage{amsmath,amssymb,amsthm,enumerate,enumitem,mathtools}
\usepackage{cleveref}

\newtheorem{theorem}{Theorem}[section]
\newtheorem{definition}[theorem]{Definition}
\newtheorem{lemma}[theorem]{Lemma}
\newtheorem{corollary}[theorem]{Corollary}

\newtheorem{claim}[theorem]{Claim}

\newcommand{\E}{\text{\bf E}}
\def \v {\boldsymbol{v}}

\def \e {\boldsymbol{e}}
\def \eps {\epsilon}

\DeclareMathOperator{\I}{\mathcal{I}}
\let\S=\undefined
\DeclareMathOperator{\S}{\mathcal{S}}

\DeclareMathOperator{\F}{\mathcal{F}}
\DeclareMathOperator{\A}{\mathcal{A}}
\DeclareMathOperator{\B}{\mathcal{B}}
\DeclareMathOperator{\C}{\mathcal{C}}

\DeclareMathOperator{\J}{\mathcal{J}}
\DeclareMathOperator{\g}{\mathbf{g}}

\DeclareMathOperator{\SR}{\mathcal{SR}}
\DeclareMathOperator{\ALG}{\mathcal{ALG}}
\DeclareMathOperator*{\argmin}{arg\,min}

\DeclareMathOperator{\OO}{\mathcal{O}}

\ifpdf
  \DeclareGraphicsExtensions{.eps,.pdf,.png,.jpg}
\else
  \DeclareGraphicsExtensions{.eps}
\fi

\title{Improved Bi-point Rounding Algorithms and a Golden Barrier for $k$-Median}

\author{Kishen N Gowda \\
	University of Maryland, College Park, US\\
	\texttt{kishen19@cs.umd.edu} \\
	\And
	Thomas Pensyl \\
	\texttt{tommy.pensyl@gmail.com}\\
	\AND
	Aravind Srinivasan\thanks{Aravind Srinivasan was supported in part by NSF grant CCF-1918749, as well as by research awards from Amazon and Google.} \\
	University of Maryland, College Park, US \\
	\texttt{srin@cs.umd.edu} \\
	\And
	Khoa Trinh \\
	Google \\
	\texttt{khoatrinh@google.com} \\
}

\date{}

\hypersetup{
pdftitle={Improved Bi-point Rounding Algorithms and a Golden Barrier for k-Median},
pdfsubject={Improved Bi-point Rounding Algorithms and a Golden Barrier for k-Median},
pdfauthor={Kishen N Gowda, Thomas Pensyl, Aravind Srinivasan, Khoa Trinh},
pdfkeywords={
k-median, clustering, approximation algorithms}}

\begin{document}
\maketitle

\begin{abstract}
The current best approximation algorithms for $k$-median rely on first obtaining a structured fractional solution known as a \emph{bi-point solution}, and then rounding it to an integer solution. We improve this second step by unifying and refining previous approaches. We describe a hierarchy of increasingly-complex partitioning schemes for the facilities, along with corresponding sets of algorithms and factor-revealing non-linear programs. We prove that the third layer of this hierarchy is a $2.613$-approximation, improving upon the current best ratio of $2.675$, while no layer can be proved better than $2.588$ under the proposed analysis.
  
On the negative side, we give a family of bi-point solutions which cannot be approximated better than the square root of the golden ratio, even if allowed to open $k+o(k)$ facilities. This gives a barrier to current approaches for obtaining an approximation better than $2 \sqrt{\phi} \approx 2.544$. Altogether we reduce the approximation gap of bi-point solutions by two thirds.
\end{abstract}

\section{Introduction}\label{sec:intro}


We study the classical $\mathcal{NP}$-hard $k$-median problem. Given a set of clients $\C$, a set of facilities $\F$, and a distance metric $d$ over $\C \cup \F$, the goal is to \emph{open} a subset $\S$ of $k$ facilities which minimizes the total distance from each client to its closest facility in $\S$. In addition to their natural applications in facility location (e.g., in the opening of health-care facilities),
such problems also naturally model clustering in various ways. 



\subsection{Related Work}

Three main approaches have been applied toward constant factor approximations for $k$-median. The first is LP-rounding, using a half-integral solution as an intermediate step. Charikar et al.\  \cite{charikar2002constant} used this approach to give the first constant-factor approximation of $6\frac{2}{3}$. Charikar and Li \cite{charikar2012dependent} later refined this approach with dependent rounding to achieve a factor of $3.25$. 

The second approach is local search. Arya et al.\ \cite{arya2004local} proved that the natural local search algorithm with $\frac2\eps$ simultaneous swaps is a tight $(3+\eps)$-approximation. Their analysis was later nicely simplified by Gupta and Tangwongsan \cite{gupta2008simpler}. Very recently, Cohen-Addad et al.\ \cite{cohen2022improved} showed that this could actually be improved by using an auxiliary cost function which discounts clients with multiple nearby open facilities. They proved that this algorithm with $\OO\big ((\frac1\eps)$\^{}$(\frac1\eps)$\^{}$(\frac1\eps)\big)$ simultaneous swaps achieves an approximation factor of $2.836+\eps$. 

The third approach, and the one we focus on for the rest of the paper, is to first generate an intermediate fractional form called a \emph{bi-point solution}, and then round it to an integral solution. These two steps are usually analyzed independently, each incurring a multiplicative loss in the approximation factor, which we will refer to as the \emph{bi-point generation factor} and the \emph{bi-point rounding factor} respectively. 
Jain and Vazirani~\cite{jain2001approximation} introduced this approach, achieving a bi-point generation factor of 3 via reduction to the related Uncapacitated Facility Location problem, and a bi-point rounding factor of $2$, resulting in a $6$-approximation for $k$-median. Jain, Mahdian and Saberi~\cite{JMS} later reduced the bi-point generation factor from $3$ to $2$ to get a $4$-approximation; Charikar and Guha also gave an earlier 4--approximation by analyzing both steps holistically \cite{DBLP:conf/focs/CharikarG99}. 

Li and Svensson \cite{LiSven} improved the bi-point rounding factor to $\frac{1+\sqrt3+\eps}{2}\approx 1.366$ under the relaxation of being allowed to open $k+\OO\left(\frac{1}{\epsilon}\right)$ facilities, resulting in a \emph{pseudo-approximation} algorithm. Then, in a remarkable result, they showed how to convert \emph{any} $\alpha$-pseudo-approximation algorithm which opens $k+c$ facilities into a proper $(\alpha+\epsilon)$-approximation algorithm, by running it on a set of $n^{\OO(c/\epsilon)}$ derived instances. This enabled conversion of their pseudo-approximation into a $(1+\sqrt3+\eps)$-approximation with runtime $n^{\OO(1/\eps^2)}$.
Their algorithm was based on constructing \emph{stars} of nearby facilities which could be randomly rounded while still guaranteeing a nearby facility for each client.

Byrka et al.\ \cite{Byrka-et-al} refined this approach by categorizing stars by their size and relative distance, and then considering a slew of algorithms which open facilities asymmetrically across each category. They then gave a factor-revealing non-linear program (NLP) corresponding to taking the best of all algorithms, and solved the NLP with computer assistance to get a bi-point rounding factor of $1.3371 +\eps$. They also re-framed the star-rounding technique in terms of bounding positive correlation, and developed a dependent-rounding technique which reduced the number of facilities required to $k+\OO\left(\log\frac1\eps\right)$, resulting in a runtime of $n^{\OO\left(\frac1\eps\log\frac1\eps\right)}$. They also exhibited a bi-point solution with an integrality gap of $\frac{1+\sqrt2}2\approx1.207$ even when allowing for $k+o(k)$ facilities to be opened, thus giving a new lower bound for the bi-point rounding factor, resilient to Li and Svensson's pseudo-approximation technique.

Lastly, during the writing of this paper, Cohen-Addad et al.~\cite{cohen22improvedLMP} have released results refining the bi-point generation factor to be slightly less than $2$. The remainder of our paper does not reflect this result, though it would likely imply some small improvement to the overall factor.

\subsection{Our Work}
Let us now formally define bi-point solutions and summarize the current-best bi-point generation result.

\begin{definition} [Bi-point solution] \label{def:bipoint} Given a $k$-median instance $\I$, a bi-point solution is a pair $F_1, F_2 \subseteq \F$ such that $|F_1| \leq k \leq |F_2|$, along with real numbers $a,b \geq 0$, $a+b=1$ such that $a|F_1| + b|F_2| = k$. The cost of this bi-point solution is defined as $aD_1 + bD_2$, where $D_1$ and $D_2$ are the total connection costs of $F_1$ and $F_2$, respectively.
\end{definition}

\begin{theorem}[\cite{JMS}]\label{thm:JMS_2BGF}
    There exists an algorithm that, given a $k$-median instance, produces a bi-point solution of cost at most $2\cdot OPT$ in polynomial time, where $OPT$ is the cost of optimal solution for the given instance.
\end{theorem}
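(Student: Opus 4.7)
The plan is to derive the bi-point solution from a Lagrangian relaxation of $k$-median to Uncapacitated Facility Location (UFL), relying on the JMS primal--dual algorithm together with its Lagrangian Multiplier Preserving (LMP) guarantee and a binary search over the uniform opening cost.

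First, I would turn the $k$-median instance into a family of UFL instances by attaching a uniform opening cost $z \geq 0$ to every facility in $\F$. Let $F(z)$ be the facility set returned by the JMS primal--dual algorithm on this instance, and let $C(z)$ be the resulting total connection cost. The LMP-$2$ guarantee of JMS states that
$$ C(z) + 2\,z\,|F(z)| \;\leq\; 2\,\OPT_{\text{UFL}}(z). $$
Since the optimal $k$-median solution $S^\star$ (of cost $\OPT$) is itself a feasible UFL solution of total cost $\OPT + zk$, we have $\OPT_{\text{UFL}}(z) \leq \OPT + zk$, which yields the key inequality
$$ C(z) \;\leq\; 2\,\OPT \;+\; 2\,z\bigl(k - |F(z)|\bigr). $$

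Second, I would binary-search on $z$ in an interval $[0,z_{\max}]$, where $z_{\max}$ is chosen large enough to force $|F(z_{\max})| = 1$ (e.g.\ larger than the metric diameter times $|\C|$). At $z=0$ we have $|F(0)| \geq k$, so by bisection I can isolate values $z_2 \leq z_1$ with $|F_2| := |F(z_2)| \geq k \geq |F(z_1)| =: |F_1|$, making $z_1 - z_2$ as small as desired. If at any point $|F(z)| = k$, I set $F_1 = F_2$ and stop. Otherwise I take
$$ a \;=\; \frac{|F_2| - k}{|F_2| - |F_1|}, \qquad b \;=\; \frac{k - |F_1|}{|F_2| - |F_1|}, $$
so that $a|F_1| + b|F_2| = k$. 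Applying the key inequality to $z_1$ and $z_2$ and taking the convex combination yields
$$ aC_1 + bC_2 \;\leq\; 2\,\OPT + 2\bigl[a z_1 (k-|F_1|) + b z_2 (k-|F_2|)\bigr]. $$
Because $a(k-|F_1|) = -b(k-|F_2|)$, when $z_1 = z_2$ the bracketed term vanishes and the bi-point solution $(F_1,F_2,a,b)$ has cost at most $2\,\OPT$; in general the bracket equals $a(k-|F_1|)(z_1-z_2) \leq k(z_1-z_2)$, which can be driven below any desired slack by further bisection.

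I expect the main obstacle to be the binary-search step itself: the map $z \mapsto |F(z)|$ is not obviously monotone because JMS is a discrete primal--dual procedure, so one must argue that bisection still converges to a transition interval $[z_2,z_1]$ that straddles $k$. The standard remedy is to break ties in JMS lexicographically (equivalently, to perturb opening costs infinitesimally) so that $|F(\cdot)|$ becomes a step function whose breakpoints form a polynomial-size set of values of $z$ determined by tight dual events in the JMS execution. This lets the search land on two consecutive breakpoints in $O(\log(\mathrm{poly}(n)/\eps))$ UFL calls, making the residual error $2k(z_1-z_2)$ vanish in the limit; a final rescaling (or the sharper LMP bookkeeping in \cite{JMS}) removes the $\eps$ slack and yields a bi-point solution of cost at most $2\,\OPT$ in polynomial time.
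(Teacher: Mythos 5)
This theorem is not proved in the paper at all: it is imported verbatim from \cite{JMS} (the paper only uses it as a black box to convert its bi-point rounding factor into an overall approximation guarantee), so there is no in-paper proof to compare against. Your proposal correctly reconstructs the standard argument behind the cited result: Lagrangian relaxation of the cardinality constraint to uniform-opening-cost UFL, the LMP-$2$ guarantee $C(z)+2z|F(z)|\le 2\,\OPT_{\mathrm{UFL}}(z)$ of the JMS algorithm, the observation $\OPT_{\mathrm{UFL}}(z)\le \OPT+zk$, and the convex combination of two solutions straddling $k$ obtained by bisection on $z$; note that the bisection does not actually need $z\mapsto|F(z)|$ to be monotone, since maintaining the invariant $|F(z_2)|\ge k\ge |F(z_1)|$ at the two endpoints suffices. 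The only caveat is the one you already flag: as written the argument yields cost $2\,\OPT+2k(z_1-z_2)$, i.e.\ $(2+\eps)\OPT$ for any fixed $\eps>0$ in polynomial time, and removing the slack to get the clean bound $2\cdot \OPT$ requires the bit-complexity/finitely-many-solution-pairs bookkeeping you allude to (or simply accepting the $\eps$, which this paper absorbs into its stated factors anyway).
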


In this paper, we will focus exclusively on improving the bi-point rounding factor, where a factor of $\alpha$ immediately implies a $2\alpha$-approximation for $k$-median, per Theorem~\ref{thm:JMS_2BGF}. 

A key ingredient in improving this factor will be the star-rounding algorithm of Li and Svensson (LS). One drawback of their algorithm is that it requires separate algorithms when $b$ is close to $0$ or $1$, and this caveat propagates to any dependent results. To avoid importing this complexity, we provide a streamlined version of LS  that uses a somewhat-sophisticated dependent-rounding procedure to remove any restriction on the value of $b$, by essentially unifying the star-rounding and knapsack-type algorithms from LS.
This more generic analysis requires opening slightly more facilities than LS, but we feel the gain in simplicity is worthwhile and hope others may find it similarly helpful in future work. (Indeed, this technique would have removed the need for many edge-case algorithms and analysis in \cite{Byrka-et-al} as well.)

\begin{theorem}\label{thm:sr}
    There exists a randomized algorithm $\mathcal{SR}$ that takes a bi-point solution and returns a set of $k+\OO\left(\frac1\eps \log\frac1\eps\right)$ facilities, with expected cost at most
    \begin{align*}
        (1+\eps) \cdot \left[(1-b)D_1+b(3-2b)D_2) \right]. 
    \end{align*} 
\end{theorem}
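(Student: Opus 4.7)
The plan is to adapt the Li--Svensson (LS) star-rounding framework while absorbing their case split on $b$ into a small additive budget of extra openings. First, fix a map $\phi: F_2 \to F_1$ taking each $i' \in F_2$ to its nearest neighbour in $F_1$ (ties arbitrary), and for every $i \in F_1$ form the star $S_i$ consisting of center $i$ and leaves $\phi^{-1}(i)$. For a client $j$ let $\sigma_1(j), \sigma_2(j)$ be its nearest facilities in $F_1, F_2$, at distances $d_1(j), d_2(j)$; write $c(j) = \phi(\sigma_2(j))$. The structural facts I would reuse are $d(\sigma_2(j), c(j)) \le d_1(j) + d_2(j)$ (a triangle inequality through $\sigma_1(j)$, using that $c(j)$ is the nearest $F_1$-facility to $\sigma_2(j)$) and $d(i', c(j)) \le d(i', \sigma_1(j))$ for every leaf $i' \in S_{c(j)}$.

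Next, I would specify the randomized rounding so that every $i \in F_1$ opens with marginal $1-b$ and every $i' \in F_2$ opens with marginal $b$; this preserves the bi-point mass, giving $(1-b)|F_1| + b|F_2| = k$ openings in expectation. The main technical obstacle is to maintain these exact marginals \emph{without} the one-facility-per-star constraint of LS, which becomes infeasible once $b > 1/2$ and some star carries many leaves --- this is precisely where LS must switch to a separate knapsack-type algorithm. I would resolve this by bucketing leaves by geometric ratios of $1+\eps$, rounding marginals to multiples of $\eps$, and then applying a negatively-correlated dependent rounding (e.g.\ randomized pipage or swap rounding) across stars. This unifies the star-rounding and knapsack-type subroutines of LS into a single algorithm valid for every $b \in [0,1]$, at the price of $\OO(\tfrac{1}{\eps}\log\tfrac{1}{\eps})$ extra opens (spent on concentration slack) and a $(1+\eps)$ factor in the cost.

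For the cost analysis I would bound the expected connection cost of each client $j$ by the minimum of two routing options: (a) route to $\sigma_1(j)$ if it opens, which occurs with probability $1-b$ and contributes at most $d_1(j)$; (b) otherwise route into $S_{c(j)}$, paying $d_2(j)$ if $\sigma_2(j)$ itself is the opened leaf, and at most $3 d_2(j)$ otherwise, by chaining triangle inequalities $j \to \sigma_2(j) \to c(j) \to i'$ with the two structural facts stated above. The coefficient $(3-2b)$ then arises from the following observation: $\sigma_2(j)$ is itself one of the leaves of $S_{c(j)}$ and carries marginal $b$, so the cheap event $\{\sigma_2(j) \text{ opens}\}$ subtracts $2b\,d_2(j)$ from the naive bound $b \cdot 3 d_2(j)$ for option (b), yielding an overall per-client expected cost at most $(1-b)d_1(j) + b(3-2b)d_2(j)$. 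Summing over $j$ and carrying the $(1+\eps)$ slack from the bucketing step gives the claimed bound.
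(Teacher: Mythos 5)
Your high-level plan matches the paper's (unify LS's star and knapsack cases via a dependent rounding with exact marginals $1-b$ on $F_1$ and $b$ on $F_2$), but the step you leave vague is precisely the crux of the theorem, and as stated it does not go through. To get the coefficient $b(3-2b)$ you implicitly need that the two closure events are not positively correlated beyond a $(1+\eps)$ factor, i.e.\ $\Pr[\sigma_1(j)\text{ closed}\wedge\sigma_2(j)\text{ closed}]\le(1+\eps)b(1-b)$. In any star-based scheme, the closure of an $F_1$-root and the closure of an $F_2$-leaf in a \emph{different} star are driven by the underlying star variables in \emph{opposite} directions (root closed $\leftrightarrow$ its variable is high; leaf closed $\leftrightarrow$ the other star's variable is low), so what you need is an upper bound on a mixed product of the form $\E[X_i(1-X_{i'})]$. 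Standard negative-correlation guarantees of pipage/swap rounding only bound same-direction products ($\E[\prod X_i]$ and $\E[\prod(1-X_i)]$) and give no such mixed bound; dependent rounding across stars to fix the cardinality can make "root of one star closed" and "leaves of another star closed" positively correlated. The paper's proof exists essentially to supply this bound: it runs a symmetric dependent rounding (SRDR) whose upper-correlation guarantee $\E[X_i(1-X_{i'})]\le(x_i(1-x_{i'}))^{1-1/(t+1)}$ with $t=\Theta\bigl(\frac1\eps\log\frac1\eps\bigr)$ yields $\Pr[\bar i_1\bar i_2]\le(1+\eps)b(1-b)$ for every $b$; the same parameter $t$ (at most $t$ fractional coordinates, weighted sum preserved exactly) is also what gives the $k+\OO\bigl(\frac1\eps\log\frac1\eps\bigr)$ facility bound \emph{with probability one} --- your "concentration slack" would not give an additive term independent of $k$ holding deterministically. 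Your bucketing-plus-swap-rounding sketch neither states this mixed-correlation requirement nor argues your scheme satisfies it, so the central claim is unsupported.

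There are also two concrete problems in the cost analysis. First, the backup bound "at most $3d_2(j)$ by chaining $j\to\sigma_2(j)\to c(j)\to i'$" is wrong as written: the distance from the root $c(j)$ to an arbitrary opened leaf $i'$ is not controlled, and the correct backup is the root itself (open whenever not all leaves are open), at distance at most $d_1(j)+2d_2(j)$, which is $\le 3d_2(j)$ only when $d_1(j)\le d_2(j)$. Second, your single routing order (try $\sigma_1(j)$ first, then the star) only yields the target when $d_1(j)\le d_2(j)$: in general it gives $(1-b)d_1+bd_2+\Pr[\bar\sigma_1\bar\sigma_2](d_1+d_2)$, which exceeds $(1-b)d_1+b(3-2b)d_2$ when $d_2<d_1$. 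The paper handles this by switching the connection order for clients with $d_2<d_1$ (connect in precedence $(i_2,i_1,i_3)$ rather than $(i_1,i_2,i_3)$), so that the expensive joint-closure term multiplies only $2d_2$; your write-up would need this case split (or an equivalent argument) to reach the stated bound.
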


To account for the extra facilities opened, we invoke the pseudo-approximation reduction of LS for a net run-time of $n^{\OO\left(\frac{1}{\eps^2}\log\frac1\eps\right)}$, and an additional additive penalty of $\eps$ incurred in the approximation factor. For convenience, we generally omit the $\eps$ term when stating our approximation factors throughout the paper. (Alternatively, we may consider $\eps$ as being absorbed into the rounding error in our non-exact approximation factors).

Now consider the previously mentioned bi-point rounding algorithms of Jain and Vazirani (JV) and of LS. JV is based on $F_2$-centric stars, while LS is based on $F_1$-centric stars. In each case, the stars function to provide a worst case backup bound for each client.  Byrka et al.\ refined LS by partitioning stars according to a factor $g(i)$ representing  their relative distance to each other. This exposed new possible backup bounds beyond those immediately provided by the stars, thus enabling a larger variety of competing algorithms, and resulting in the improved approximation.

In this paper we apply a similar refinement, but to JV instead of LS. We find JV is more amenable to these techniques for several reasons. First, the resulting backup bounds are more direct and thus cheaper, due to only needing to make at most one ``hop" to another facility. Second, we are able to apply the $g(\cdot)$-based partitioning to \emph{all} stars, instead of just some, which also means the new backup bounds are available to all clients. Third, the resulting algorithms are simpler and do not open more than $k$ facilities.
This last property immediately implies that our new algorithms cannot actually beat the integrality gap of 2. Nevertheless, when run in tandem with LS, we achieve results which appear to completely subsume those in Byrka et al.\ (e.g., including their algorithms in our analysis does not improve the results of this paper, at least experimentally).

Additionally, thanks to the simpler setting of $F_2$-centric stars, we are able to push the technique further, defining and analyzing a hierarchy of increasingly complex partitions based on the factor $g(i)$. For each layer of the hierarchy, we propose a set of candidate algorithms, and a factor-revealing NLP representing the best of all solutions. The NLP complexity increases exponentially at each layer. However with computer-assisted methods, we are able to rigorously prove that the third layer, in tandem with $\mathcal{SR}$, provides a bi-point rounding factor of $1.3064$ 
(improving over the previous best factor of $1.3371$). We also show the existence of a difficult instance for which our NLP cannot prove a bi-point rounding factor smaller than $1.2943$, for any layer of our hierarchy.

\begin{theorem}\label{thm:main_result}
    There exists a randomized algorithm that, given a bi-point solution of cost $aD_1+bD_2$ and $\eps > 0$, opens at most $k+\OO\left(\frac{1}{\epsilon}\log\frac1\epsilon\right)$ facilities and returns a solution of cost at most $(1.3064+\epsilon)\cdot(aD_1+bD_2)$.
\end{theorem}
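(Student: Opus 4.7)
The plan is to run, in parallel, the streamlined star-rounding algorithm $\mathcal{SR}$ from Theorem~\ref{thm:sr} together with a family of Jain--Vazirani-style rounding algorithms operating on $F_2$-centric stars, and to output whichever solution is cheapest. Because $\mathcal{SR}$ already provides the bound $(1-b)D_1+b(3-2b)D_2$, which is tight as $b \to 1$, the remaining work is to design JV-style algorithms that dominate $\mathcal{SR}$ in the complementary regime and to combine them via a worst-case analysis. Using $\E[\min(A,B)] \leq \min(\E[A],\E[B])$, the expected cost of the meta-algorithm is bounded by the minimum of the individual per-instance guarantees. The facility overhead $k+\OO((1/\eps)\log(1/\eps))$ in the theorem is inherited entirely from $\mathcal{SR}$, since each JV-style candidate is arranged to open exactly $k$ facilities.

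For the JV-style family, I would form $F_2$-centric stars by pairing each $i_2 \in F_2$ with its nearest neighbor in $F_1$, so that every client $j$ has an inexpensive one-hop backup available whenever any facility in its star is opened. Each client is then classified by a ratio-type quantity $g(j)$ comparing $d(j,F_1)$ and $d(j,F_2)$, and stars are partitioned into cells according to a discretization of $g$ using $\ell$ thresholds at layer $\ell$ of the hierarchy. Within each cell I would define candidate rounding schemes determined by per-cell opening probabilities subject to the budget $\sum_i \Pr[i \text{ open}]=k$; for each scheme, a short case analysis based on the triangle inequality and the star backup produces a linear upper bound on the expected connection cost of every client, expressed in terms of per-cell aggregates of $d(\cdot,F_1)$ and $d(\cdot,F_2)$ together with auxiliary parameters describing the star geometry of the cell.

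The final and most delicate step is the factor-revealing NLP whose variables describe the per-cell distance profiles and star parameters, and whose objective is the worst-case ratio of achievable expected cost to $aD_1+bD_2$. The bound from Theorem~\ref{thm:sr} and one constraint per JV-style candidate appear as a family of linear inequalities that the adversary must simultaneously match, because the meta-algorithm's cost is bounded above by any one of them. The main obstacle is rigorously certifying that this NLP, instantiated at layer~$3$ of the hierarchy, has optimal value at most $1.3064$: the number of cell types grows exponentially with the layer, and the bounds themselves are nonconvex (products of distances and opening probabilities). I would attack this by discretizing the threshold grid, replacing nonconvex terms with piecewise-linear or concave envelopes, and solving the resulting tractable relaxation numerically while producing a dual certificate that can be checked by a finite computer-assisted calculation. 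The $\eps$ slack in the theorem absorbs the $(1+\eps)$ from $\mathcal{SR}$ and the discretization error of the grid.
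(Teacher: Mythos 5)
Your high-level plan matches the paper's: run $\mathcal{SR}$ from Theorem~\ref{thm:sr} in tandem with a family of JV-style algorithms that each open exactly $k$ facilities, take the cheapest output, bound it by a factor-revealing NLP with one constraint per candidate plus the $\mathcal{SR}$ bound, and certify the NLP value $\le 1.3064$ by a computer-assisted calculation (the paper uses interval arithmetic over the parameter space rather than envelope relaxations, but that is a difference of technique, not of substance). However, there is a genuine gap in the core ingredient: your hierarchy is built on a \emph{client}-level ratio $g(j)$ comparing $d(j,F_1)$ and $d(j,F_2)$, whereas the paper's hierarchy is built on a \emph{facility}-level ratio $g(i)=d(i,\sigma_B(i))/d(i,\sigma_C(i))$ for $i\in F_1$. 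This requires structure you never introduce: $F_2$ is split into the set $B$ of primary-star centers (padded so $|B|=|F_1|$) and the remainder $C=F_2\setminus B$, and each $i\in F_1$ gets a \emph{secondary} backup $\sigma_C(i)$ in $C$ in addition to the primary backup $\sigma_B(i)$. It is exactly this two-backup structure, together with the facility-level $g(i)$, that yields the refined bounds of the form $d_1+g_t(d_1+d_2)$ and $d_1+\tfrac{1}{g_{t-1}}(d_1+d_2)$ (the cost functions $\C_{g,1}$, $\C_{\frac{1}{g}}$, $\C_{1,\frac{1}{g}}$) and the validity conditions guaranteeing that for every $i\in F_1$ at least one of $\{i,\sigma_B(i),\sigma_C(i)\}$ is opened.

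Classifying clients by $d(j,F_1)/d(j,F_2)$ does not buy anything in this framework: those distances are already adversary-chosen aggregates in the NLP (the $D_{Z,1}^{x,y},D_{Z,2}^{x,y}$ variables), so the classification adds no new backup bounds and no new algorithmic degrees of freedom. With only one star-provided backup per client you essentially recover the JV-style guarantee; the paper notes that the corresponding $m=1$ NLP (whose algorithm set degenerates to JV's pair) cannot even beat $\frac{1+\sqrt{3}}{2}\approx 1.366$, let alone $1.3064$. There is also a directional slip: pairing each $i_2\in F_2$ with its nearest neighbor in $F_1$ produces $F_1$-centric stars (as in $\mathcal{SR}$); the primary stars needed here are formed by pointing each $i_1\in F_1$ to its nearest facility in $F_2$. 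Finally, even granting the partition, you would still need to specify the validity constraints on the per-cell opening probabilities (mass $=k$, the ``$p_{A_t}=1$ or $p_{B_1}=\cdots=p_{B_t}=1$ or $p_{C_t}=\cdots=p_{C_m}=1$'' condition, and $p_{A_1}=1$ or $p_{B_1}=1$) so that every client provably has an open facility within the claimed distance; without them the per-client linear cost bounds in your NLP are not justified, and the number $1.3064$ (which in the paper requires layer $m=3$ with $g_1=0.642$, $g_2=0.833$ and a concrete set of $29$ candidate ``chains'') has no derivation.
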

\begin{theorem}\label{thm:lower_bound}
    There exists a bi-point solution for which our NLP cannot prove a bi-point rounding factor better than $1.2943$, for any layer of our hierarchy.
\end{theorem}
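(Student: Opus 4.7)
The plan is to exhibit an explicit adversarial bi-point instance and argue that the factor-revealing NLP cannot certify a ratio below $1.2943$ on it, regardless of how deep we descend in the $g$-based partition hierarchy. At each layer, the NLP takes as inputs a structural description of a bi-point solution---star sizes, intra-star edge lengths, and the $g(i)$ values encoding the relative distances between $F_1$-centers and their matched $F_2$-centers---and outputs the worst-case cost ratio achieved by the best candidate algorithm together with the $\SR$ backup of Theorem~\ref{thm:sr}. Proving the lower bound therefore amounts to choosing these inputs so that every candidate algorithm pays at least $1.2943\cdot(aD_1+bD_2)$ in expectation.

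First, I would construct the instance as a distribution supported on a small finite set of $g$-values (together with stipulated star sizes and edge lengths), starting from the integrality-gap construction of Byrka et al.\ and perturbing it so that the several competing upper bounds---the direct $F_2$-centric hop bound, the $\SR$ bound $(1-b)D_1 + b(3-2b)D_2$ from Theorem~\ref{thm:sr}, and any partition-based bound available in the hierarchy---attain a common value near $1.2943$ on the support. By choosing $a$, $b$, and the $g$-values so that these bounds cross simultaneously, the min-over-algorithms is pinned near the crossing value. I would then verify the resulting constant by solving the restricted NLP numerically and confirming that its optimum equals $1.2943$.

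Second, I would argue that deeper refinements of the hierarchy cannot help on this particular instance. The key observation is that a finer partition only gains power when it separates star classes that were previously lumped together; since our construction places all mass on a finite set of $g$-values, once the partition is fine enough to resolve this support, further subdivision introduces no new degrees of freedom, so the NLP optimum stabilizes. The main obstacle will be making this step watertight: one must show that no candidate algorithm at any layer can exploit within-atom variation (of which there is none on our instance), and also rule out the possibility that introducing extra thresholds outside the support somehow tightens a constraint via a non-obvious coupling across classes. I expect this requires a collapsing argument showing that the algorithm's optimal opening rate on each $g$-atom is invariant under further subdivision, together with a finite case analysis over the constraints of the NLP to confirm that none binds below $1.2943$.
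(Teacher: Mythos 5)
Your high-level strategy---freeze the instance's $g$-structure so that refining the hierarchy yields no new power, then verify a hard numerical instance---is in the right spirit, but the proposal is missing the specific device that makes the ``for any layer'' quantifier go through, and that is exactly the step you flag as your main obstacle. The paper's construction takes $g(i)=\hat g$ for \emph{all} $i\in F_1$ (in fact $\hat g=1$). Then, no matter what $m$ and thresholds $g_1<\dots<g_{m-1}$ are chosen, all of $F_1$ lands in a single class $A_t$, and since the cost functions only use the enclosing thresholds via $g_t\ge\hat g$ and $1/g_{t-1}\ge 1/\hat g$, the tightest bounds any layer can ever obtain are those in the limit $g_{t-1},g_t\to\hat g$. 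Hence the entire hierarchy collapses to one finite list of valid algorithms $\mathcal{ALG}_{\text{uniform}}$ (the $14$ chains of \Cref{tab:lower_bound}) together with $\mathcal{SR}$, and there is no within-atom variation, no re-padding of the $B_t/C_t$ sets, and no cross-class coupling left to rule out. Your multi-valued finite-support variant would actually have to prove the ``collapsing argument'' you sketch---in particular that adding thresholds cannot reorganize the $B_t/C_t$ structure or create new backup guarantees---which is genuine work that the uniform-$g$ choice eliminates entirely.

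The second gap is that the content of the theorem is the explicit witness and the constant $1.2943$, and your plan defers both to an unspecified perturbation of the Byrka et al.\ gap instance plus numerical confirmation. The paper's hard instance does not arise from that construction; it was found by heuristically optimizing the NLP with $g_1$ treated as a variable over $\mathcal{ALG}_{\text{uniform}}\cup\{\mathcal{SR}\}$, and it is stated explicitly ($\hat g=1$, $b=0.68$, $\gamma_{A_2}=0.7478$, $\gamma_{C_2}=0.3291$, plus six aggregate cost values $D_{Z,1}^{x,y},D_{Z,2}^{x,y}$), after which one simply evaluates the finitely many cost bounds and checks that the best is $1.2943$ times the bi-point cost. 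Nothing in your construction forces the value $1.2943$; perturbing the $1.207$ integrality-gap instance has no reason to land there, so ``confirm numerically that the optimum equals $1.2943$'' cannot be carried out as stated. As written, the proposal is a plausible program rather than a proof: the hierarchy-independence step is unresolved at the level of generality you chose, and the witness instance certifying the bound is absent.
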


Theorem~\ref{thm:JMS_2BGF}, Theorem~\ref{thm:main_result} and the pseudo-approximation reduction of Li and Svensson~\cite[Theorem~4]{LiSven} together imply our improved approximation factor.

\begin{corollary}
    There exists a randomized algorithm that, given a $k$-median instance and $\eps > 0$, runs in time $n^{\OO\left(\frac{1}{\eps^2}\log\frac1\eps\right)}$ and returns a solution of cost at most $(2.613+\eps)\cdot OPT$, where $OPT$ is the cost of the optimal solution of the given instance.
\end{corollary}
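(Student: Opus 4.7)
The plan is to chain together the three ingredients already assembled. First, I would invoke Theorem~\ref{thm:JMS_2BGF} to produce, in polynomial time, a bi-point solution $(F_1,F_2,a,b)$ of cost $aD_1+bD_2 \le 2\cdot OPT$. Next, I would feed this bi-point solution into the rounding algorithm of Theorem~\ref{thm:main_result}, run with a parameter $\eps' := \eps/4$ (or any sufficiently small constant multiple of $\eps$). This produces a (pseudo-)solution $S$ that opens at most $k + c$ facilities, where $c = \OO\bigl((1/\eps')\log(1/\eps')\bigr) = \OO\bigl((1/\eps)\log(1/\eps)\bigr)$, and whose expected cost is at most
\begin{equation*}
    (1.3064+\eps')\cdot(aD_1+bD_2) \;\le\; (1.3064+\eps')\cdot 2\cdot OPT \;\le\; (2.6128+2\eps')\cdot OPT.
\end{equation*}

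Because this $S$ can use $c > 0$ extra facilities, the second step is to convert it into a proper $k$-median solution via the Li--Svensson pseudo-approximation reduction, cited in the introduction as \cite[Theorem~4]{LiSven}: any $\alpha$-pseudo-approximation opening $k+c$ facilities can be transformed into an $(\alpha+\eps)$-approximation by running it on $n^{\OO(c/\eps)}$ derived instances. Plugging in our $c = \OO\bigl((1/\eps)\log(1/\eps)\bigr)$ and the target additive slack $\eps$, the resulting runtime is $n^{\OO(c/\eps)} = n^{\OO((1/\eps^2)\log(1/\eps))}$, exactly matching the stated bound. The reduction contributes an additional additive $\eps$ to the approximation factor, so the final expected cost is at most $(2.6128+2\eps'+\eps)\cdot OPT$; with $\eps' = \eps/4$ the slack between $2.6128$ and $2.613$ comfortably absorbs the extra additive terms, yielding the claimed $(2.613+\eps)$-approximation.

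There is essentially no mathematical obstacle in this corollary itself: all of the work is done in Theorem~\ref{thm:JMS_2BGF}, Theorem~\ref{thm:main_result}, and the Li--Svensson reduction. The only care required is bookkeeping, namely (i) choosing $\eps'$ small enough relative to $\eps$ so that the composed factor rounds to $2.613+\eps$, and (ii) noting that the cost bound from Theorem~\ref{thm:main_result} is in expectation, so one applies the algorithm $\OO(\log(1/\delta))$ times (or uses a standard Markov plus repetition argument) to output, with high probability, a solution meeting the claimed bound; this repetition only multiplies the runtime by a polynomial factor and is absorbed into the $n^{\OO((1/\eps^2)\log(1/\eps))}$ expression.
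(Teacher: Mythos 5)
Your proposal matches the paper's argument exactly: combine the factor-$2$ bi-point generation of Theorem~\ref{thm:JMS_2BGF} with the $1.3064$ bi-point rounding of Theorem~\ref{thm:main_result}, then absorb the extra $k+\OO\left(\frac1\eps\log\frac1\eps\right)$ facilities via the Li--Svensson pseudo-approximation reduction, giving the $n^{\OO\left(\frac{1}{\eps^2}\log\frac1\eps\right)}$ runtime and the $(2\cdot 1.3064 = 2.6128 < 2.613)$ factor with the additive $\eps$ slack. The bookkeeping you add (choice of $\eps'$, Markov plus repetition for the expectation) is fine and consistent with what the paper leaves implicit.
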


Lastly, we provide a bi-point solution with integrality gap $\sqrt\phi\approx1.272$ where $\phi$ is the golden ratio, even when allowing for $k+o(k)$ facilities to be opened. This improves on the previous gap of $1.207$, giving an improved lower bound for the bi-point approximation factor. Study of this instance inspired the algorithmic improvements in this paper; we hope it can shed further insight into the true approximability of bi-point solutions.

\begin{theorem}\label{thm:integrality_gap}
For every $\eps>0$ and $C(k) =o(k)$, there exists a family of bi-point solutions with integrality gap $\sqrt{\phi}-\eps$, even if we allow solutions that open $k+C(k)$ facilities.
\end{theorem}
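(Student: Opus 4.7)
The plan is to exhibit an explicit family of instances built from many identical ``gadgets,'' tune the gadget geometry so that its local bi-point solution has integrality gap exactly $\sqrt{\phi}$, and then amplify this to a global instance by taking $N\to\infty$ copies. The golden ratio will emerge as the positive root of the quadratic obtained by balancing the two natural ``integer branches'' a solution can take inside a gadget.

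First, I would design a single gadget containing a small number of $F_1$-facilities and $F_2$-facilities together with carefully placed clients. The distances inside the gadget are set so that: the local bi-point cost $aD_1^{\mathrm{loc}} + bD_2^{\mathrm{loc}}$ is low, while the two natural integer alternatives (roughly, ``commit to the $F_1$-facilities'' or ``commit to the $F_2$-facilities,'' possibly after paying to re-route some clients) have equal cost. Setting the two alternative costs equal and equal to $\lambda$ times the bi-point cost yields a quadratic in $\lambda$ whose relevant root is $\sqrt{\phi}$; the parameter $b$ is chosen accordingly. One must also verify that mixing the two alternatives within a gadget cannot beat either pure branch, which amounts to a convexity check on the local cost function.

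Second, I would assemble $N$ independent copies of the gadget so that the aggregate satisfies the bi-point definition $|F_1|\le k\le |F_2|$ and $a|F_1|+b|F_2|=k$, adding at most $o(N)$ dummy facilities/clients to enforce integrality of the sizes. The aggregate bi-point cost then scales linearly in $N$, and by construction the gadgets are placed far apart in the metric so that no cross-gadget reassignment of clients is ever cheaper than a within-gadget assignment.

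Third, I would accommodate the pseudo-approximation slack $C(k)=o(k)$. Since each gadget contributes $\Theta(1)$ to $k$, at most $o(N)$ gadgets can receive any extra open facilities beyond the ``standard'' allotment; the remaining $(1-o(1))$ fraction must use one of the two integer branches and hence pay at least $\sqrt{\phi}\,(aD_1^{\mathrm{loc}}+bD_2^{\mathrm{loc}})$. Averaging over gadgets gives the claimed $\sqrt{\phi}-\eps$ bound for large $N$.

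The main obstacle is designing the gadget so that (i) the balancing quadratic has $\sqrt{\phi}$ as its root and (ii) no unanticipated local strategy—such as opening an unusual subset of the gadget's facilities, or partially serving a client from outside—can break the $\sqrt{\phi}$ bound even when given a few extra facilities. I expect (ii) to be the hardest step: it requires a careful case analysis over all ways a gadget's clients can be served (including by facilities in other gadgets and by the $C(k)$ extra openings), combined with an appropriate ``separation'' distance between gadgets that makes cross-gadget service prohibitively expensive while not inflating the bi-point cost.
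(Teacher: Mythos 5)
There is a genuine gap, and it lies in the amplification step itself: taking $N$ far-apart copies of a constant-size gadget cannot produce a large integrality gap. In such an instance the integral solution is not obliged to spend the same facility budget in every gadget. If each gadget has $f_1$ facilities of $F_1$ and $f_2$ of $F_2$, the adversarial integral solution simply opens the full $F_1$-branch in an $a$ fraction of the gadgets and the full $F_2$-branch in the remaining $b$ fraction; this uses $N(af_1+bf_2)=k$ facilities (up to $O(1)$ rounding) and costs $N\bigl(aD_1^{\mathrm{loc}}+bD_2^{\mathrm{loc}}\bigr)$, i.e.\ exactly the bi-point cost, so the gap of the replicated instance tends to $1$, not $\sqrt{\phi}$. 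Your step three already contains the symptom of this: you claim every gadget restricted to an integer branch pays at least $\sqrt{\phi}\,(aD_1^{\mathrm{loc}}+bD_2^{\mathrm{loc}})$, but since $\min\{D_1^{\mathrm{loc}},D_2^{\mathrm{loc}}\}\le aD_1^{\mathrm{loc}}+bD_2^{\mathrm{loc}}$, at least one branch always pays no more than the local bi-point cost; the branch that is expensive in money is cheap in facilities and vice versa, and with independent gadgets the adversary trades these off \emph{across} gadgets. No convexity check inside a single gadget, and no accounting of the $o(k)$ extra facilities, can repair this, because the only thing tying gadgets together is the global budget, and that constraint is exactly what cross-gadget mixing exploits.

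The paper avoids this by building a single, globally coupled instance rather than disjoint copies: it takes $A$ with $|A|=r_Bk$ and $C$ with $|C|=r_Ck$, places one (weighted) client for \emph{every pair} in $A\times C$ at distances $2-\ell$ and $\ell$ from its two private facilities, and attaches to each $i_1\in A$ a co-located backup facility $\beta(i_1)$ at distance $2\ell$ with its own client. Because each client is tied to a specific pair, the cost of an integral solution opening fractions $x_A,x_B,x_C$ of the three facility groups is a \emph{non-separable} (bilinear) function, e.g.\ containing the term $2(1-x_C)(1-\ell x_A)$, which is then minimized over the knapsack polytope $x_Ar_B+x_Br_B+x_Cr_C=1$; the minimum over the extreme points equals $\sqrt{\phi}$ with the golden-ratio parameters, and opening $k+o(k)$ facilities only relaxes the knapsack constraint by $o(1)$, perturbing the minimum by $o(1)$. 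If you want to salvage your plan, the ``gadget'' must itself have size $\Theta(k)$ with this kind of complete-bipartite client structure — at which point it is the paper's construction rather than a replication argument.
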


The rest of the paper is organized as follows. In \Cref{sec:algorithms}, we present our bi-point rounding algorithm. \Cref{sec:star_round} discusses the aforementioned variant of Li and Svensson's star-rounding algorithm $\mathcal{SR}$, \Cref{sec:main} describes our main hierarchy of partitioning schemes for facilities and the corresponding set of rounding algorithms, and \Cref{sec:nlp} presents the NLP for obtaining the bi-point rounding factor and the results achieved for the second and third layers of our hierarchy. In \Cref{sec:lower_bounds}, we give a lower bound for our framework. In \Cref{sec:integrality_gap}, we exhibit the pseudo-approximation-resilient family of integrality-gap instances. We conclude with a discussion in \Cref{sec:discussion}.

\section{Bi-point Rounding Algorithm}\label{sec:algorithms}

In this section, we define a set of bi-point rounding algorithms, and bound the total cost and facilities opened for each algorithm. Our set of algorithms consists of a family of rounding algorithms obtained via a hierarchy of partition schemes for the facilities, along with the star-rounding algorithm $\mathcal{SR}$. Our top-level algorithm runs all of these algorithms and returns the best solution obtained.

\subsection{Preliminaries}

For any client $j \in \C$, we use $c(j)$ to denote the connection cost of $j$ (to its closest open facility in $\S$). For any set $J$ of clients, let $c(J) := \sum_{v \in J} c(v)$.  We let $i_1(j)$ and $i_2(j)$ denote the closest facility to $j$ in $F_1$ and $F_2$, respectively. When the context is clear, we shall drop the parameter $j$ in the notation.

By an abuse of notation, for any facility $i \in \F$, we let $i$ (and $\bar{i}$) denote the event that facility $i$ is opened (and closed) in our solution (respectively). For ease of notation, we also drop the $\wedge$ operator when joining these events. For example, $\Pr[i_1 \bar{i_2}]$ is the probability for the event that facility $i_1$ is opened and facility $i_2$ is closed. For any set $X \subseteq \F$, we let $\sigma_X(i)$ denote the closest facility to $i$ in $X$ (i.e., $\sigma_X(i) := \argmin_{i' \in X} d(i, i')$).

Lastly, we use $[m]$ to denote the set of natural numbers $\{1,2,\cdots, m\}$, and $[m_1,m_2]$ to denote the set $\{m_1,m_1+1,\cdots,m_2\}$. However, when we refer explicitly to the set $[0,1]$, it means the standard set of reals between $0$ and $1$ (inclusive).

\subsection{Star-Rounding Algorithm}\label{sec:star_round}
In this section, we define the star-rounding algorithm $\mathcal{SR}$ and prove Theorem~\ref{thm:sr}, which will provide the same cost guarantee as Li and Svensson's algorithm, but with no restrictions on bi-point parameter $b$ (see \Cref{def:bipoint}). The proof follows a similar structure to that of previous star-rounding algorithms\cite{LiSven,Byrka-et-al}. We will first need the following dependent-rounding procedure from \cite{SRDR2017} \footnote{This result appears exclusively in v1 of the referenced arXiv paper, but continues to be publicly available.}.
\begin{theorem}[{\cite[Theorem~2.1]{SRDR2017}}] \textbf{Symmetric Randomized Dependent Rounding.}  
\label{srdr}
Given vectors $x \in [0,1]^n$ and $a = (a_1, \ldots, a_n) \in \mathbb{R}^n$, and $t \in \mathbb{N}$, there exists a randomized algorithm (SRDR) which runs in expected $\OO(n^2)$ time and returns a vector $X \in [0,1]^n$ with at most $t$ fractional values. Both the weighted sum and all the marginal probabilities are preserved: $\sum_i a_i X_i = \sum_i a_i x_i$ with probability one, and $\E[X_i] = x_i$ for all $i \in [n]$. Let $S, T$ be disjoint subsets of $[n]$. Then we have the upper correlation bound:
\begin{equation}\label{eq:SRDR}
\E \left[\prod_{i\in S}X_i\prod_{j \in T}(1-X_j) \right] 
\le \left(\prod_{i\in S}x_i\prod_{j \in T}(1-x_j) \right)^{1 - 1/(t+1)}.
\end{equation}
\end{theorem}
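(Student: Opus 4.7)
My plan is to construct SRDR as an iterative pipage-style rounding: at each step a pair of fractional coordinates is perturbed along the direction that preserves the linear functional $\sum_i a_i X_i$, with the step size randomized so that both coordinates are individually martingales. The guarantees on running time, marginals, and weighted-sum preservation all flow from the per-step mechanics; the correlation bound~\eqref{eq:SRDR} is the substantive part and I would address it by tracking a concave potential across the rounds.

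\textbf{The iteration and the easy guarantees.} Starting from $X := x$, while more than $t$ coordinates of $X$ are fractional, pick any two fractional indices $i, j$ and consider the perturbation $(X_i,X_j) \mapsto (X_i + \alpha a_j,\; X_j - \alpha a_i)$, which preserves $a_i X_i + a_j X_j$ identically. Letting $\alpha_+,\alpha_- > 0$ be the largest positive and negative excursions keeping both updated coordinates in $[0,1]$, set $\alpha := +\alpha_+$ with probability $\alpha_-/(\alpha_+ + \alpha_-)$ and $\alpha := -\alpha_-$ otherwise, so that $\E[\alpha] = 0$ and each swap is a martingale step on both $X_i$ and $X_j$. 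Each swap drives at least one of $X_i, X_j$ to $\{0,1\}$, so termination occurs within $n - t$ rounds, giving $O(n^2)$ total running time. Marginal preservation follows by iterating the tower property across the martingale swaps, and $\sum_i a_i X_i = \sum_i a_i x_i$ holds pointwise by construction.

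\textbf{Correlation bound, the main obstacle.} Write $p(X) := \prod_{i\in S} X_i \prod_{j\in T}(1-X_j)$. A direct martingale argument fails because, when a swap pairs $i \in S$ with $j \in T$, the quantities $X_i$ and $1-X_j$ can move in the same direction, so $p(X)$ is in general a \emph{sub}martingale. My plan is to work instead with the potential $\Phi(X) := p(X)^{1-1/(t+1)}$: since the exponent $1-1/(t+1) < 1$, $\Phi$ is concave on $[0,1]$ in each coordinate, and I would try to show that the Jensen deficit from this concavity dominates the positive second-order cross-term contributed by each swap, so that $\Phi$ is a supermartingale. Telescoping $\E[\Phi(X^{\text{new}})] \leq \Phi(X^{\text{old}})$ over all swaps would give $\E[\Phi(X^{\text{final}})] \leq \Phi(x)$, which rearranges exactly to the stated bound. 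The specific shape $1-1/(t+1)$ is what I would expect from matching the Jensen loss to the affine dimension (at most $t$) of the residual polytope of fractional coordinates intersected with the hyperplane $\{\sum_i a_i X_i = \text{const}\}$; an AM--GM over the $t+1$ vertices of a generic such simplex produces precisely a $1/(t+1)$ factor. Carrying out this matching rigorously, so that the per-swap deficit and the residual geometry cancel cleanly, is the technically delicate step I anticipate being the crux of the proof.

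\textbf{Sanity checks.} Both boundary cases behave correctly. At $t = 0$ the exponent vanishes and the bound degenerates to $\E[p(X)] \leq 1$, trivially true and consistent with the fact that integer rounding can carry arbitrary positive correlation. As $t$ grows the exponent tends to $1$ and the bound approaches the independence value $p(x)$, matching the intuition that leaving more coordinates fractional keeps the joint distribution closer to the product marginal. These extremes give me confidence that a potential-based analysis using the specified concave power is the right vehicle, and they suggest the correct way to set up the supermartingale computation in the second paragraph.
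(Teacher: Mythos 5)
First, note that the paper does not prove this statement at all: Theorem~\ref{srdr} is imported verbatim from \cite{SRDR2017}, so there is no internal proof to compare against; your proposal must therefore stand on its own, and as written it has a genuine gap. The algorithm you specify --- pick \emph{any} two fractional indices and apply the standard mean-preserving step with maximal excursions $\alpha_+,\alpha_-$ --- is ordinary pairwise dependent rounding, and it does \emph{not} satisfy the upper correlation bound \eqref{eq:SRDR}. Concretely, take $n=4$, $a\equiv 1$, $x\equiv \tfrac12$, $t=2$, and suppose the first swap pairs indices $1$ and $2$: they are rounded fully against each other to $(1,0)$ or $(0,1)$ with probability $\tfrac12$ each, after which only two fractional coordinates remain and the process stops. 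With $S=\{1\}$, $T=\{2\}$ this gives $\E[X_1(1-X_2)]=\tfrac12$, whereas the claimed bound is $(\tfrac14)^{1-1/3}=4^{-2/3}\approx 0.397$. The word ``symmetric'' in SRDR is not decorative: the randomization over \emph{which} variables get coupled (and, as the ``expected $\OO(n^2)$ time'' in the statement hints, steps of the form $\pm\delta$ with probability $\tfrac12$ each rather than maximal excursions) is precisely what prevents an adversarially chosen pair from being rounded perfectly against each other, and it is the source of the $1-1/(t+1)$ exponent. Your proposal omits this ingredient entirely.

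The same example shows that your proposed vehicle for the correlation bound cannot work swap-by-swap. For the potential $\Phi(X)=p(X)^{c}$ with $c=1-1/(t+1)$, a single swap coupling $i\in S$ with $j\in T$ at $x_i=x_j=\tfrac12$ yields $\E[\Phi(X^{\mathrm{new}})]=\tfrac12$ versus $\Phi(X^{\mathrm{old}})=(\tfrac14)^{c}$, so the supermartingale inequality $\E[\Phi(X^{\mathrm{new}})]\le\Phi(X^{\mathrm{old}})$ forces $c\le\tfrac12$, i.e.\ $t\le 1$; for any $t\ge 2$ it fails on the very first step. (Your telescoping logic itself is fine, since $p\le p^{c}$ on $[0,1]$, but the per-step inequality it relies on is false.) So the exponent $1-1/(t+1)$ cannot be recovered from a per-swap Jensen deficit; it has to come from averaging over the random pairing schedule and from the fact that $t$ coordinates are left fractional, which is exactly the part of the argument you flag as ``the crux'' and leave unaddressed. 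In short: the easy claims (weighted-sum preservation, marginals, termination) are fine, but the algorithm as specified is the wrong one for the theorem, and the substantive inequality \eqref{eq:SRDR} is neither proved nor provable along the proposed route.
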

This can generally be converted to a standard multiplicative $(1+\eps)$ error bound by setting the parameter $t$ to be $\OO\left(\frac1\eps \log \frac{1}{\prod_{i\in S}x_i\prod_{j \in T}(1-x_j)}\right)$, but in the special case below, we can actually set $t$ independently of the $x_i$. This will allow us to avoid restricting the domain of the algorithm.
\begin{corollary}\label{corr:SRDRx} \textbf{Pairwise positive correlation under uniform marginals.}
Suppose SRDR is run with $t\ge\frac{\log(1+1/\eps)}{\log(1+\eps)}$. For any distinct $i,j$ such that $x_i=x_j=b$ for any value of $b$, we have
\begin{equation*}\label{eq:thm-upper}
\E [X_i(1-X_j) ] \le (1+\eps)b(1-b).
\end{equation*}
\end{corollary}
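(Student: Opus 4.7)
The plan is to combine the SRDR correlation bound from Theorem~\ref{srdr} with the two trivial marginal bounds $\E[X_i(1-X_j)] \le \E[X_i] = b$ and $\E[X_i(1-X_j)] \le \E[1-X_j] = 1-b$, applied according to the value of $b$. Concretely, I would partition $[0,1]$ into three intervals by the thresholds $\eps/(1+\eps)$ and $1/(1+\eps)$ and treat each case separately.

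In the two extreme regimes, one of the trivial marginal bounds already suffices. For $b \le \eps/(1+\eps)$, we have $(1+\eps)(1-b) \ge 1$, so $(1+\eps) b(1-b) \ge b \ge \E[X_i(1-X_j)]$. The symmetric case $b \ge 1/(1+\eps)$ is handled by the companion bound $\E[X_i(1-X_j)] \le 1-b$, noting that $(1+\eps) b \ge 1$ in this range.

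In the middle regime $b \in (\eps/(1+\eps), 1/(1+\eps))$, I would invoke \eqref{eq:SRDR} with $S = \{i\}$ and $T = \{j\}$, giving $\E[X_i(1-X_j)] \le (b(1-b))^{1-1/(t+1)}$. Writing $p = b(1-b)$, the target inequality reduces to $p \ge (1+\eps)^{-(t+1)}$. On this interval $p$ is minimized at the two endpoints, where $p = \eps/(1+\eps)^2$, so it suffices to verify $(1+\eps)^{t+1} \ge (1+\eps)^2/\eps$. Taking logarithms yields $t+1 \ge \log[(1+\eps)^2/\eps]/\log(1+\eps) = 1 + \log(1+1/\eps)/\log(1+\eps)$, which is exactly the stated hypothesis on $t$.

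There is little obstacle beyond bookkeeping; the conceptual point is that the raw SRDR bound is too weak near $b = 0$ and $b = 1$ (since $(b(1-b))^{1-1/(t+1)}/[b(1-b)]$ blows up as $b(1-b) \to 0$), while the marginal bounds fill precisely that gap, with the crossover thresholds $\eps/(1+\eps)$ and $1/(1+\eps)$ calibrated so the middle case closes against the stated choice of $t$.
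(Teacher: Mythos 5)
Your proposal is correct and follows essentially the same route as the paper's proof: apply the SRDR bound with $S=\{i\}$, $T=\{j\}$ when $\eps/(1+\eps)\le b\le 1/(1+\eps)$ (where $b(1-b)\ge\eps/(1+\eps)^2$ and the stated choice of $t$ closes the exponent calculation), and fall back on the trivial marginal bounds $b$ and $1-b$ in the two extreme regimes. The only difference is cosmetic: the paper merges your two extreme cases into one via $\min\{b,1-b\}=\frac{b(1-b)}{1-\min\{b,1-b\}}$, whereas you handle them symmetrically with $(1+\eps)(1-b)\ge1$ and $(1+\eps)b\ge1$.
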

\begin{proof}
Let $S=\{i\}$, $T=\{j\}$. Then (\ref{eq:SRDR}) simplifies to:
\begin{equation}\label{eq:corrcase1} \E [X_i(1-X_j) ] 
\le (x_i(1-x_j))^{1-1/(t+1)} 
=\left(\frac1{b(1-b)}\right)^{1/(t+1)} b(1-b).
\end{equation}
If $\min\{b,1-b\}\ge\frac\eps{1+\eps}$, then $\frac{1}{b(1-b)}\le \frac{(1+\eps)^2}{\eps}$  and $\frac1{t+1}\le \frac{\log(1+\eps)}{\log((1+\eps)^2/\eps)}$, so (\ref{eq:corrcase1}) is at most $(1+\eps)b(1-b)$. Otherwise $\min\{b,1-b\}\le\frac\eps{1+\eps}$, in which case we may directly bound by the marginals:
\begin{align}
    \E [X_i(1-X_j) ]
    &\le \min\{\E[X_i],\E[1-X_j]\}\label{eq:relax-marginals}\\
    &= \min\{b,1-b\}
    = \frac{b(1-b)}{1-\min\{b,1-b\}}
    \le (1+\eps)b(1-b).\nonumber
\end{align}
\end{proof}
We now define the star-rounding algorithm $\mathcal{SR}$. Form graph $G$ by drawing an edge from each facility in $F_2$ to its closest facility in $F_1$, resulting in a forest of $F_1$-centric stars. For each $i\in F_1$, let $L_i\subseteq F_2$ be the leaves of the star rooted at $i$.  Define vectors
$a,x$ with $a_i:=|L_i|-1$ and $x_i:=b$, and set $t:=\big\lceil\frac{\log(1+1/\eps)}{\log(1+\eps)}\big\rceil$ for some $\eps>0$. Now run $\text{SRDR}(a,x,t)$ to obtain output vector $X$. Finally, for each $i\in F_1$, open $\lceil X_i |L_i|\rceil$ facilities uniformly at random from $L_i$ and open $i$ itself with probability $\lceil 1-X_i\rceil$.

\begin{lemma}\label{lem:star-round_fac}
$\mathcal{SR}$ opens at most $k+\OO\left(\frac1\eps \log\frac1\eps\right)$ facilities, with probability one. 
\end{lemma}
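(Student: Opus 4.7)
The plan is to track the facilities opened star by star, and to isolate the ``deterministic'' part of the count (which matches $k$ via the bi-point identity $a|F_1|+b|F_2|=k$) from the ``rounding error'' (which will be governed by the number $t$ of fractional values SRDR leaves behind). Write $N$ for the total number of facilities opened. Since each center $i\in F_1$ contributes $\lceil X_i|L_i|\rceil$ leaves and the center itself iff $X_i<1$, we have exactly
\begin{equation*}
N \;=\; \sum_{i\in F_1} \lceil X_i|L_i|\rceil \;+\; \#\{i\in F_1 : X_i<1\}.
\end{equation*}
Partition $F_1$ into the sets of indices where $X_i=0$, where $X_i\in(0,1)$, and where $X_i=1$, of sizes $A,B,C$ respectively. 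By Theorem~\ref{srdr}, $B\le t$ with probability one, so the centers contribute $A+B\le A+t$ to $N$.

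For the leaves I would observe that the ceiling wastes at most $1$ per fractional coordinate and is exact when $X_i\in\{0,1\}$, giving
\begin{equation*}
\sum_{i\in F_1}\lceil X_i|L_i|\rceil \;\le\; \sum_{i\in F_1} X_i|L_i| + B.
\end{equation*}
The sum-preservation guarantee of SRDR, applied to $a_i=|L_i|-1$, holds with probability one and gives $\sum_i(|L_i|-1)X_i = b\sum_i(|L_i|-1) = b(|F_2|-|F_1|)$, using that the stars partition $F_2$ (so $\sum_i|L_i|=|F_2|$). Hence $\sum_i X_i|L_i| = \sum_i X_i + b(|F_2|-|F_1|) \le (B+C) + b(|F_2|-|F_1|)$, since $X_i=0$ on the $A$-set.

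Combining the two estimates and using $A+B+C=|F_1|$,
\begin{equation*}
N \;\le\; (B+C)+b(|F_2|-|F_1|)+B \;+\; A+B \;=\; |F_1|+b(|F_2|-|F_1|)+2B \;=\; k+2B,
\end{equation*}
where the last equality is the bi-point identity $(1-b)|F_1|+b|F_2|=k$. Since $B\le t$ and $t=\bigl\lceil\log(1+1/\eps)/\log(1+\eps)\bigr\rceil = \OO(\tfrac{1}{\eps}\log\tfrac{1}{\eps})$ (using $\log(1+\eps)=\Theta(\eps)$ and $\log(1+1/\eps)=\Theta(\log(1/\eps))$ for small $\eps$), the claim $N\le k+\OO(\tfrac{1}{\eps}\log\tfrac{1}{\eps})$ holds with probability one. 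The only mildly delicate step is the sum-preservation bookkeeping — in particular recognizing that coupling the weights $a_i=|L_i|-1$ to SRDR is exactly what converts the $X_i$'s into a count that matches $k$ up to the fractional slack — everything else is just interval arithmetic on ceilings.
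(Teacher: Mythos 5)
Your proof is correct and follows essentially the same route as the paper's: bound the ceiling losses by the (at most $t$) fractional coordinates guaranteed by SRDR, use the sum-preservation with weights $a_i=|L_i|-1$ together with the fact that the stars partition $F_2$, and collapse the deterministic part to $k$ via $(1-b)|F_1|+b|F_2|=k$, yielding $k+2t=k+\OO(\tfrac1\eps\log\tfrac1\eps)$. The bookkeeping (splitting $F_1$ by whether $X_i$ is $0$, fractional, or $1$) differs only cosmetically from the paper's direct estimate $\lceil 1-X_i\rceil+\lceil X_i|L_i|\rceil\le 1-X_i+X_i|L_i|+2$ on fractional coordinates.
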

\begin{proof}
By Theorem~\ref{srdr}, there are at most $t$ fractionally-valued $X_i$ (the rest being $0$ or $1$). Also, $\sum_i (|L_i|-1) X_i=\sum_i a_i X_i = \sum_i a_i x_i = \sum_i (|L_i|-1)b$ and $\{L_i\}_{i\in F_1}$ partitions $F_2$.
Thus, the count of facilities opened is 
\begin{align*}
\sum_{i\in F_1}\big(\lceil 1-X_i\rceil+\lceil X_i |L_i|\rceil\big)
&\le\sum_{i\in F_1}( 1-X_i + X_i |L_i|) + 2t ~~\mbox{(with probability one)}
\\&=\sum_{i\in F_1}( 1-b +  b |L_i|)) + 2t ~~\mbox{(with probability one)}
\\&=(1-b) |F_1| + b |F_2| + 2t
\\&=k + 2t = k+\OO\left(\frac1\eps \log\frac1\eps\right).
\end{align*}
\end{proof}

\begin{lemma}
For any two facilities $i_1\in F_1$ and $i_2\in F_2$, we have
\begin{align}
\Pr[\bar i_1]\le& b, \\
\Pr[\bar i_2]\le& 1-b, \\
\Pr[\bar i_1\bar i_2]\le& (1+\eps)b(1-b).
\end{align}
\end{lemma}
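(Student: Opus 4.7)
The plan is to establish each inequality using the marginal and pairwise correlation guarantees of SRDR combined with a careful case analysis. Let $i_1 \in F_1$ and $i_2 \in F_2$ be given, and let $i_1' := \sigma_{F_1}(i_2) \in F_1$ denote the root of the star containing $i_2$.

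First, I would handle the marginals. By the construction, $i_1$ is opened with conditional probability $\lceil 1 - X_{i_1}\rceil$, so $i_1$ is closed if and only if $X_{i_1} = 1$; thus $\Pr[\bar i_1] = \Pr[X_{i_1} = 1] \le \E[X_{i_1}] = b$ by marginal preservation. For $i_2$, conditioned on $X_{i_1'}$ the algorithm opens $\lceil X_{i_1'}|L_{i_1'}|\rceil$ facilities uniformly from $L_{i_1'}$, so $\Pr[\bar i_2 \mid X_{i_1'}] = 1 - \lceil X_{i_1'}|L_{i_1'}|\rceil / |L_{i_1'}| \le 1 - X_{i_1'}$, and taking expectation gives $\Pr[\bar i_2] \le 1 - b$.

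For the joint bound, I would split on whether $i_2$ lies in the same star as $i_1$. If $i_1' = i_1$ (i.e., $i_2 \in L_{i_1}$), then $\bar i_1$ forces $X_{i_1} = 1$, which in turn forces $\lceil X_{i_1}|L_{i_1}|\rceil = |L_{i_1}|$, so every leaf of the star, including $i_2$, is opened. Hence $\Pr[\bar i_1 \bar i_2] = 0$ in this case. If $i_1' \ne i_1$, the events $\bar i_1$ and $\bar i_2$ depend only on $X_{i_1}$ and on $X_{i_1'}$ together with the independent uniform draw inside $L_{i_1'}$, so conditioning on the SRDR output $X$ I obtain
\[
\Pr[\bar i_1 \bar i_2 \mid X] = \mathbb{1}[X_{i_1} = 1] \cdot \left(1 - \frac{\lceil X_{i_1'}|L_{i_1'}|\rceil}{|L_{i_1'}|}\right) \le X_{i_1}(1 - X_{i_1'}).
\]
Taking expectations and invoking Corollary~\ref{corr:SRDRx} (applicable since $x_{i_1} = x_{i_1'} = b$ by construction) yields $\Pr[\bar i_1 \bar i_2] \le \E[X_{i_1}(1 - X_{i_1'})] \le (1+\eps) b(1-b)$.

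There is no single hard step; the main subtlety is just bookkeeping around the ceilings, specifically verifying the pointwise inequalities $\mathbb{1}[X_{i_1} = 1] \le X_{i_1}$ and $\lceil X_{i_1'}|L_{i_1'}|\rceil / |L_{i_1'}| \ge X_{i_1'}$ so that the conditional joint probability can be bounded by the product $X_{i_1}(1 - X_{i_1'})$ that Corollary~\ref{corr:SRDRx} controls. Once that reduction is in place, the three-part conclusion follows immediately.
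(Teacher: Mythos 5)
Your proof is correct and follows essentially the same route as the paper: bound the marginals via the ceiling observations and $\E[X_i]=b$, handle the same-star case by noting $\bar i_1$ forces $X_{i_1}=1$ and hence all leaves open, and in the different-star case bound the conditional joint probability by $X_{i_1}(1-X_{i_1'})$ and invoke Corollary~\ref{corr:SRDRx}. The only difference is notational (your $i_1'$ is the paper's $i_3$).
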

\begin{proof}
We shall use this simple fact: for any event $\mathcal E$ and random variable $Y$, $\Pr[\mathcal E|Y=y]\le f(y)$ implies that $\Pr[\mathcal E]\le\E[f(Y)]$. Let $i_3$ be the root of the star containing $i_2$. Then we have that 
\begin{itemize}
\item $\Pr[\bar i_1|X_{i_1}=y]=1-\lceil 1-y\rceil \le y\implies\Pr[\bar i_1]\le\E[X_{i_1}]=b;$

\item $\Pr[\bar i_2|X_{i_3}=z]=1-\frac{\lceil z |L_i|\rceil}{|L_i|}\le 1-z \implies \Pr[\bar i_2]\le\E[1-X_{i_3}]=1-b;$

\item If $i_1$ and $i_2$ are in the same star, then  ``$i_1$ is closed '' $\implies X_{i_1}=1 \implies$ ``$i_2$ is opened'', so $\Pr[\bar i_1\bar i_2]=0$. Else the two facilities are chosen by independent processes (for fixed $X$), and we can apply \Cref{corr:SRDRx} to show
\[ \Pr[\bar i_1\bar i_2|X_{i_1}=y\land X_{i_3}=z]
=\Pr[\bar i_1|X_{i_1}=y] \cdot \Pr[\bar i_2| X_{i_3}=z]
\le y(1-z),
\] 
implying that
\begin{align*}
\Pr[\bar i_1\bar i_2] \le \E[X_{i_1}(1-X_{i_3})]\le (1+\eps)b(1-b).
\end{align*}
\end{itemize}
\end{proof}

\begin{lemma}\label{lem:star_round_cost}
For any client $j$, let $i_1$ and $i_2$ be the closest facilities in $F_1$ and $F_2$, at distances $d_1$ and $d_2$ from $j$, respectively. 
$\mathcal{SR}$ gives $$\E[c(j)]\le (1+\eps) ((1-b)d_1+b(3-2b)d_2).$$
\end{lemma}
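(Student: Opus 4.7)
The plan is to first establish a structural backup guarantee and then bound $\E[c(j)]$ by case analysis on whether $d_1\ge d_2$ or $d_1\le d_2$. Let $i_3:=\sigma_{F_1}(i_2)$ be the root of the star containing $i_2$. My first step is the key structural observation that whenever $i_2$ is closed, $i_3$ must be open: otherwise $X_{i_3}=1$, which by the algorithm's rule forces every leaf of $L_{i_3}$ (including $i_2$) to be opened, a contradiction; hence $X_{i_3}<1$ and the $\lceil 1-X_{i_3}\rceil$ rule opens $i_3$ with probability one. Combined with $d(i_2,i_3)\le d(i_2,i_1)\le d_1+d_2$ (by the choice of $i_3$ and the triangle inequality), this yields the guaranteed backup cost $d(j,i_3)\le d_1+2d_2$.

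For the case $d_1\ge d_2$, I would use the ``$i_2$-first'' connection strategy: route $j$ to $i_2$ if open, else to $i_1$ if open, else to $i_3$, giving $c(j)\le d_2\mathbf{1}[i_2]+d_1\mathbf{1}[\bar i_2 i_1]+(d_1+2d_2)\mathbf{1}[\bar i_1\bar i_2]$. Substituting $\Pr[i_2]=1-\Pr[\bar i_2]$ and $\Pr[\bar i_2 i_1]=\Pr[\bar i_2]-\Pr[\bar i_1\bar i_2]$ collapses the expectation to $d_2+(d_1-d_2)\Pr[\bar i_2]+2d_2\Pr[\bar i_1\bar i_2]$. Since $d_1-d_2\ge 0$, I can then apply $\Pr[\bar i_2]\le 1-b$ and $\Pr[\bar i_1\bar i_2]\le(1+\eps)b(1-b)$ from the previous lemma; the resulting expression $(1-b)d_1+bd_2+2(1+\eps)b(1-b)d_2$ is less than the target by exactly $\eps[(1-b)d_1+bd_2]\ge 0$.

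For the case $d_1\le d_2$, I would instead use the mirror ``$i_1$-first'' strategy, which after analogous manipulation gives $\E[c(j)]\le d_1+(d_2-d_1)\Pr[\bar i_1]+(d_1+d_2)\Pr[\bar i_1\bar i_2]$; applying $\Pr[\bar i_1]\le b$ and the pairwise bound then yields $(1-b)d_1+bd_2+(1+\eps)b(1-b)(d_1+d_2)$. The main obstacle will be the final algebraic verification that this is at most the target: direct expansion shows the gap equals $(1-b)[\eps-(1+\eps)b]d_1+b[(1-b)+\eps(2-b)]d_2$, where the $d_2$-coefficient is manifestly nonnegative but the $d_1$-coefficient can be negative; the remedy is to use $d_1\le d_2$ to absorb any negativity into the $d_2$ slack, leaving at worst residue $\eps d_2\ge 0$ at the boundary $d_1=d_2$. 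This case split is forced by the asymmetry of the marginal bounds: unlike Li--Svensson's original analysis, where $\Pr[i_2]=b$ exactly lets a single strategy handle everything, $\mathcal{SR}$ only guarantees $\Pr[\bar i_2]\le 1-b$ without the matching $\Pr[i_2]\le b$.
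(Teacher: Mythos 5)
Your proposal is correct and follows essentially the same route as the paper: the same backup fact (one of $i_2$ or its root $i_3$ is always open, with $d(j,i_3)\le d_1+2d_2$), the same two connection orders $(i_2,i_1,i_3)$ and $(i_1,i_2,i_3)$ split on $d_1\gtrless d_2$, and the same marginal and pairwise-correlation bounds. The only cosmetic difference is that in the $d_1\le d_2$ case the paper relaxes $d_1+d_2\le 2d_2$ immediately, whereas you keep $(d_1+d_2)$ and verify the inequality by a final coefficient computation, which amounts to the same estimate.
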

\begin{proof}
Let $i_3\in F_1$ be the root of $i_2$ in $G$. At least one of $i_2$ or $i_3$ will be opened.
By the triangle inequality, $d(j,i_3)\le d(j,i_2)+d(i_2,i_3)\le d(j,i_2)+d(i_2,i_1)\le 2d_2+d_1$. If $d_2<d_1$, we bound the cost by connecting to the first open facility in order of precedence
 $(i_2,i_1,i_3)$:
\begin{align*}
\E[c(j)]&\le \Pr[i_2]d_2+\Pr[\bar i_2i_1]d_1+\Pr[\bar i_2\bar i_1](2d_2+d_1)\\
&= d_2+\Pr[\bar i_2](d_1-d_2)+2\Pr[\bar i_2\bar i_1]d_2\\
&\le d_2+(1-b)(d_1-d_2)+2(1+\eps)b(1-b)d_2\\
&= (1-b)d_1+b d_2+(1+\eps)b(2-2b)d_2.
\end{align*}
Similarly, if $d_1<d_2$, we connect in order $(i_1,i_2,i_3)$:
\begin{align*}
\E[c(j)]&\le \Pr[i_1]d_1+\Pr[\bar i_1 i_2]d_2+\Pr[\bar i_1\bar i_2](2d_2+d_1)\\
&= d_1+\Pr[\bar i_1](d_2-d_1)+\Pr[\bar i_2\bar i_1](d_1+d_2)\\
&\le d_1+b(d_2-d_1)+(1+\eps)b(1-b)(2d_2)\\
&= (1-b)d_1+b d_2+(1+\eps)b(2-2b)d_2.
\end{align*}
\end{proof}
Summing the expected cost of all clients yields Theorem~\ref{thm:sr}. Finally, we remark that $\mathcal{SR}$ generalizes Li and Svensson's knapsack algorithm in the following senses: Firstly, if we run $\mathcal{SR}$ with $t=1$ we essentially recover a randomized version of the knapsack algorithm. Secondly, the knapsack cost analysis works by giving up on ever connecting to $i_1$. In other words, we relax $\Pr[\bar i_2\bar i_1]\le\Pr[\bar i_2]$ (doing so in the above equations indeed recovers the knapsack cost bound), which is equivalent to the relaxation done in \Cref{eq:relax-marginals}.

\subsection{Main Family of Algorithms}\label{sec:main}
In this section, we describe our main hierarchy of increasingly complex partitioning schemes for the facilities. Given a bi-point solution $aF_1 + bF_2$, let us rename the set $F_1$ to $A$ for convenience. Now, we associate each facility $i \in A$ to its nearest facility $\sigma_B(i) \in F_2$ (breaking ties arbitrarily). This gives us a set of ``primary stars’’, where the centers are facilities in $F_2$ and the leaves are the facilities in $A$. Let $B$ denote the set of centers of primary stars with at least one leaf, i.e., $B$ is the set of facilities in $F_2$ that have at least one facility from $A$ associated with it. We pad the set $B$, arbitrarily, with the remaining facilities from $F_2$ until $|B|=|A|$.

Now, let $C$ denote the set $F_2 \setminus B$. We also associate each facility $i \in A$ to its nearest facility $\sigma_C(i) \in C$ (breaking ties arbitrarily). This gives us a set of ``secondary stars'' with centers in $C$ and leaves in $A$. (See \Cref{fig:star_construction}). Also, for a set $S$, let $\sigma_B(S) := \{\sigma_B(x)| x\in S \}$. $\sigma_C(S)$ is defined similarly.

\begin{figure}[htpb]
\centering 
\includegraphics[width=0.8\textwidth]{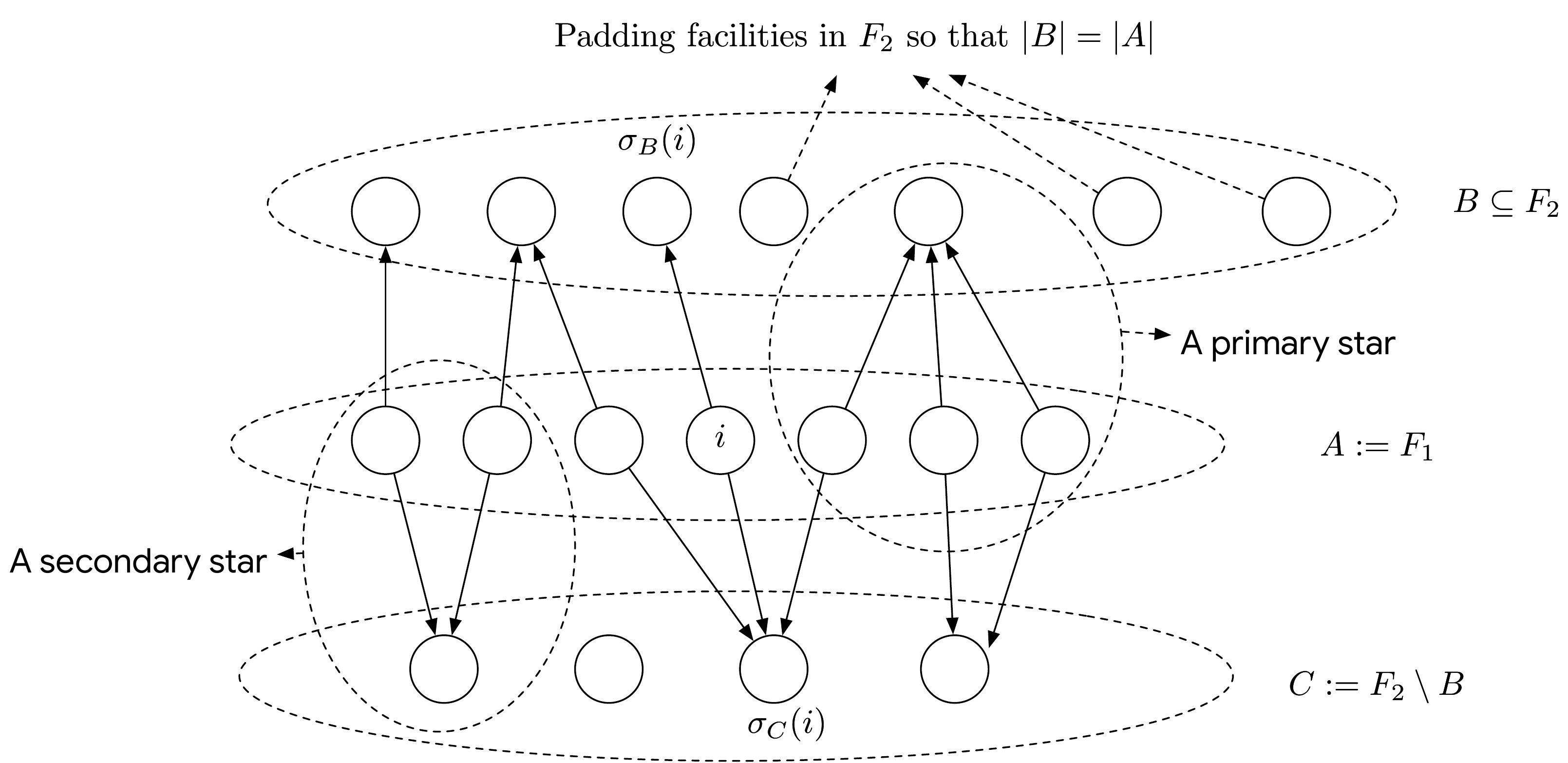}
\caption{Star construction.}\label{fig:star_construction}
\end{figure}

Note that the primary and secondary stars constructed are $F_2$-centric stars, in contrast to the $F_1$-centric stars of Li and Svensson~\cite{LiSven} and Byrka et al.~\cite{Byrka-et-al}. Jain and Vazirani in their work~\cite{jain2001approximation} essentially formulated $F_2$-centric primary stars, like in our construction, and ensured that for every client $j$ at least one of $i_1$ or $\sigma_B(i_1)$ is opened, yielding a $2$-approximation. In our construction, with help of both primary and secondary stars, we ensure that at least one of $i_1$ or $\sigma_B(i_1)$ or $\sigma_C(i_1)$ is opened. Thus, with two backups, we have more freedom when designing our rounding algorithm. Furthermore, the two backups (i.e., $\sigma_B(i_1)$ and $\sigma_C(i_1)$) are relatively closer to the client $j$ and require fewer triangle inequality ``hops'' compared to the backups used by Byrka et al.~\cite{Byrka-et-al}. 

However, $\sigma_C(i_1)$ may not always be a useful backup. We quantify this by introducing the following parameter. For every facility $i \in A$, let
\begin{align*}
    g(i) := \frac{d(i,\sigma_B(i))}{d(i,\sigma_C(i))}.
\end{align*}
Observe that $0 \le g(i) \le 1, \forall i \in A$. In general, if client $j$ is connected to $\sigma_B(i_1)$, by triangle inequality, the connection cost for $j$ would be,
\begin{align*}
    c(j) = d(j,\sigma_B(i_1)) \le d(j,i_1)+d(i_1,\sigma_B(i_1)) \le d(j,i_1)+d(i_1,i_2) \le d_1+(d_1+d_2).
\end{align*}
However, if $i_2 \in C$, then $d(i_1,\sigma_C(i_1)) \le d(i_1,i_2)$. Therefore, 
\begin{align*}
    c(j) &= d(j,\sigma_B(i_1)) \le d(j,i_1)+d(i_1,\sigma_B(i_1)) = d(j,i_1)+g(i_1)d(i_1,\sigma_C(i_1))\\ 
    &\le d_1+g(i_1)(d_1+d_2).    
\end{align*}
Thus, if $g(i)$ is small, we can utilize this better bound on the cost. However, if $i_2 \in B$ and $j$ is connected to $\sigma_C(i_1)$, then the connection cost for $j$ would be, 
\begin{align*}
    c(j) &= d(j,\sigma_C(i_1)) \le d(j,i_1)+d(i_1,\sigma_C(i_1)) = d(j,i_1)+\frac{1}{g(i_1)}d(i_1,\sigma_B(i_1))\\
    &\le d_1+\frac{1}{g(i_1)}(d_1+d_2).
\end{align*}
Hence, it is favorable to consider this backup only when $\frac{1}{g(i)}$ is small.

We now partition the set $A$ as follows: choose $m-1$ distinct values $g_1,g_2,\cdots,g_{m-1}$ such that $g_0:=0 < g_1 < g_2 < \cdots < g_{m-1} <g_m:=1$. Partition the set $A$ as $A = A_1\uplus A_2 \uplus \cdots \uplus A_{m}$, such that $\forall x \in A_t$, $g_{t-1}\le g(x) \le g_t$, $\forall t\in[m]$ (ties broken arbitrarily). We create a corresponding partition of the sets $B$ and $C$ as follows: partition the set $B$ as $B = B_1\uplus B_2 \uplus \cdots \uplus B_{m}$, where $B_1 = \sigma_B(A_1)$ and $B_t = \sigma_B(A_t)\setminus \cup_{s=1}^{t-1}B_s$, for $t \in [2,m]$. The sets $B_1,\cdots,B_{m}$ are padded with the remaining facilities in $B$ such that $|A_t|=|B_t|, \forall t\in[m]$ (this can be done since $|A|=|B|$). Also, partition the set $C$ as $C = C_1\uplus C_2 \uplus \cdots \uplus C_{m}$, where $C_{m} = \sigma_C(A_{m})$, $C_t = \sigma_C(A_t)\setminus \cup_{s=t+1}^{m}C_s$ for $t \in [2,m-1]$, and $C_1 = C\setminus \cup_{s=2}^{m}C_s$. Note that the set $C_{m}$ is defined first and we pad the set $C_t$ until $|A_t|=|C_t|$ before defining the set $C_{t-1}$, for $t \in [2,m]$. Thus, it is possible for some of the $C_t$'s to be empty, and there might exist at most one non-empty set $C_t$ such that $|A_t|\ne |C_t|$ (see \Cref{fig:star_partition}). 

We also define $\gamma_{A_t} = \frac{|A_t|}{|C|}$ and $\gamma_{C_t} = \frac{|C_t|}{|C|}, \forall t\in [m]$. Based on the above construction, observe that $\gamma_{C_{m}} = \min\{1,\gamma_{A_{m}}\}$, $\gamma_{C_t} = \min\{\gamma_{A_t},1-\sum_{s=t+1}^{m}\gamma_{C_s}\}$ for $t \in [2,m-1]$, and $\gamma_{C_1} = 1-\sum_{s=2}^{m}\gamma_{C_s}$.

\begin{figure}[htpb]
\centering 
\includegraphics[width=0.8\textwidth]{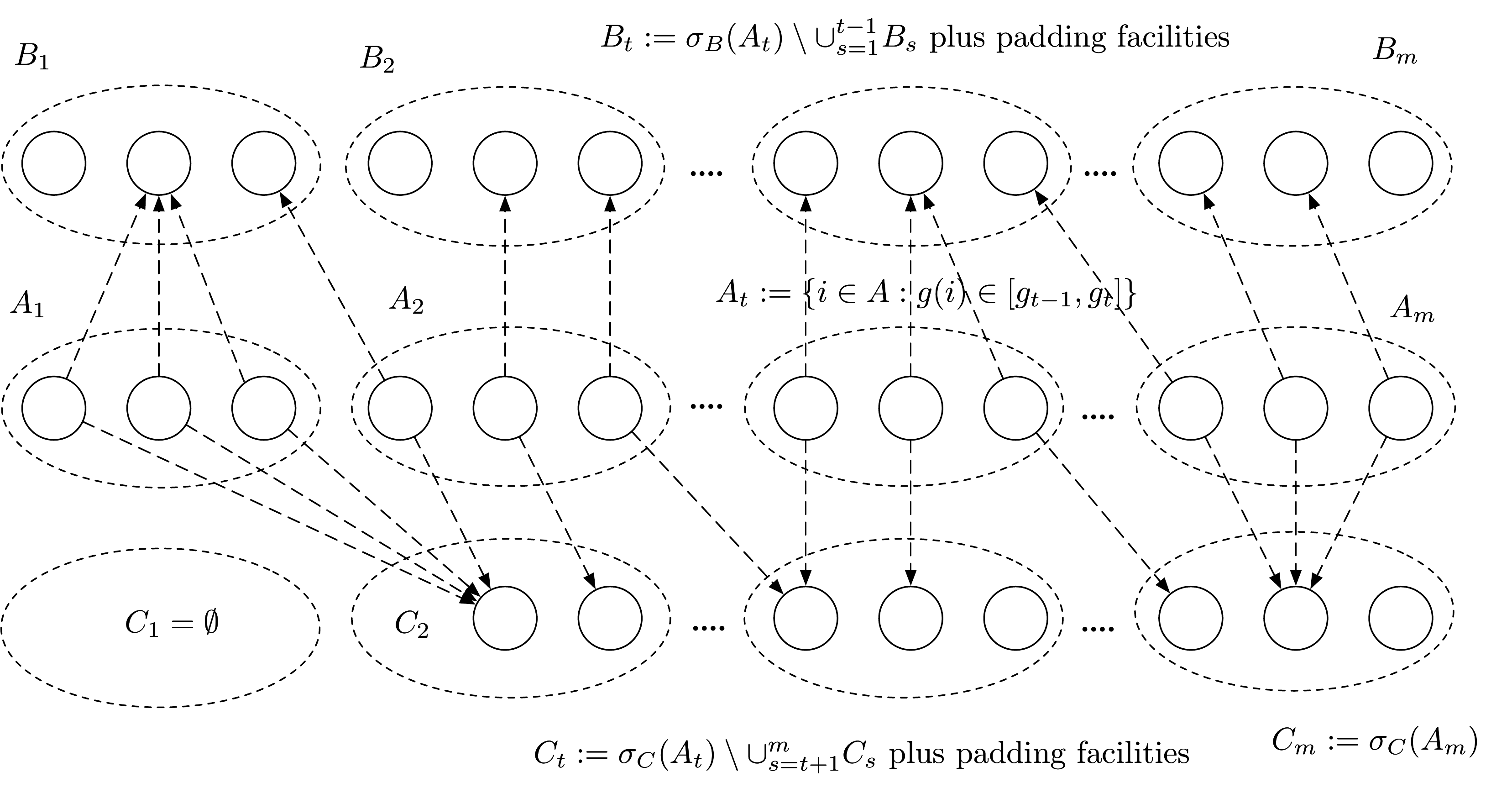}
\caption{Star partition.}\label{fig:star_partition}
\end{figure}

Depending on the value of $m$, we get the different levels of our hierarchy. As discussed above, ideally our cost function would utilize the $g(i)$ and $\frac{1}{g(i)}$ bounds for a facility $i \in A$. Based on the above construction, let $g_{t-1} \le g(i) \le g_t$. Hence, we will instead have to use $g_{t}$ and $\frac{1}{g_{t-1}}$ in place of $g(i)$ when bounding the cost. Thus, considering higher levels of the hierarchy gives us a tighter bound for the associated cost function.

\subsubsection{Algorithm Definition}\label{sec:algo_def}
Given the parameters $\g:=\{g_0,g_1,\cdots,g_{m-1},g_{m}\}$, let $P := \{A_1$, $\cdots$, $A_{m}$, $B_1$, $\cdots$, $B_{m}$, $C_1$, $\cdots$, $C_{m}\}$ denote the partition obtained from the above construction. We now describe the family of rounding algorithms corresponding to this partition. Let $\mathcal{A}(p_{A_1}$, $\cdots$, $p_{A_{m}}$, $p_{B_1}$, $\cdots$, $p_{B_{m}}$, $p_{C_1}$, $\cdots$, $p_{C_{m}})$ be an algorithm that, given the input parameters, uniformly at random samples $\lceil p_W|W| \rceil$ facilities from each set $W \in P$ and returns as output the union of these samples. All algorithms in the family of rounding algorithms that we propose will be of the form of algorithm $\mathcal{A}$ described above, i.e., we do not need any \emph{star-rounding} sub-routine as in \cite{LiSven,Byrka-et-al}. We now define the following notion of ``valid'' algorithms.

\begin{definition}[Valid Algorithm]\label{def:valid_algos}
The algorithm $\mathcal{A}(p_{A_1}$, $\cdots$, $p_{A_{m}}$, $p_{B_1}$, $\cdots$, $p_{B_{m}}$, $p_{C_1}$, $\cdots$, $p_{C_{m}})$ is valid if the following conditions are satisfied:
\begin{enumerate}
    \item $0 \le p_W \le 1$ for each set $W$ in the partition $P$.
    \item Total mass is preserved, i.e., 
    \begin{align}\label{eq:mass_eqn1}
        \sum_{t=1}^{m} p_{A_t}|A_t| + p_{B_t}|B_t| +  p_{C_t}|C_t| = a|F_1|+b|F_2| = k.
    \end{align}
    \item For each $t\in[m]$, either $p_{A_t}=1$, or $p_{B_1}=p_{B_2}=...=p_{B_t}=1$, or $p_{C_t}=p_{C_{t+1}}=...=p_{C_{m}}=1$. This guarantees that for each facility $i\in A$, at least one of $\{i,\sigma_B(i),\sigma_C(i)\}$ is opened. 
    \item At least one of $p_{A_1}$ or $p_{B_1}$ is set to $1$. This guarantees that for each facility $i\in A_1$, at least one of $\{i,\sigma_B(i)\}$ is opened.
\end{enumerate}
\end{definition}
While the set of possible valid algorithms is infinite, our cost function tends to be minimized by extreme point of the parameter space.
Thus, we restrict our attention to the following discrete set. 
\begin{definition}[$\mathcal{ALG}_m$]
For fixed $m$, $\mathcal{ALG}_m$ denotes the set of all algorithms of the form $\mathcal{A}(p_{A_1}$, $\cdots$, $p_{A_{m}}$, $p_{B_1}$, $\cdots$, $p_{B_{m}}$, $p_{C_1}$, $\cdots$, $p_{C_{m}})$ such that,
\begin{itemize}
    \item $\mathcal{A}$ is valid,
    \item At most one of the input parameters of $\mathcal{A}$ is a fractional value (others being either $0$ or $1$).
\end{itemize}
\end{definition}
Note that $\mathcal{ALG}_m$ may be enumerated by recognizing it is a subset of all ways to assign $0$ or $1$ to all but one parameter, and choosing the remaining parameter such that \Cref{eq:mass_eqn1} is satisfied. 
Specifically, this implies $|\mathcal{ALG}_m|\le m2^{3m-1}$. Since $\mathcal{A}(\cdot)$ itself may be implemented in time linear in the number of facilities, then for any fixed $m$, all algorithms in $\mathcal{ALG}_m$ may be run in linear time.

\subsubsection{Bounding the number of facilities}
Our bi-point rounding algorithm returns one (best) of the solutions obtained by algorithms in $\mathcal{ALG}_m$ and $\mathcal{SR}$. From \Cref{lem:star-round_fac}, $\mathcal{SR}$ opens at most $k + \OO\left(\frac{1}{\epsilon}\log\frac{1}{\epsilon}\right)$ facilities. We now show that each algorithm in $\mathcal{ALG}_m$ opens at most $k$ facilities. First, we have the following claim,

\begin{claim}\label{claim:num_fac}
    For algorithms in $\mathcal{ALG}_m$ and for all $W \in P$, $\lceil p_W |W| \rceil = p_W |W|$.
\end{claim}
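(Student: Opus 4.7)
The plan is to exploit the mass-preservation constraint \Cref{eq:mass_eqn1} together with the definition of $\mathcal{ALG}_m$, which restricts at most one of the $3m$ parameters to be fractional. For every parameter $p_{W'}$ that is forced to be $0$ or $1$, the quantity $p_{W'}|W'|$ is trivially an integer (either $0$ or $|W'|$). Since $k$ is itself an integer and the cardinalities $|A_t|,|B_t|,|C_t|$ are integers, any single ``leftover'' parameter $p_W$ must be chosen precisely so that $p_W|W|$ closes the integer gap in \Cref{eq:mass_eqn1}.

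Concretely, I would first fix any algorithm $\mathcal{A}(\cdot)\in\mathcal{ALG}_m$ and split the $3m$ parameters into the (at most one) fractional parameter $p_W$ and the remaining integer-valued ones. Next, I would rewrite \Cref{eq:mass_eqn1} as
\begin{equation*}
p_W|W| \;=\; k \;-\; \sum_{W'\neq W} p_{W'}|W'|.
\end{equation*}
Every term on the right-hand side is an integer by the argument above, so $p_W|W|\in\mathbb Z$. Hence $\lceil p_W|W|\rceil = p_W|W|$. If no parameter is fractional at all, the claim is immediate for every $W$.

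I do not expect any real obstacle; the only subtlety is to note that even though $p_W$ itself may be a non-integer in $[0,1]$, the product $p_W|W|$ is still pinned to an integer by the global mass constraint. This is precisely why \Cref{def:valid_algos} is designed to preserve total mass: it ensures that the rounding step $\lceil \cdot \rceil$ in the definition of $\mathcal{A}(\cdot)$ never actually activates for members of $\mathcal{ALG}_m$, so the sampled set has size exactly $\sum_W p_W|W| = k$. This observation will also be the key input to the next step of the paper, namely showing that every algorithm in $\mathcal{ALG}_m$ opens at most $k$ facilities.
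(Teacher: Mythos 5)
Your proposal is correct and follows essentially the same route as the paper's proof: observe that all but at most one parameter $p_{W'}$ is $0$ or $1$ (so $p_{W'}|W'|$ is trivially integral), then use \Cref{eq:mass_eqn1} to express the single possibly-fractional product $p_W|W|$ as $k$ minus a sum of integers, hence an integer whose ceiling is itself.
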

\begin{proof}
    Recall that, by definition, each algorithm $\mathcal{A}(p_{A_1}$, $\cdots$, $p_{A_{m}}$, $p_{B_1}$, $\cdots$, $p_{B_{m}}$, $p_{C_1}$, $\cdots$, $p_{C_{m}})$ in $\mathcal{ALG}_m$ has at most one input parameter that is fractional. Thus, if $p_W \in \{0,1\}$, then $\lceil p_W|W|\rceil = p_W|W|$. Let $p_V \in [0,1]$ for some $V \in P$. Then, $p_W \in \{0,1\}$ for all $W\in P\setminus \{V\}$. Then, by \Cref{eq:mass_eqn1}, we have
    \begin{align}\label{eq:claim_fac}
        p_V|V| = k - \sum_{W\in P\setminus \{V\}} p_W |W|.
    \end{align}
    Since the right hand side of the above is a non-negative integer, $p_V|V|$ is also a non-negative integer. Therefore, $\lceil p_V|V|\rceil = p_V|V|$.
\end{proof}

\begin{lemma}\label{lem:main_fac}
Each algorithm in $\mathcal{ALG}_m$ opens at most $k$ facilities.
\end{lemma}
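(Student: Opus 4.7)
}
The plan is to combine \Cref{claim:num_fac} with the mass-preservation condition (Equation~\ref{eq:mass_eqn1}) of \Cref{def:valid_algos}. By construction, algorithm $\mathcal{A}(p_{A_1},\ldots,p_{C_m})$ opens exactly $\lceil p_W |W| \rceil$ facilities from each set $W$ in the partition $P$, so the total number of facilities opened is $\sum_{W \in P} \lceil p_W |W|\rceil$.

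By \Cref{claim:num_fac}, every term $\lceil p_W |W|\rceil$ is in fact equal to $p_W |W|$, so the ceiling operations are vacuous. Substituting this in and grouping by the three types of sets in the partition gives
\[
\sum_{W \in P} \lceil p_W |W|\rceil
= \sum_{W \in P} p_W |W|
= \sum_{t=1}^{m} \bigl( p_{A_t} |A_t| + p_{B_t}|B_t| + p_{C_t}|C_t|\bigr).
\]
Since $\mathcal{A} \in \mathcal{ALG}_m$ is valid, condition~2 of \Cref{def:valid_algos} (Equation~\ref{eq:mass_eqn1}) asserts that this sum equals $a|F_1| + b|F_2| = k$. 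This shows the total facilities opened is exactly $k$, which is clearly at most $k$.

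The argument is essentially a two-line calculation, so there is no real obstacle; all the work has already been done in \Cref{claim:num_fac} (which ruled out the ceilings inflating the count) and in the definition of valid algorithm (which forced the fractional mass to sum to exactly $k$). The only thing to verify carefully is that the partition $P$ in the algorithm's support is $\{A_1,\ldots,A_m,B_1,\ldots,B_m,C_1,\ldots,C_m\}$ and contains no additional sets, which is immediate from the construction in \Cref{sec:main}.
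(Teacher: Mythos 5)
Your proposal is correct and matches the paper's own proof essentially verbatim: both apply \Cref{claim:num_fac} to remove the ceilings and then invoke the mass-preservation condition (\Cref{eq:mass_eqn1}) to conclude the total is exactly $k$. Nothing further is needed.
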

\begin{proof}
The number of facilities opened by each algorithm $\mathcal{A}(p_{A_1}$, $\cdots$, $p_{A_{m}}$, $p_{B_1}$, $\cdots$, $p_{B_{m}}$, $p_{C_1}$, $\cdots$, $p_{C_{m}})$ in $\mathcal{ALG}_m$ is,
\begin{subequations}
\begin{align*}
    \sum_{t=1}^m \lceil p_{A_t}|A_t| \rceil + \lceil p_{B_t}|B_t|\rceil + \lceil p_{C_t}|C_t|\rceil = \sum_{t=1}^m p_{A_t}|A_t| + p_{B_t}|B_t| + p_{C_t}|C_t| = k,
\end{align*}
\end{subequations}
where the first equality follows by \Cref{claim:num_fac}, and the second equality follows by \Cref{eq:mass_eqn1}.
\end{proof}

Thus, \Cref{lem:main_fac} and \Cref{lem:star-round_fac} proves the number of facilities bound in Theorem~\ref{thm:main_result}.

\subsubsection{Cost Analysis}\label{sec:cost}
In this section, we derive bounds on the expected connection cost of each client. For a client $j$, we have defined $i_1(j)$ and $i_2(j)$. Let $i_3(j) = \sigma_B(i_1(j))$ and $i_4(j) = \sigma_C(i_1(j))$ (when clear from context, we omit the variable $j$). Assuming $i_1 \in A_t$, for some $t \in [m]$, we define the following cost functions:
\begin{enumerate}
    \item $\C_1(j) := \Pr[i_2]d_2 + \Pr[\bar{i_2}]d_1 + \Pr[\bar{i_1}]\Pr[\bar{i_2}](d_1+d_2)$
    \item $\C_\frac{1}{g}(j) := \Pr[i_2]d_2 + \Pr[\bar{i_2}]d_1 + \frac{1}{g_{t-1}}\Pr[\bar{i_1}]\Pr[\bar{i_2}](d_1+d_2)$
    \item $\C_{1,\frac{1}{g}}(j) := \Pr[i_2]d_2 + \Pr[\bar{i_2}]d_1 + \Pr[\bar{i_1}]\Pr[\bar{i_2}](1+\Pr[\bar{i_3}]\left(\frac{1}{g_{t-1}}-1\right))(d_1+d_2)$.
    \item $\C_{g,1}(j) := \Pr[i_2]d_2 + \Pr[\bar{i_2}]d_1 + \Pr[\bar{i_1}]\Pr[\bar{i_2}](g_{t}+\Pr[\bar{i_3}](1-g_{t}))(d_1+d_2)$
\end{enumerate}

We now have the following lemma that holds for all algorithms in $\mathcal{ALG}_m$.

\begin{lemma}\label{lem:client_cost}
The expected connection cost of a client $j$ is bounded above by,
\begin{enumerate}
    \item $\C_1(j)$ if $i_1 \in A_1$ and $i_2 \in B_t$, for $t \in [m]$,
    \item $\C_{\frac{1}{g}}(j)$ if $i_1 \in A_s$, for $s \in [2,m]$, and $i_2 \in B_t$, for $t \in [s]$,
    \item $\C_{1,\frac{1}{g}}(j)$ if $i_1 \in A_s$, for $s \in [2,m-1]$, and $i_2 \in B_t$, for $t \in [s+1,m]$,
    \item $\C_{g,1}(j)$ if $i_1 \in A_s$, for $s \in [m]$, and $i_2 \in C_t$, for $t \in [m]$.
\end{enumerate}
\end{lemma}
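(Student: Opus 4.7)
The plan is to treat each of the four cases uniformly by writing $\E[c(j)]$ as the expected connection cost obtained by consulting a short ordered list of candidates drawn from $\{i_1,i_2,i_3,i_4\}$ (where $i_3:=\sigma_B(i_1)$ and $i_4:=\sigma_C(i_1)$), and then upper-bounding that cost with triangle-inequality distance bounds. Three structural ingredients make the whole calculation go through: (a) Validity condition~3 in \Cref{def:valid_algos} guarantees that for every $i_1\in A_s$ at least one of $i_1,i_3,i_4$ is opened with probability one, so the backup chain always terminates; (b) because $\mathcal A$ samples each block of the partition $P$ independently, the indicators of any facilities that lie in distinct blocks are independent, so in particular $\Pr[\bar i_1\bar i_2]=\Pr[\bar i_1]\Pr[\bar i_2]$ always, and $\Pr[\bar i_1\bar i_2\bar i_3]=\Pr[\bar i_1]\Pr[\bar i_2]\Pr[\bar i_3]$ whenever $i_2$ and $i_3$ lie in distinct blocks; (c) triangle inequality gives $d(j,i_3)\le d_1+d(i_1,i_3)$ and $d(j,i_4)\le d_1+d(i_1,i_4)$, while the definition of $g(i)$ and the partition $A_1,\ldots,A_m$ let us trade between $d(i_1,i_3)$ and $d(i_1,i_4)$: when $i_2\in B$ we have $d(i_1,i_3)\le d(i_1,i_2)\le d_1+d_2$ and $d(i_1,i_4)=d(i_1,i_3)/g(i_1)\le (d_1+d_2)/g_{s-1}$, while when $i_2\in C$ we get $d(i_1,i_4)\le d_1+d_2$ and $d(i_1,i_3)=g(i_1)\,d(i_1,i_4)\le g_s(d_1+d_2)$.

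With these tools in hand I would handle the four cases as follows. For case~1, validity condition~4 forces $p_{A_1}=1$ or $p_{B_1}=1$, and since $\sigma_B(A_1)\subseteq B_1$ the list $(i_2,i_1,i_3)$ always terminates; bounding the $i_3$-backup by $2d_1+d_2$ and regrouping $\Pr[\bar i_2]=\Pr[\bar i_2 i_1]+\Pr[\bar i_1\bar i_2]$ yields $\C_1(j)$. For case~2, $i_3\in B_1\cup\cdots\cup B_s$ might coincide with or share a block with $i_2\in B_t$ ($t\le s$), so I avoid any reliance on independence of $\bar i_3$: I bound every backup uniformly by $d(j,i_4)\le d_1+(d_1+d_2)/g_{s-1}$, which is valid regardless of which sub-case of condition~3 holds (if $p_{A_s}=1$ the backup has weight $0$; if $p_{B_1}=\cdots=p_{B_s}=1$ the true backup $i_3$ is even closer than $i_4$; and if $p_{C_s}=\cdots=p_{C_m}=1$ then $i_4$ is literally open). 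For cases~3 and~4 the block $B_t$ (resp.\ $C_t$) of $i_2$ is disjoint from the block containing $i_3$, so $(i_1,i_2,i_3)$ are mutually independent and I can split the backup: with probability $\Pr[i_3]$ connect to $i_3$ at distance $\le 2d_1+d_2$ in case~3 and $\le d_1+g_s(d_1+d_2)$ in case~4, else connect to $i_4$ at distance $\le d_1+(d_1+d_2)/g_{s-1}$ in case~3 and $\le 2d_1+d_2$ in case~4. Factoring and collecting $(d_1+d_2)$ terms produces exactly $\C_{1,\frac1g}(j)$ and $\C_{g,1}(j)$ respectively, and validity condition~3 ensures $i_4$ is open whenever $i_1$ and $i_3$ are both closed.

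The main obstacle is not computational but bookkeeping: for each case I need to keep straight which partition block contains $i_3$ and $i_4$, verify that $(i_1,i_2,i_3)$ are genuinely independent before factoring $\Pr[\bar i_1\bar i_2\bar i_3]$, and match each sub-case of the trichotomy in validity condition~3 to the backup actually being charged. In particular, the reason case~2 must settle for $\C_{\frac1g}$ rather than the sharper $\C_{1,\frac1g}$ is precisely the potential collision $i_2=i_3$ (or their sharing a block), which kills the independence needed to split the backup; this is what the disjointness of blocks in cases~3 and~4 buys us. Once that case split is fixed, the remaining algebra is the routine manipulation $\Pr[\bar i_2]d_1=\Pr[\bar i_2 i_1]d_1+\Pr[\bar i_1\bar i_2]d_1$ plus similar regroupings, which line up with the stated $\C$-functions term by term.
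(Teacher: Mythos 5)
Your proposal is correct and follows essentially the same route as the paper: connect to the first open facility in a precedence order, factor probabilities using independence across partition blocks, and apply the triangle inequality together with the $g_{s-1}$/$g_s$ distance bounds, with validity conditions~3 and~4 guaranteeing the backup chain terminates. The only (harmless) deviation is in case~2, where you charge the uniform backup bound $d_1+\frac{1}{g_{s-1}}(d_1+d_2)$ directly instead of splitting into the sub-cases $i_2\ne i_3$ and $i_2=i_3$ and invoking $\C_{1,\frac1g}(j)\le\C_{\frac1g}(j)$ as the paper does; both yield exactly $\C_{\frac1g}(j)$.
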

\begin{proof}
    Recall that, by triangle inequality, $d(j,i_3) \le d(j,i_1)+d(i_1,i_3) \le d(j,i_1)+d(i_1,i_2) \le 2d_1+d_2$. We will make heavy use of the fact that facilities in separate partitions are chosen independently.

    In case (1), $i_1\in A_1 \implies i_3 \in B_1$ (by construction). Since $\mathcal{ALG}_m$ contains only algorithms that open at least one of the sets $A_1$ or $B_1$ completely, this guarantees that at least one of $i_1$ or $i_3$ will be opened. 

\begin{figure}[htpb]
\centering 
\includegraphics[width=0.5\textwidth]{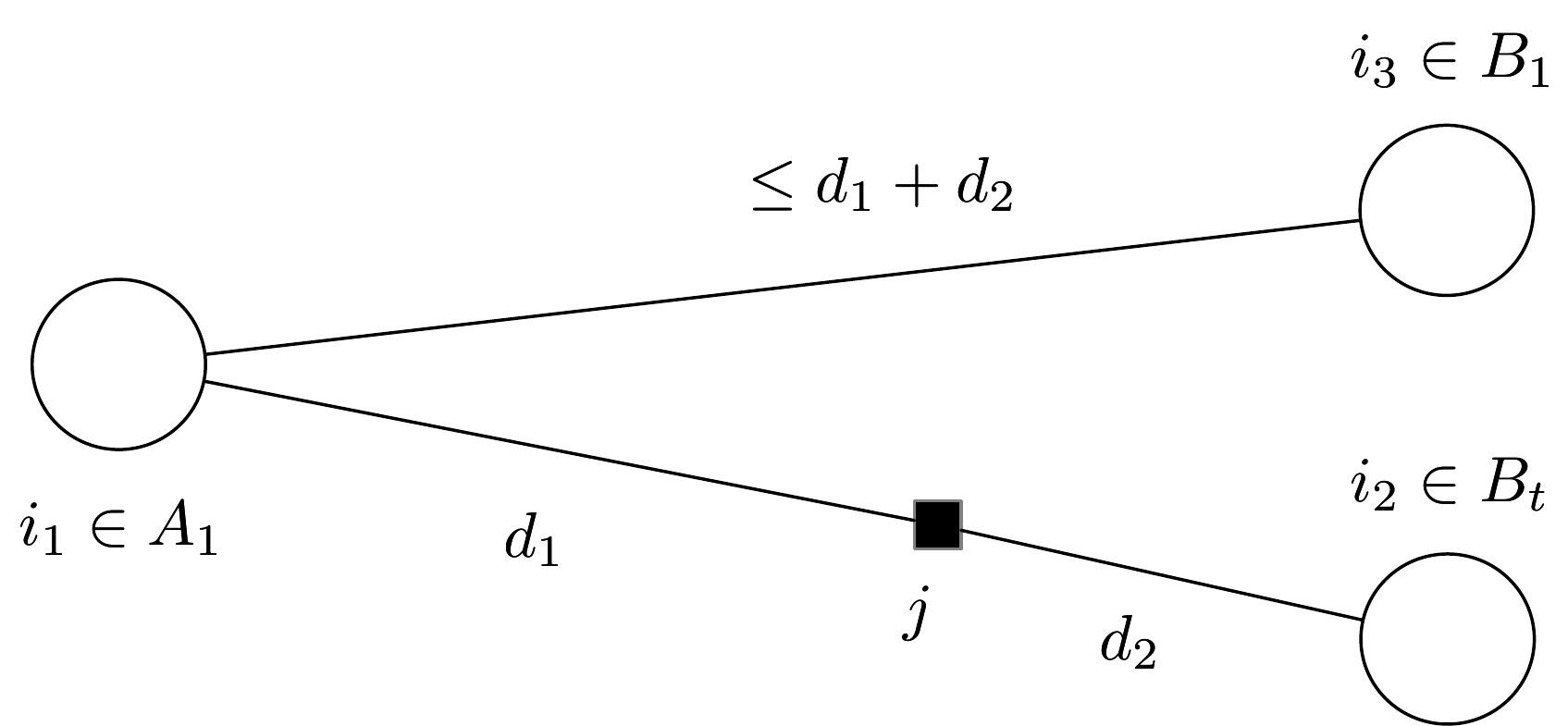}
\caption{Case 1.}
\end{figure}
    
    Therefore, by connecting to the first available facility in the order of precedence $(i_2,i_1,i_3)$,
    \begin{align*}
        \E[c(j)] &\le \Pr[i_2]d_2 + \Pr[\bar{i_2}i_1]d_1 + \Pr[\bar{i_2}\bar{i_1}](2d_1 + d_2)\\
         &= \Pr[i_2]d_2 + \Pr[\bar{i_2}]\Pr[i_1]d_1 + \Pr[\bar{i_2}]\Pr[\bar{i_1}](2d_1 + d_2)\\
         &= \Pr[i_2]d_2 + \Pr[\bar{i_2}]d_1 + \Pr[\bar{i_2}]\Pr[\bar{i_1}](d_1 + d_2) = \C_1(j).
    \end{align*}

    Now, consider case (3). Since $i_1 \in A_s$, $g_{s-1} \le g(i_1)$ $\implies$ $d(i_1,i_4) \le \frac{1}{g_{s-1}}d(i_1,i_3) \le \frac{1}{g_{s-1}}(d_1+d_2)$. Hence, by triangle inequality, $d(j,i_4) \le d(j,i_1)+d(i_1,i_4) \le d_1 + \frac{1}{g_{s-1}}(d_1+d_2)$. 

\begin{figure}[htpb]
\centering 
\includegraphics[width=0.75\textwidth]{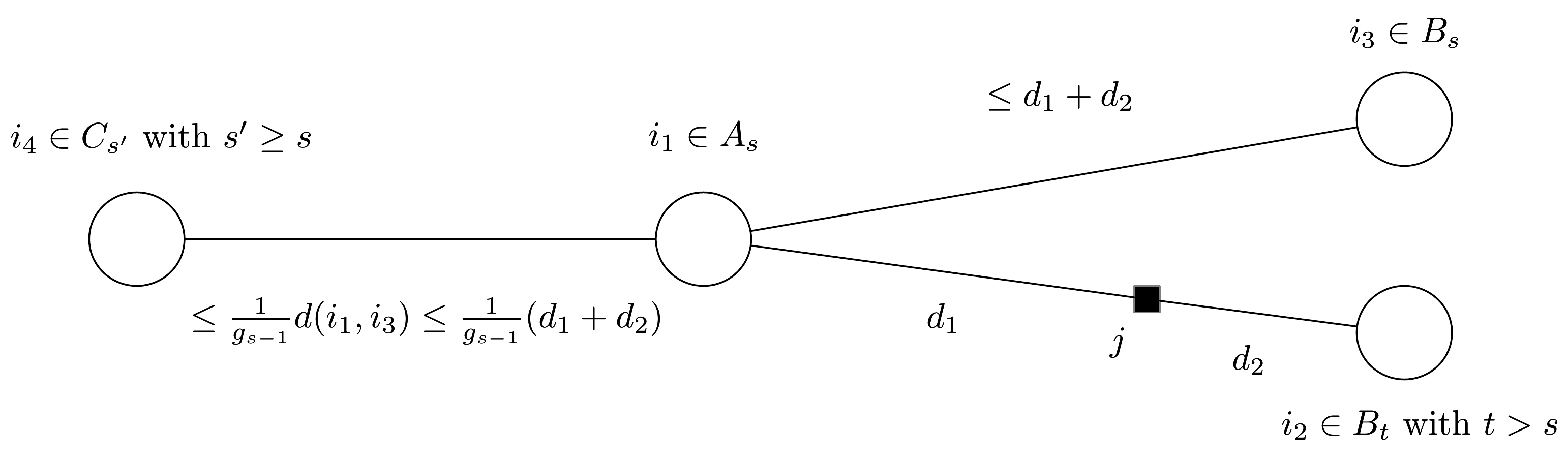}
\caption{Case 3.}
\end{figure}

    Therefore, by connecting to the first available facility in the order of precedence $(i_2,i_1,i_3,i_4)$:
    \begin{align*}
        \E[c(j)] &\le \Pr[i_2]d_2 + \Pr[\bar{i_2}i_1]d_1 + \Pr[\bar{i_2}\bar{i_1}i_3]d(j,i_3) + \Pr[\bar{i_2}\bar{i_1}\bar{i_3}]d(j,i_4)\\
         &= \Pr[i_2]d_2 + \Pr[\bar{i_2}]\Pr[i_1]d_1 + \Pr[\bar{i_2}]\Pr[\bar{i_1}]\Pr[i_3]d(j,i_3) + \Pr[\bar{i_2}]\Pr[\bar{i_1}]\Pr[\bar{i_3}]d(j,i_4)\\
         &\le \Pr[i_2]d_2 + \Pr[\bar{i_2}]\Pr[i_1]d_1 + \Pr[\bar{i_2}]\Pr[\bar{i_1}](d_1 + (d_1+d_2) + \Pr[\bar{i_3}]\left(\frac{1}{g_{s-1}}-1\right)(d_1+d_2) )\\
         &= \Pr[i_2]d_2 + \Pr[\bar{i_2}]d_1 + \Pr[\bar{i_2}]\Pr[\bar{i_1}](1 + \left(\frac{1}{g_{s-1}}-1\right)\Pr[\bar{i_3}])(d_1+d_2) = \C_{1,\frac{1}{g}}(j).
    \end{align*}

    In case (2), if $i_2 \ne i_3$, we get the same bound on the cost as in item (3). However, if $i_2 = i_3$ (which is possible since $i_3 \in \cup_{t=1}^s B_t$), we have only one guaranteed backup, i.e., $i_4$. In this case, by connecting to the first available facility in the order of precedence $(i_2,i_1,i_4)$:
    \begin{align*}
        \E[c(j)] &\le \Pr[i_2]d_2 + \Pr[\bar{i_2}i_1]d_1 + \Pr[\bar{i_2}\bar{i_1}]d(j,i_4)\\
         &= \Pr[i_2]d_2 + \Pr[\bar{i_2}]\Pr[i_1]d_1 + \Pr[\bar{i_2}]\Pr[\bar{i_1}]d(j,i_4)\\
         &\le \Pr[i_2]d_2 + \Pr[\bar{i_2}]\Pr[i_1]d_1 + \Pr[\bar{i_2}]\Pr[\bar{i_1}]\left(d_1 + \frac{1}{g_{s-1}}(d_1+d_2) \right)\\
         &= \Pr[i_2]d_2 + \Pr[\bar{i_2}]d_1 + \Pr[\bar{i_2}]\Pr[\bar{i_1}]\frac{1}{g_{s-1}}(d_1+d_2) = \C_{\frac{1}{g}}(j).
    \end{align*}
    Therefore, since $\C_{1,\frac{1}{g}}(j) \le \C_{\frac{1}{g}}(j)$, $\E[c(j)] \le \C_{\frac{1}{g}}(j)$.
    
\begin{figure}[h]
\centering 
\includegraphics[width=0.75\textwidth]{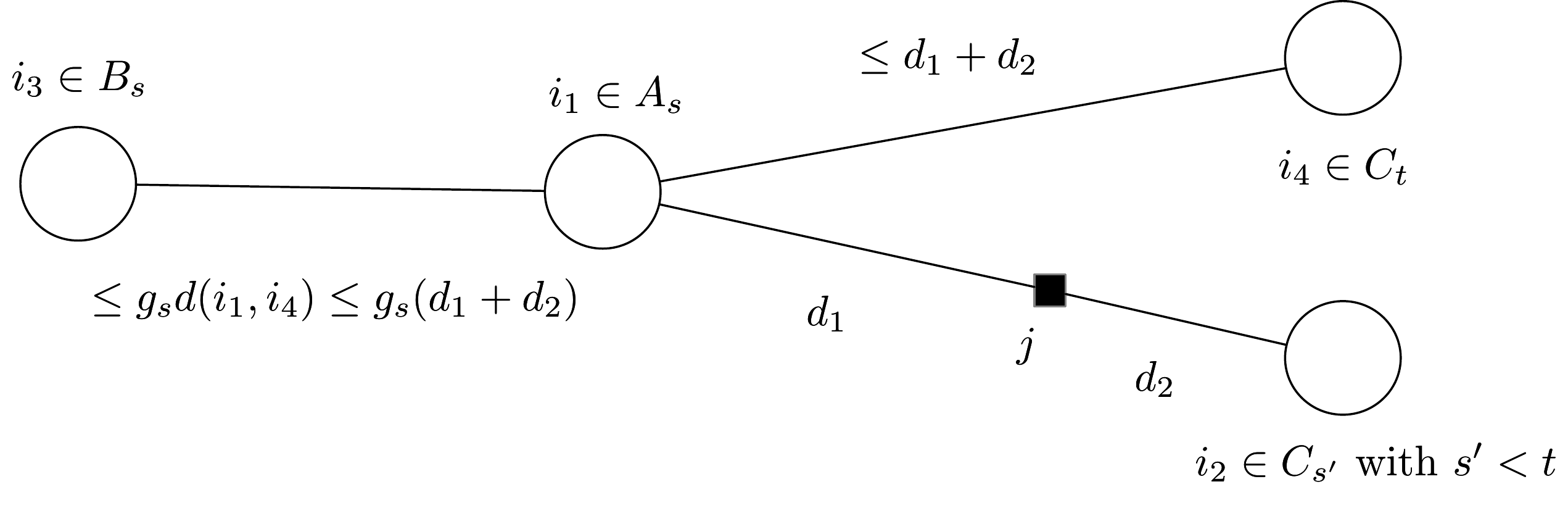}
\caption{Case 4.}
\end{figure}

    Finally, for case (4), since $i_1 \in A_s$, $g(i_1) \le g_s \implies d(i_1,i_3)\le g_s d(i_1,i_4)$. And, since $i_2 \in C_t$, $d(i_1,i_4) \le d(i_1,i_2) \le d_1+d_2$. Hence, by triangle inequality, $d(j,i_3) \le d(j,i_1)+d(i_1,i_3) \le d_1 + g_s(d_1+d_2)$. Further, $d(j,i_4) \le d(j,i_1) + d(i_1,i_4) \le 2d_1+d_2$. Therefore, by connecting to the first available facility in the order of precedence $(i_2,i_1,i_3,i_4)$:
    \begin{align*}
        \E[c(j)] &\le \Pr[i_2]d_2 + \Pr[\bar{i_2}i_1]d_1 + \Pr[\bar{i_2}\bar{i_1}i_3]d(j,i_3) + \Pr[\bar{i_2}\bar{i_1}\bar{i_3}]d(j,i_4)\\
         &= \Pr[i_2]d_2 + \Pr[\bar{i_2}]\Pr[i_1]d_1 + \Pr[\bar{i_2}]\Pr[\bar{i_1}]\Pr[i_3]d(j,i_3) + \Pr[\bar{i_2}]\Pr[\bar{i_1}]\Pr[\bar{i_3}]d(j,i_4)\\
         &\le \Pr[i_2]d_2 + \Pr[\bar{i_2}]\Pr[i_1]d_1 + \Pr[\bar{i_2}]\Pr[\bar{i_1}](d_1 + g_s(d_1+d_2) + \Pr[\bar{i_3}](1-g_s)(d_1+d_2) )\\
         &= \Pr[i_2]d_2 + \Pr[\bar{i_2}]d_1 + \Pr[\bar{i_2}]\Pr[\bar{i_1}](g_s + \Pr[\bar{i_3}](1-g_s))(d_1+d_2) = \C_{g,1}(j).
    \end{align*}

Note we have implicitly assumed in several cases that certain facilities are distinct, which may not be the case. It could be that $i_2=i_3$ in case (1), or $i_2 = i_4$ in case (4), for example. However, it is straightforward to show for these cases that the cost will only be less than the given bound.
    
\end{proof}

\subsection{The Factor-Revealing NLP}\label{sec:nlp}
We now construct the non-linear program (NLP) that bounds the bi-point rounding factor of our algorithm. We follow a similar approach as in Byrka et al.~\cite{Byrka-et-al}. The objective of the NLP is to maximize the bi-point rounding factor of a bi-point solution (i.e., the ratio between the total connection cost achieved by the solution returned by our algorithm and the bi-point solution cost), over the space of all bi-point solutions.

First, we define the following parameters which we will use to bound the total cost of each algorithm.
Consider the partition of the clients according to the memberships of their corresponding closest facilities in $F_1$ and $F_2$ as follows:
\begin{align*}
    \mathcal{J}_B^{x,y} &:= \{j \in  \mathcal{C} | i_1 \in A_x \text{ and } i_2\in B_y\},\\
    \mathcal{J}_C^{x,y} &:= \{j \in  \mathcal{C} | i_1 \in A_x \text{ and } i_2\in C_y\}.
\end{align*}
For each client class defined above and for $Z \in \{B,C\}$, we define: 
\begin{align*}
    & D_{Z, 1}^{x,y} = \sum_{j\in \mathcal{J}_Z^{x,y}} d_1(j),\\
    & D_{Z, 2}^{x,y} = \sum_{j\in \mathcal{J}_Z^{x,y}} d_2(j).
\end{align*}
Observe that $D_1 = \sum_{x\in[m]}\sum_{y\in [m]} D_{B,1}^{x,y}+D_{C,1}^{x,y}$ and $D_2 = \sum_{x\in[m]}\sum_{y\in [m]} D_{B,2}^{x,y}+D_{C,2}^{x,y}$.
We now define the corresponding aggregate versions of the cost functions from \Cref{sec:cost}.
\begin{align*}
    & 1.\,\C_1(\mathcal{J}_B^{x,y}) =\: p_{B_y}D_{B,2}^{x,y} + (1-p_{B_y})D_{B,1}^{x,y} + (1-p_{B_y})(1-p_{A_x})(D_{B,1}^{x,y}+D_{B,2}^{x,y}),\\
    & 2.\,\C_\frac{1}{g}(\mathcal{B}_Z^{x,y}) =\: p_{B_y}D_{B,2}^{x,y} + (1-p_{B_y})D_{B,1}^{x,y} + \frac{1}{g_{x-1}}(1-p_{B_y})(1-p_{A_x})(D_{B,1}^{x,y}+D_{B,2}^{x,y}),\\
    & 3.\,\C_{1,\frac{1}{g}}(\mathcal{J}_B^{x,y}) =\: p_{B_y}D_{B,2}^{x,y} + (1-p_{B_y})D_{B,1}^{x,y}\\
    & \hspace{2cm}+ (1-p_{B_y})(1-p_{A_x})(1+\Big(\frac{1}{g_{x-1}}-1\Big)\big(1-\min_{s \in [x]} p_{B_s}\big))(D_{B,1}^{x,y}+D_{B,2}^{x,y}),\\
    & 4.\,\C_{g,1}(\mathcal{J}_C^{x,y}) =\: p_{C_y}D_{C,2}^{x,y} + (1-p_{C_y})D_{C,1}^{x,y}\\
     & \hspace{2cm}+ (1-p_{C_y})(1-p_{A_x})(g_{x}+(1-g_{x})\big(1-\min_{s \in [x]} p_{B_s}\big))(D_{C,1}^{x,y}+D_{C,2}^{x,y}).
\end{align*}
Finally, we define
\begin{align}\label{eq:cost_alg}
    cost(\mathcal{A}) :=& \sum_{y\in[m]}\C_1(\mathcal{J}_B^{1,y}) + \sum_{\substack{x\in[2,m],\\y\in[x]}}\C_{\frac{1}{g}}(\mathcal{J}_B^{x,y}) + \sum_{\substack{x\in[2,m-1],\\ y\in[x+1,m]}} \C_{1,\frac{1}{g}}(\mathcal{J}_B^{x,y}) + \sum_{\substack{x\in[m],\\y\in[m]}} \C_{g,1}(\mathcal{J}_C^{x,y}).
\end{align}

\begin{lemma}
    For each algorithm $\mathcal{A} \in \mathcal{ALG}_m$, the total expected cost is bounded above by $cost(\mathcal{A})$.
\end{lemma}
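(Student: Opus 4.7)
The plan is to apply \Cref{lem:client_cost} on a per-client basis, translate the probabilities into the parameters of $\mathcal{A}$, and sum the per-client bounds over each client class to recover the aggregate expressions defining $cost(\mathcal{A})$.

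First, for any $\mathcal{A}\in\mathcal{ALG}_m$, \Cref{claim:num_fac} shows that $\mathcal{A}$ picks exactly $p_W|W|$ facilities uniformly at random from each $W\in P$, so every facility $i\in W$ is opened with probability $p_W$, and facilities in distinct members of $P$ are opened independently. Fixing a client $j$ with $i_1(j)\in A_x$, the construction $B_1=\sigma_B(A_1)$ and $B_t=\sigma_B(A_t)\setminus\bigcup_{s<t}B_s$ guarantees that $i_3(j):=\sigma_B(i_1(j))\in B_{s(j)}$ for some $s(j)\le x$, so $\Pr[\bar i_3]=1-p_{B_{s(j)}}\le 1-\min_{s\in[x]}p_{B_s}$.

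I would then break the clients into classes and apply \Cref{lem:client_cost} case by case. For $\J_B^{1,y}$, case (1) gives a per-client bound whose coefficients (involving $\Pr[i_1]=p_{A_1}$, $\Pr[i_2]=p_{B_y}$, and their product, which factorizes because $A_1$ and $B_y$ are distinct blocks of $P$) depend only on the index $(1,y)$; summing over $j$ substitutes $D_{B,1}^{1,y}$ and $D_{B,2}^{1,y}$ for $d_1(j)$ and $d_2(j)$ and reproduces $\C_1(\J_B^{1,y})$ exactly. The same substitute-and-sum procedure handles case (2) for $\J_B^{x,y}$ with $y\le x$, yielding $\C_{\frac{1}{g}}(\J_B^{x,y})$; here it is crucial that case (2) of \Cref{lem:client_cost} was specifically tailored to avoid $\Pr[\bar i_3]$ altogether, since $i_2$ and $i_3$ may share (or even coincide on) a single $B$-block and so are not independent.

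The only delicate step is cases (3) and (4), where the per-client bounds contain $\Pr[\bar i_3]$ and the block containing $i_3(j)$ varies with $j$ within a class. In case (3), $i_2\in B_y$ with $y\ge x+1$ and $i_3\in B_{s(j)}$ with $s(j)\le x$, so $i_1,i_2,i_3$ sit in three distinct blocks and the required independence holds; in case (4), $i_2\in C_y$ while $i_3\in B_{s(j)}$ again live in disjoint parts of $P$. In both bounds the factor multiplying $\Pr[\bar i_3]$ is nonnegative ($\frac{1}{g_{x-1}}-1\ge 0$ in case (3) because $g_{x-1}\le 1$, and $1-g_x\ge 0$ in case (4)), so replacing $\Pr[\bar i_3]$ by the uniform upper bound $1-\min_{s\in[x]}p_{B_s}$ preserves the inequality and yields a per-client bound that depends only on the class label $(x,y)$. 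Summing over the class then produces $\C_{1,\frac{1}{g}}(\J_B^{x,y})$ and $\C_{g,1}(\J_C^{x,y})$; adding all four families of contributions and invoking linearity of expectation gives the claimed bound. The main source of friction is the bookkeeping: tracking which block of $P$ each of $i_1,i_2,i_3$ occupies to certify independence case by case, and checking the sign of the coefficient of $\Pr[\bar i_3]$ so that the $\min$-relaxation points the correct way.
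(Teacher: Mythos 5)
Your proposal is correct and follows essentially the same route as the paper: use Claim~\ref{claim:num_fac} to equate opening probabilities with the parameters $p_W$, bound $\Pr[\bar i_3]$ by $1-\min_{s\in[x]}p_{B_s}$ via the construction of the $B_t$'s, and then sum the per-client bounds of Lemma~\ref{lem:client_cost} over each class by linearity of expectation. Your extra care about block-wise independence and the sign of the coefficient of $\Pr[\bar i_3]$ just makes explicit what the paper leaves as "straightforward."
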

\begin{proof}
    Given an algorithm $\mathcal{A}(p_{A_1},\cdots,p_{A_{m}},p_{B_1},\cdots,p_{B_{m}},p_{C_1},\cdots,p_{C_{m}})$ in $\mathcal{ALG}_m$, each facility $i\in W$ is opened with probability $\Pr[i]=\frac{\lceil p_W|W|\rceil}{|W|}=p_W$ (by \Cref{claim:num_fac}). Thus for $j\in \mathcal{J}_B^{x,y}$, $\Pr[i_1]=p_{A_x}$ and $\Pr[i_2]=p_{B_y}$ (and similarly for $\mathcal{J}_C^{x,y}$). By construction $i_3:=\sigma_B(i_1) \in \cup_{s=1}^t B_s$, so we may say $\Pr[i_3]\ge \min_{s \in [t]} p_{B_s}$. The rest of the proof follows straightforwardly by summing the cost bounds obtained in \Cref{lem:client_cost} over each corresponding client class and by linearity of expectation.
\end{proof}

As shown in the subsequent section, all algorithm parameters ($p_W$ for $W \in P$) may be expressed in terms of $b$ and $\gamma_{A_1},\ldots, \gamma_{A_m}$ (and $\gamma_{C_1},\ldots,\gamma_{C_m}$, which can be rewritten in terms of $\gamma_{A_1},\ldots, \gamma_{A_m}$ by definition). Thus, $cost(\mathcal{A})$ is a function of $b,\gamma_{A_1},\ldots, \gamma_{A_m}$, $D_{Z,1}^{x,y}$ and $D_{Z,2}^{x,y}$ (as defined above), and $\g$. Recall that $\g$ is a vector of constants chosen by us at the start of the algorithm. Let the remaining parameters (including $D_1$ and $D_2$) be variables chosen by an adversary. Then we may bound the bi-point rounding factor by the following program.

\begin{align}\label{eq:NLP}
    \text{maximize} &\:\:\: X\\\label{nlp:cons1}
    \text{s.t.} &\:\:\: X \le cost(\mathcal{A}) & \forall \mathcal{A}\in \mathcal{ALG}_m\\\label{nlp:cons2}
     &\:\:\: X \le (1-b)D_1+b(3-2b)D_2 & cost(\mathcal{SR}) \text{, from Theorem~\ref{thm:sr}}\\\label{nlp:cons3}
     &\:\:\: (1-b)D_1 + bD_2 = 1 \\\label{nlp:cons4}
     &\:\:\: D_1 = \sum_{x,y\in[m]} D_{B,1}^{x,y}+D_{C,1}^{x,y}\\\label{nlp:cons5}
     &\:\:\: D_2 = \sum_{x,y\in[m]} D_{B,2}^{x,y}+D_{C,2}^{x,y}\\\label{nlp:cons6}
     &\:\:\: D_2 \le D_1\\\label{nlp:cons7}
     &\:\:\: 0 \le D_{Z,1}^{x,y}, 0 \le D_{Z,2}^{x,y} & \forall x \in [m], y\in [m], Z\in \{B,C\}\\
     &\:\:\: 0\le b\le 1 \\
     &\:\:\: 0 \le \gamma_{A_t} & \forall t \in [m]
\end{align}

\begin{lemma}\label{lem:algo_bound}
Given a bi-point solution, the expected cost of the best solution returned by algorithms $\mathcal{SR}$ and $\mathcal{ALG}_m$ is at most $X^*\cdot(1+\epsilon)$ times the cost of the bi-point solution, where $X^*$ is the solution to the above NLP~(\ref{eq:NLP}).
\end{lemma}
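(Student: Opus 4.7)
The plan is to treat the NLP~(\ref{eq:NLP}) as an optimization over possible adversarial inputs and show that every concrete bi-point instance, via its partition-derived parameters, yields a feasible NLP point whose objective value matches (up to the $(1+\epsilon)$ of $\mathcal{SR}$) the best expected cost across our algorithms divided by the bi-point cost. Once feasibility is established, the bound by $X^*$ is immediate.

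First I would rescale distances so that $(1-b)D_1+bD_2=1$; this is precisely constraint~(\ref{nlp:cons3}), and it reduces the target to proving the best expected cost is at most $(1+\epsilon)X^*$ directly. The remaining constraints I would verify one by one: by construction the partition gives valid values of $b\in[0,1]$ and $\gamma_{A_t}\ge 0$, the classes $\mathcal{J}_B^{x,y},\mathcal{J}_C^{x,y}$ partition $\mathcal{C}$ so~(\ref{nlp:cons4})--(\ref{nlp:cons5}) follow, and non-negativity~(\ref{nlp:cons7}) is immediate. For constraint~(\ref{nlp:cons6}), I would invoke a without-loss-of-generality argument: should $D_1<D_2$ hold, opening $F_1$ and padding to $k$ facilities with arbitrary extras gives a deterministic solution of cost $D_1 < (1-b)D_1 + bD_2$, already witnessing a ratio at most $1 \le X^*$; so we may assume $D_2\le D_1$.

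The two nontrivial inequalities~(\ref{nlp:cons1}) and~(\ref{nlp:cons2}) are where the preceding work is cashed in. Setting
\[
\tilde X \;:=\; \min\Bigl\{\min_{\mathcal{A}\in\mathcal{ALG}_m} cost(\mathcal{A}),\; (1-b)D_1+b(3-2b)D_2\Bigr\}
\]
makes the feasibility of $(\tilde X, b, \gamma_{A_t}, D_{Z,i}^{x,y})$ automatic, so $\tilde X \le X^*$. On the other hand, the best expected cost across $\mathcal{ALG}_m \cup \{\mathcal{SR}\}$ is bounded above by the minimum of the individual expected costs. The preceding lemma yields expected cost at most $cost(\mathcal{A})$ for each $\mathcal{A}\in\mathcal{ALG}_m$, while Theorem~\ref{thm:sr} gives expected cost at most $(1+\epsilon)\bigl[(1-b)D_1 + b(3-2b)D_2\bigr]$ for $\mathcal{SR}$. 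Since $(1+\epsilon) \ge 1$, each individual expected cost is at most $(1+\epsilon)$ times its corresponding NLP bound, so the minimum is at most $(1+\epsilon)\tilde X \le (1+\epsilon)X^*$. Undoing the normalization multiplies both sides by the bi-point cost, matching the lemma's statement.

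The main care point, rather than any deep obstacle, is the $(1+\epsilon)$ bookkeeping: the NLP drops the $(1+\epsilon)$ from $\mathcal{SR}$'s constraint, so one must apply it uniformly across the minimum (valid because every individual bound is already $(1+\epsilon)$-valid). A secondary detail is justifying $D_2 \le D_1$ as WLOG via the trivial fallback of returning $F_1$; without it, constraint~(\ref{nlp:cons6}) would introduce a substantive restriction on the adversary rather than being free.
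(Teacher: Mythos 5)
Your proposal is correct and follows essentially the same route as the paper's proof: normalize distances to enforce the unit-cost constraint, check feasibility of the instance-derived point constraint by constraint (with the partition of clients giving the $D_1,D_2$ decompositions), dismiss the case $D_1<D_2$ via the trivial $F_1$-based fallback, and absorb the $(1+\epsilon)$ loss from $\mathcal{SR}$'s bound uniformly. Your version merely spells out the bookkeeping (the feasible point $\tilde X$ and the $(1+\epsilon)$ accounting) that the paper states more tersely.
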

\begin{proof}
Let $X$ denote the cost of the cheapest solution output by $\mathcal{ALG}_m \cup \mathcal{LS}$. Constraints (\ref{nlp:cons1}) and (\ref{nlp:cons2}) enforce that $X$ is indeed the cheapest. By uniformly scaling all distances, we may first normalize the bi-point solution cost to $1$ (enforcing constraint (\ref{nlp:cons3})), so that $X$ is also the bi-point rounding factor. Constraints (\ref{nlp:cons4}) and (\ref{nlp:cons5}) must hold since the corresponding classes partition all clients. Constraint (\ref{nlp:cons6}) may be assumed, as otherwise we may trivially take $F_1$ as a solution with bi-point rounding factor less than one. Finally, we lose an additional multiplicative factor of $(1+\epsilon)$ if the best solution is returned by $\mathcal{SR}$, thereby giving us the stated bound.
\end{proof}

\subsubsection{Explicit description of algorithm parameters}\label{sec:desc_alg_params} 
In this section we will explicitly describe $\ALG_m$ for several small values of $m$. This will allow us to fully expand constraints (\ref{nlp:cons1}) in the NLP, after which we proceed to calculate a rigorous upper-bound.

As a concrete example, consider $m=1$. Here the facilities are partitioned into $P=\{A_1,B_1,C_1\}$, and each algorithm in $\ALG_1$ is of the form $\A(p_{A_1},p_{B_1},p_{C_1})$. Now, let us enumerate $\ALG_1$, as discussed in \Cref{sec:algo_def}. First, we restate the validity constraint \ref{eq:mass_eqn1} in terms of the NLP variables, by dividing both sides by $|C|$ and using the fact that $|A_t|=|B_t|$:
\begin{align}\label{eq:mass_normalized}
    \sum_{t=1}^{m} (p_{A_t}+p_{B_t})\gamma_{A_t} + \sum_{t=1}^{m} p_{C_t}\gamma_{C_t} = \sum_{t=1}^{m} \gamma_{A_t} + b.
\end{align}
Since $m=1$, we have $\gamma_{C_1}=1$, and the above simplifies to $(p_{A_1}+p_{B_1})\gamma_{A_1}+p_{C_1}=\gamma_{A_1} + b$. Now consider all possible ways of assigning two of the parameters ($p_{A_1}$, $p_{B_1}$, and $p_{C_1}$) to $0$ or $1$ and setting the third such that (\ref{eq:mass_normalized}) is satisfied. This results in $12$ potential algorithms. After filtering out algorithms which fail to set at least one of $p_{A_1}$ or $p_{B_1}$ to $1$ (\Cref{def:valid_algos} property $4$), as well as those where the fractional argument is never between $0$ and $1$ (\Cref{def:valid_algos} property $1$), there are $5$ (conditionally) valid algorithms remaining, shown in \Cref{tab:alg_1}. Notice $\mathcal{A}_3$ is only valid when $\gamma_{A_1}\le b$, and $\A_4$ and $\A_5$ are valid only when $\gamma_{A_1}\ge b$. We may account for this either by formulating and solving a separate NLP for each case, or by cleverly combining the algorithms (as is later shown). Finally, we may explicitly express constraints (\ref{nlp:cons1}) by calculating, for example:
\begin{align*}
cost(\A_1)&=\C_1(\mathcal{J}_B^{1,1})+\C_{g,1}(\mathcal{J}_C^{1,1})=D_{B,2}^{1,1} + b D_{C,2}^{1,1} +(1-b)(2D_{C,1}^{1,1}+D_{C,2}^{1,1})
\end{align*}
\begin{table}[htpb]
    \centering
    \setlength{\tabcolsep}{8pt}
    \renewcommand{\arraystretch}{1.5}
    \begin{tabular}{c|c|c|c}
    Algorithm       & $p_{A_1}$ & $p_{B_1}$ & $p_{C_1}$  \\[3pt]\hline
    $\mathcal{A}_1$ &     $0$     &     $1$    &     $b$   \\[3pt]\hline
    $\mathcal{A}_2$ &     $1$     &     $0$    &     $b$   \\[3pt]\hline
    $\mathcal{A}_3$ &     $1$     &     $1$    & $b-\gamma_{A_1}$ \\[3pt]\hline
    $\mathcal{A}_4$ & $\frac{b}{\gamma_{A_1}}$ &  $1$  &  $0$  \\[3pt]\hline
    $\mathcal{A}_5$ &     $1$     & $\frac{b}{\gamma_{A_1}}$ &     $0$   \\[3pt]\hline
    \end{tabular}
    \vspace{3pt}
    \caption{$\mathcal{ALG}_1$ }\label{tab:alg_1}
    \end{table}
For $m=1$, we omit a rigorous upper bound since the NLP does not even improve upon the $\frac{1+\sqrt{3}}{2}$ ratio attained by LS. This may be proved by checking that the following is a feasible solution: $b=\frac{3-\sqrt{3}}{2}$, $\gamma_{A_1}\to\infty$, $D_{B,1}^{1,1}=\frac{1}{\sqrt{3}}$, $D_{B,2}^{1,1}=0$, $D_{C,1}^{1,1}=D_{C,2}^{1,1}=D_2=\frac{3+\sqrt{3}}{6}$, and $X=D_1=\frac{1+\sqrt{3}}{2}$. (We also remark that as $\gamma_{A_1}$ approaches $\infty$, $\ALG_1$ reduces to $\{\A_1,\A_2\}$, which is identical to the pair of algorithms considered in JV.)

For larger $m$, we may somewhat simplify the analysis by only utilizing a partial set of algorithms $\ALG'_m\subseteq\ALG_m$ in our analysis. Note that substituting any such $\ALG'_m$ into the NLP produces a valid relaxation (since it is equivalent to removing the constraints corresponding to $\ALG_m\setminus \ALG'_m$). Thus, we need only prove validity of $\ALG'_m$ rather than completeness. 

\begin{table}[htpb]
\centering
\setlength{\tabcolsep}{8pt}
\renewcommand{\arraystretch}{1.5}
\begin{tabular}{c|c|c|c|c|c|c}
Algorithm       & $p_{A_1}$ & $p_{A_2}$ & $p_{B_1}$ & $p_{B_2}$ & $p_{C_1}$ & $p_{C_2}$ \\[3pt]\hline
$\mathcal{A}_1$ &     $1$     &     $1$     &     $0$     &     $0$     &   $\frac{b-\gamma_{C_2}}{1-\gamma_{C_2}}$    &      $\frac{b}{\gamma_{C_2}}$   \\[3pt]\hline
$\mathcal{A}_2$ &     $0$     &     $0$     &     $1$     &     $1$     &   $\frac{b-\gamma_{C_2}}{1-\gamma_{C_2}}$    &      $\frac{b}{\gamma_{C_2}}$   \\[3pt]\hline
$\mathcal{A}_3$ &     $1$     &     $1$     &     $0$     &     $0$     &   $\frac{b}{1-\gamma_{C_2}}$    &      $\frac{b+\gamma_{C_2}-1}{\gamma_{C_2}}$ \\[3pt]\hline
$\mathcal{A}_4$ &     $0$     &     $1$     &     $1$     &     $0$     &   $\frac{b}{1-\gamma_{C_2}}$    &      $\frac{b+\gamma_{C_2}-1}{\gamma_{C_2}}$ \\[3pt]\hline
$\mathcal{A}_5$ &     $0$     &     $0$     &     $1$     &     $1$     &   $\frac{b}{1-\gamma_{C_2}}$    &      $\frac{b+\gamma_{C_2}-1}{\gamma_{C_2}}$ \\[3pt]\hline
$\mathcal{A}_6$ &     $1$     &     $1$     &     $0$     &   $\frac{b}{\gamma_{A_2}}$    &  $\frac{b-\gamma_{A_2}}{1-\gamma_{C_2}}$    &    $0$  \\[3pt]\hline
$\mathcal{A}_7$ &     $0$     &     $1$     &     $1$     &   $\frac{b}{\gamma_{A_2}}$    &  $\frac{b-\gamma_{A_2}}{1-\gamma_{C_2}}$    &   $0$  \\[3pt]\hline
$\mathcal{A}_8$ &     $0$     &     $\frac{b}{\gamma_{A_2}}$     &     $1$     &     $1$     &   $\frac{b-\gamma_{A_2}-\gamma_{C_2}}{1-\gamma_{C_2}}$    &    $\frac{b-\gamma_{A_2}}{\gamma_{C_2}}$   \\[3pt]\hline
$\mathcal{A}_9$ &     $0$     &     $0$     &     $1$     &    $\frac{b+\gamma_{A_2}-1}{\gamma_{A_2}}$    &   $\frac{b+\gamma_{A_2}-\gamma_{C_2}}{1-\gamma_{C_2}}$    &   $\frac{b+\gamma_{A_2}}{\gamma_{C_2}}$    \\[3pt]\hline
$\mathcal{A}_{10}$ &     $0$     &    $\frac{b+\gamma_{A_2}-\gamma_{C_2}}{\gamma_{A_2}}$   &   $1$     &  $\frac{b-\gamma_{C_2}}{\gamma_{A_2}}$     &   $\frac{b-\gamma_{A_2}-\gamma_{C_2}}{1-\gamma_{C_2}}$  & $1$ \\[3pt]\hline
\end{tabular}
\vspace{3pt}
\caption{$\mathcal{ALG}'_2$ (all values truncated to $[0,1]$)}\label{tab:10_algs}
\end{table}

For $m=2$ and $m=3$, we define $\ALG'_2$ in \Cref{tab:10_algs} and $\ALG'_3$ in \Cref{tab:30_algs}. (See \Cref{sec:algogen} for discussion of heuristics used to pick these sets.) In order to avoid the conditional validity of most algorithms, and the resulting proliferation of NLPs, the algorithms are described in a certain form which is always valid. Note that all parameters listed in $\ALG'_2$ and $\ALG'_3$ are implied to be truncated to the unit interval as needed. As an example, let us fully state $\A_8$ from $\ALG'_2$:
\begin{align*}
    \mathcal{A}_8 = \Biggl(& p_{A_1}=0,p_{A_2}=\min\left\{1,\frac{b}{\gamma_{A_2}}\right\},p_{B_1}=1,p_{B_2}=1,\\
    & p_{C_1}=\max\left\{0,\frac{b-\gamma_{A_2}-\gamma_{C_2}}{1-\gamma_{C_2}}\right\},p_{C_2}=\min\left\{1,\max\left\{0,\frac{b-\gamma_{A_2}}{\gamma_{C_2}}\right\}\right\}\Biggr).
\end{align*}
To prove $\A_8$ is a valid algorithm, we may first restate it in a piecewise form, such that the parameters simplify in each case:
\[ \A_8=
\begin{cases}
    \A\left(p_{A_1}=0,p_{A_2}=\frac{b}{\gamma_{A_2}},p_{B_1}=1,p_{B_2}=1,p_{C_1}=0,p_{C_2}=0\right) & 0\le b\le \gamma_{A_2} \\
    \A\left(p_{A_1}=0,p_{A_2}=1,p_{B_1}=1,p_{B_2}=1,p_{C_1}=0,p_{C_2}=\frac{b-\gamma_{A_2}}{\gamma_{C_2}}\right) & \gamma_{A_2} \le b \le \gamma_{A_2}+\gamma_{C_2} \\
    \A\left(p_{A_1}=0,p_{A_2}=1,p_{B_1}=1,p_{B_2}=1,p_{C_1}=\frac{b-\gamma_{A_2}-\gamma_{C_2}}{1-\gamma_{C_2}},p_{C_2}=1\right) & \gamma_{A_2}+\gamma_{C_2} \le b \le 1
\end{cases} 
\]
In this form, it is straightforward to verify in each case that $\A_8$ has all of the properties in \Cref{def:valid_algos}, (using \Cref{eq:mass_normalized} for the second property), as well as having at most one fractional parameter. All other algorithms in $\ALG'_2$ and $\ALG'_3$ may be decomposed and proven valid in the same manner. 

Finally, since the resulting NLPs are non-convex and appear challenging to solve exactly, we employ the computer-assisted \emph{interval arithmetic} based approach of \cite{Zwick} to obtain a rigorous upper bound on the bi-point rounding factors. The approach is similar to~\cite{Byrka-et-al}, with some additional adjustments to handle the non-smooth parameter functions, and zeroes in the denominator. For further description, see \Cref{sec:interval_arithmetic}. Using these methods, we obtain the following results by \Cref{lem:algo_bound}.

\begin{theorem}
The expected cost of the best solution returned by $\SR$ and $\ALG_2$ with $g_1:=0.6586$ is at most $1.3103$ times the cost of the bi-point solution.
\end{theorem}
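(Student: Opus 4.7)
The plan is to reduce this bound to a computer-assisted verification of an upper bound on the NLP $\pref{eq:NLP}$ instantiated with $m=2$ and $g_1 := 0.6586$. By Lemma~\ref{lem:algo_bound}, it suffices to show that the optimal value $X^\ast$ of this NLP is at most $1.3103 / (1+\eps)$ (and then absorb the $(1+\eps)$ factor into the stated bound, or equivalently use $1.3103$ as a slightly loose target to allow for the $\eps$). The first step is to replace constraints $\pref{nlp:cons1}$ over all of $\ALG_2$ by the weaker constraints obtained from the explicit subset $\ALG'_2 = \{\A_1, \ldots, \A_{10}\}$ listed in Table~\ref{tab:10_algs}. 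Since $\ALG'_2 \subseteq \ALG_2$, this yields a relaxation of the NLP whose optimum is an upper bound on $X^\ast$, so it is enough to bound the relaxed value.

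Next I would expand each $\text{cost}(\A_i)$ constraint explicitly. For $m=2$, the partition has six classes $\{A_1, A_2, B_1, B_2, C_1, C_2\}$ and four client classes $\J_B^{x,y}$ and four $\J_C^{x,y}$ for $x,y \in \{1,2\}$, each carrying a pair of distance sums $D_{Z,1}^{x,y}, D_{Z,2}^{x,y}$. Plugging the algorithm parameters from Table~\ref{tab:10_algs} into the aggregate cost functions $\C_1, \C_{1/g}, \C_{1,1/g}, \C_{g,1}$ from Section~\ref{sec:nlp}, together with the $\SR$ bound from Theorem~\ref{thm:sr}, gives the full list of upper-bounding linear (in the $D_{Z,1}^{x,y}, D_{Z,2}^{x,y}$) constraints on $X$, whose coefficients are algebraic functions of the free parameters $b, \gamma_{A_1}, \gamma_{A_2}$ (with $\gamma_{C_1}, \gamma_{C_2}$ derived from these via the construction in Section~\ref{sec:main}).

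To handle the piecewise/truncated nature of the parameters (e.g.\ the algorithm $\A_8$ decomposes into three regimes according to whether $b \le \gamma_{A_2}$, $\gamma_{A_2} \le b \le \gamma_{A_2} + \gamma_{C_2}$, or $b \ge \gamma_{A_2} + \gamma_{C_2}$), I would partition the $(b, \gamma_{A_1}, \gamma_{A_2})$-cube into finitely many sub-boxes defined by these threshold inequalities. On each sub-box every $p_W$ is a smooth rational function with no zero denominator, so the NLP becomes a finite collection of smooth non-convex programs, each of which I would bound by the interval-arithmetic scheme of Zwick~\cite{Zwick} adapted as in Byrka et al.~\cite{Byrka-et-al} and described in Section~\ref{sec:interval_arithmetic}. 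Concretely, for each sub-box I would: (i) subdivide along $b, \gamma_{A_1}, \gamma_{A_2}$ into small intervals; (ii) for each small box, use interval arithmetic to evaluate the coefficients of each cost constraint, thereby obtaining an LP in the remaining distance variables $D_{Z,1}^{x,y}, D_{Z,2}^{x,y}$ (since every constraint is linear in these when the algorithm parameters are fixed) whose maximum $X$ can be bounded by an LP solve on interval data; (iii) verify the resulting upper bound is $\le 1.3103$, refining any sub-box where it is not.

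The main obstacle will be showing that the chosen set $\ALG'_2$ is actually strong enough at $g_1 = 0.6586$ to drive the relaxed NLP value below $1.3103$ on every sub-box, and carefully handling the boundary behavior where some denominators such as $\gamma_{C_2}$ or $1 - \gamma_{C_2}$ approach $0$: in such limits I would argue separately that the truncated parameters collapse to values in $\{0,1\}$ and the corresponding algorithm degenerates to a well-defined valid algorithm already covered by another $\A_i$, so the interval bound still applies. Assuming the interval computation certifies the bound on the finite cover, combining with $\pref{nlp:cons2}$ and Lemma~\ref{lem:algo_bound} yields the claimed $1.3103$ bi-point rounding factor.
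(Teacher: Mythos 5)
Your proposal follows essentially the same route as the paper: relax constraint (\ref{nlp:cons1}) to the explicit subset $\ALG'_2$ from Table~\ref{tab:10_algs}, expand the aggregate cost functions together with the $\SR$ bound, and certify the resulting non-convex program via the interval-arithmetic/LP subdivision scheme of \Cref{sec:interval_arithmetic}, with the $(1+\eps)$ factor absorbed as in \Cref{lem:algo_bound}. The only differences are implementation-level (the paper handles the truncated, piecewise parameters by taking worst-case bounds $p_W^0,p_W^1$ on each box and special-casing zero denominators rather than pre-partitioning by regime, and it drops $\gamma_{A_1}$ since it does not appear in the chains), so your plan is a faithful reconstruction of the paper's computer-assisted proof.
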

\begin{theorem}\label{thm:layer_3_resultb}
The expected cost of the best solution returned by $\SR$ and $\ALG_3$ with $g_1:=0.642$ and $g_2:=0.833$ is at most $1.3064$ times the cost of the bi-point solution.
\end{theorem}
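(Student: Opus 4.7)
The plan is to specialize the NLP~(\ref{eq:NLP}) at the third level of the hierarchy, using $m=3$ and the prescribed thresholds $g_1 := 0.642$, $g_2 := 0.833$, and then rigorously certify that its optimal value $X^{\ast}$ does not exceed $1.3064$. By \Cref{lem:algo_bound}, any such certified upper bound yields a matching bound on the ratio between the expected cost of the best returned solution and the bi-point solution cost (up to the $(1+\eps)$ factor, which is absorbed into the statement). So the whole task reduces to proving a numerical upper bound on one specific non-convex program.

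First I would import the explicit collection $\ALG'_3$ tabulated in \Cref{tab:30_algs}. Before plugging these into the NLP, one has to verify that every listed algorithm is genuinely valid in the sense of \Cref{def:valid_algos}, because its parameters are presented as truncated rational expressions in $b$, $\gamma_{A_1}$, $\gamma_{A_2}$, $\gamma_{A_3}$. Mirroring the decomposition carried out in the excerpt for $\mathcal{A}_8 \in \ALG'_2$, I would rewrite each of the thirty algorithms piecewise over the regions of parameter space on which the truncations become inactive; within each such region the $p_W$ become simple rational functions, and one checks directly the four properties of \Cref{def:valid_algos}, in particular total-mass preservation via~\Cref{eq:mass_normalized} and the guaranteed-backup condition that for each $t$ either $p_{A_t}=1$, or all $p_{B_s}$ with $s \le t$ equal $1$, or all $p_{C_s}$ with $s \ge t$ equal $1$.

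With validity established, the constraints~(\ref{nlp:cons1}) become explicit rational inequalities by substituting into \Cref{eq:cost_alg}; together with the $\SR$ bound~(\ref{nlp:cons2}), the normalization~(\ref{nlp:cons3}), the distance-decomposition constraints~(\ref{nlp:cons4})--(\ref{nlp:cons5}), and $D_2 \le D_1$, this yields a fully explicit non-convex program in the variables $b$, $\gamma_{A_1}$, $\gamma_{A_2}$, $\gamma_{A_3}$ and the distance aggregates $D_{Z,\ell}^{x,y}$ for $Z\in\{B,C\}$, $x,y\in[3]$, $\ell\in\{1,2\}$. I would then run the computer-assisted interval-arithmetic branch-and-bound scheme of~\cite{Zwick,Byrka-et-al} described in \Cref{sec:interval_arithmetic}: recursively subdivide the box in $(b,\gamma_{A_1},\gamma_{A_2},\gamma_{A_3})$-space, compute sound interval enclosures of each $cost(\mathcal{A})$ for $\mathcal{A}\in\ALG'_3\cup\{\SR\}$, exploit the linearity of the cost in the $D_{Z,\ell}^{x,y}$ (so that the worst case over the $D$'s can be evaluated at the vertices of the corresponding polytope), and prune any sub-box on which the minimum enclosure across the algorithms already falls below $1.3064$.

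The main obstacle is the non-smoothness and potential ill-conditioning of the program. The cost expressions contain $\min_{s\in[x]}p_{B_s}$ and the truncations $\min\{1,\cdot\},\max\{0,\cdot\}$ from the algorithm definitions, so sub-boxes must be split along the seams where these operators switch regime in order to keep the enclosures tight. In addition, several denominators, such as $\gamma_{A_2}$, $\gamma_{A_3}$, $\gamma_{C_2}$ and $1-\gamma_{C_2}$, can vanish and produce uselessly wide intervals; as in~\Cref{sec:interval_arithmetic}, I would refactor each expression so that a vanishing denominator is cancelled against a vanishing numerator on the relevant regime, or handle narrow boundary strips by a tailored closed-form argument (typically reducing to an instance already dominated by $\SR$ or by a simpler algorithm in $\ALG'_3$). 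Once these technical points are tamed, the branch-and-bound search terminates with the certificate $X^{\ast} \le 1.3064$, and \Cref{thm:layer_3_resultb} follows from \Cref{lem:algo_bound}.
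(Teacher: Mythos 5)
Your proposal is correct and follows essentially the same route as the paper: take the tabulated set $\ALG'_3$, verify validity of each chain by the piecewise decomposition illustrated for $\A_8\in\ALG'_2$, instantiate the NLP with $g_1=0.642$, $g_2=0.833$, certify $X^*\le 1.3064$ by the interval-arithmetic branch-and-bound of \Cref{sec:interval_arithmetic} (solving a relaxed LP over the $D$-variables on each sub-box and handling truncations and vanishing denominators), and conclude via \Cref{lem:algo_bound}. The only differences are implementation-level (e.g., the paper drops $\gamma_{A_1}$ since it appears in no chain, and handles $\tfrac{1}{0}$, $\tfrac{0}{0}$ by worst-case bounds rather than cancellation), which do not change the argument.
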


Therefore, Theorem~\ref{thm:main_result} follows from Theorem~\ref{thm:layer_3_resultb}, \Cref{lem:star-round_fac} and \Cref{lem:main_fac}.

\section{Lower Bounds for our Framework}\label{sec:lower_bounds}
In this section, we will show that, given the NLP~(\ref{eq:NLP}), our hierarchy of partitioning schemes cannot achieve a bi-point rounding factor smaller than $1.2943$, proving Theorem~\ref{thm:lower_bound}. Note that this gives a rigorous lower bound only for our particular analysis, rather than the true performance of the algorithm.

Consider a bi-point solution such that $g(i) = \hat{g}$ for all $i \in F_1$ and some fixed value of $\hat{g}$. Observe that, regardless of how we choose the algorithm parameters $m$ and $g_1,g_2,\cdots,g_{m-1}$, all facilities in $F_1$ may
be assigned to a single set $A_t$. In the best case, we will have set $g_t=\hat{g}$ and $g_{t+1}=\hat{g}+\epsilon$ for small $\epsilon>0$, so that the cost functions in \Cref{sec:cost} are as tight as possible. WLOG, we may assume that $t=1$ and hence, the non-empty sets are $A_2, B_2, C_1,$ and $C_2$. 


Let $\mathcal{ALG}_{\text{uniform}}$ be the set of all valid algorithms over these sets. We provide a list of algorithms in \Cref{tab:lower_bound} (in the previously mentioned form), and claim these include all algorithms in $\mathcal{ALG}_{\text{uniform}}$.

To find a difficult instance, we essentially solved \ref{eq:NLP} heuristically, with an additional variable $g_1$ (i.e., treated as a variable instead of a parameter) in $[0, 1]$ and algorithms in $\mathcal{ALG}_{\text{uniform}} \cup \{\mathcal{SR}\}$. 


Now we implicitly consider a bi-point solution with parameters $\hat{g}=1, b=0.68, \gamma_{A_2}=0.7478,$ and $\gamma_{C_2}=0.3291$. Furthermore, distribute clients such that the cost parameters are $D_{B,1}^{2,2} = 0.722175$, $D_{C,1}^{2,1} = 0.647832$, $D_{C,1}^{2,2} = 0.317901$, $D_{B,2}^{2,2} = 0.289375$,  $D_{C,2}^{2,1} = 0.259589$, $D_{C,2}^{2,2} = 0.127384$. We claim it is straightforward to construct an instance with these parameters. We have argued above that the algorithms in \Cref{tab:lower_bound}, with parameters $g_1 = \hat{g}\approx g_2$ give the best possible approximation obtainable with $\mathcal{ALG}_m$. Calculating the cost, we see it obtains a bi-point rounding factor of $1.2943$.

\section{Integrality Gap for Bi-point Solutions}\label{sec:integrality_gap}

In this section, we demonstrate a family of bi-point solutions with integrality gap approaching $\sqrt{\phi}=\sqrt{\frac{1+\sqrt{5}}2}\approx1.272$ for large $k$, proving Theorem~\ref{thm:integrality_gap}.
\subsection{A Golden Bi-point Solution}
We first construct a bi-point solution $\B(k)$ and show that it is a valid bi-point solution with unit cost. To do this, we need the following constants for construction (given in several forms to facilitate later algebra). Let $\phi=\frac{1+\sqrt5}2\approx1.618$ be the golden ratio, and define $\omega:=\phi-\sqrt\phi$. Let us also define,
\begin{alignat*}{3}
\ell &:= \frac1\phi=\phi-1
&\approx0.618, \\
r_B&:=\omega\sqrt\phi
&\approx0.440,\\
r_C&:=(1-\omega)\sqrt\phi
&\approx0.832,\\
b&:=\frac{1-r_B}{r_C}
=\frac12(1+\omega)
\quad&\approx0.673, \\
a&:=1-b=\frac12(1-\omega)
&\approx0.327.
\end{alignat*}

Since we cannot create fractional facilities, the actual construction will choose $r_B$ and $r_C$ to be the closest rational approximations of the form $t/k$ where $t \in \mathbb{Z}_+$ to approximate the above irrational values. We then use these approximations to derive $b:=\frac{1-r_B}{r_C}$, $a:=1-b$, so that the hard validity requirements in \Cref{validity} are still satisfied exactly. Thus, the actual constants will deviate from above values by $O(1/k)$. For the ease of notation, let us ignore this additive error for now.

Now we construct $\B(k)$ as follows. Let $A$ and $C$ be sets of facilities of sizes $r_Bk$ and $r_Ck$, respectively. For each pair $(i_1, i_2) \in A \times C$, set the distance $d(i_1, i_2) = 2$, and place a client $j$ with $d(j, i_1)=2 - \ell$ and $d(j, i_2)=\ell$. Let $\J_A$ denote this set of clients and set the demand for each client to $u_j := 1/|\J_A|$.

Additionally, for each facility $i_1\in A$, add facility $\beta(i_1)$ at distance $2\ell$ from $i_1$ and client $j$ colocated at $\beta(i_1)$ (i.e.,  $d(j,i_1)=2\ell$ and $d(j,\beta(i_1))=0$). Denote the added clients and facilities by $\J_B$ and $B$, respectively. Set the demand for each client to $u_j=a/|J_B|$.

Finally, assign all other distances according to the resulting graph metric.

\begin{figure}[htpb]
\centering 
\includegraphics[width=0.7\textwidth]{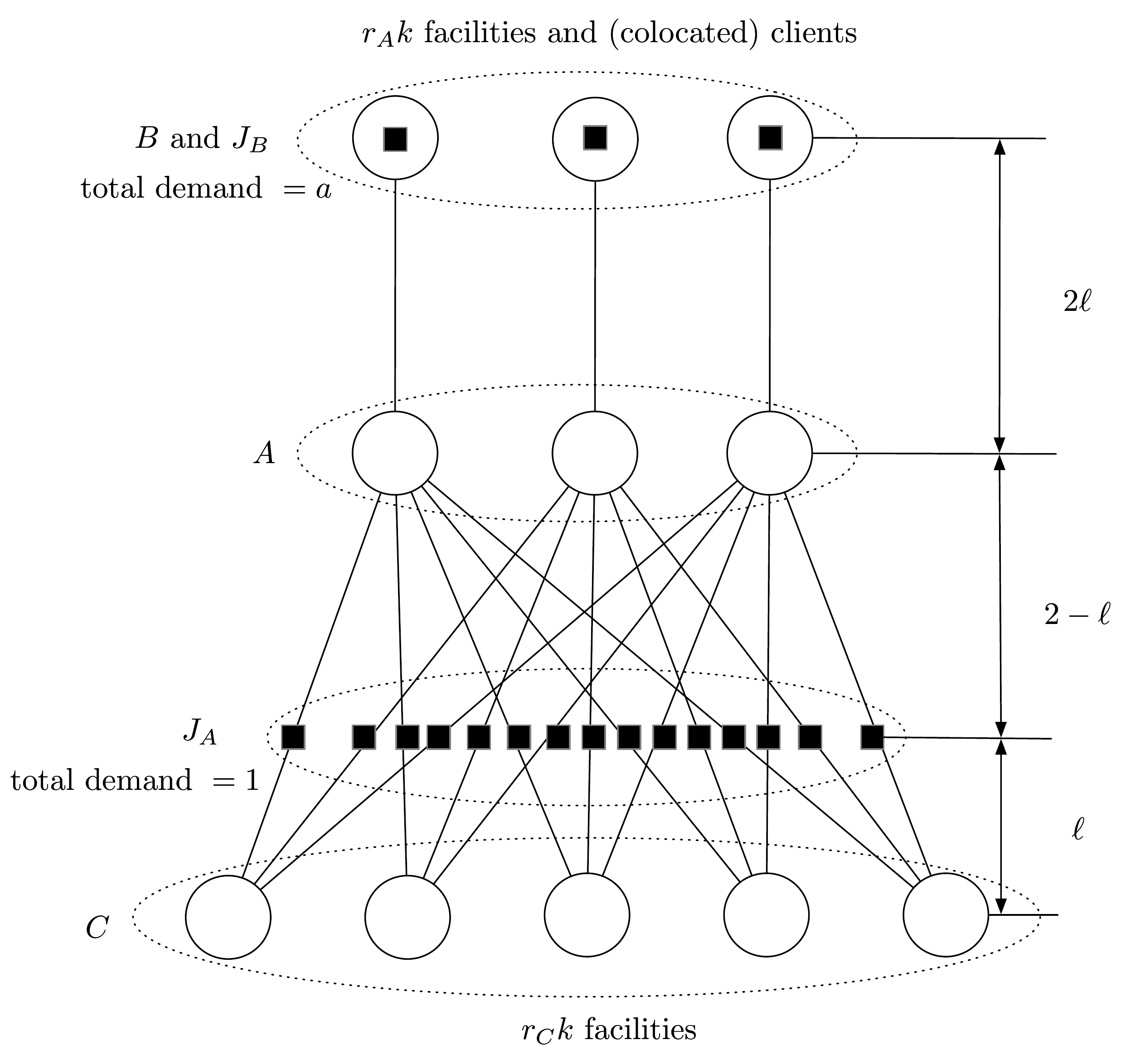}
\caption{The bi-point solution.}
\end{figure}

\begin{lemma}\label{validity}
The constructed bi-point solution $a\F_1+b\F_2$ where $\F_1 := A$ and $\F_2:=B\cup C$ is a valid bi-point solution.
\end{lemma}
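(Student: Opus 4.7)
The plan is to verify directly the three conditions making up Definition~\ref{def:bipoint}. First I would tabulate the facility counts: by construction $|F_1|=|A|=r_B k$, and since $B$ contains exactly one facility $\beta(i_1)$ per $i_1\in A$, we have $|B|=|A|=r_B k$ and hence $|F_2|=|B|+|C|=(r_B+r_C)k$. Using the simplification $r_B+r_C = \omega\sqrt\phi+(1-\omega)\sqrt\phi = \sqrt\phi$, the inequality $|F_1|\le k\le |F_2|$ collapses to $r_B\le 1\le \sqrt\phi$, which is clear from $r_B\approx 0.440$ and $\sqrt\phi\approx 1.272$.

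Second, $a+b=1$ is immediate from the definition $a:=1-b$, and $a,b\in[0,1]$ follows either numerically ($b\approx 0.673$, $a\approx 0.327$) or from the closed forms $b=\tfrac12(1+\omega)$ and $a=\tfrac12(1-\omega)$, together with $\omega=\phi-\sqrt\phi\in(0,1)$.

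Third, the mass equation $a|F_1|+b|F_2|=k$ is the only real algebraic identity to check. Dividing by $k$ it reads $a r_B + b(r_B+r_C)=1$, i.e.\ $(a+b)r_B + b r_C = r_B + b r_C = 1$, which is exactly how $b$ was defined via $b:=(1-r_B)/r_C$. So the mass equation holds on the nose, with no further calculation needed.

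The only genuine subtlety, and the step I would treat most carefully, is that $r_B=\omega\sqrt\phi$ and $r_C=(1-\omega)\sqrt\phi$ are irrational, whereas $|A|=r_B k$ and $|C|=r_C k$ must be non-negative integers. I would handle this exactly as the excerpt flags: round $r_B$ and $r_C$ to the nearest rationals of the form $t/k$ with $t\in\mathbb{Z}_+$, and then \emph{derive} $b:=(1-r_B)/r_C$ and $a:=1-b$ from the rounded values, so that the mass equation continues to hold exactly. The induced $O(1/k)$ perturbation of $a,b$ is harmless for the asymptotic integrality-gap claim in Theorem~\ref{thm:integrality_gap}, though I would want to confirm that the rounded sizes still satisfy $|F_1|<k<|F_2|$, which holds for all sufficiently large $k$ since the strict inequalities $r_B<1<\sqrt\phi$ leave room for $O(1/k)$ slack.
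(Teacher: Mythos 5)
Your proof is correct and follows essentially the same route as the paper's: verify $a+b=1$, verify $|\F_1|=r_Bk<k<\sqrt\phi\, k=|\F_2|$, and check the mass equation $a|\F_1|+b|\F_2|=k(r_B+br_C)=k$, which collapses exactly by the definition $b=\frac{1-r_B}{r_C}$. Your careful handling of the rational approximations $r_B,r_C\approx t/k$ with $a,b$ re-derived from the rounded values matches what the paper does in the remarks immediately preceding the lemma rather than inside the proof itself.
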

\begin{proof}
By definition of our parameters, we have that
\begin{itemize}
    \item $a+b = 1$,
    \item $|\F_1| = |A| = r_Bk = \omega\sqrt\phi\cdot k<k$,
    \item $|\F_2| =|C|+|B|=(r_C+r_B)k=\sqrt\phi\cdot k>k$,
    \item $a|\F_1|+b|\F_2| = (1-b)r_Bk+bk(r_C+r_B)
=k(r_B+br_C)
=k\left(r_B+\frac{1-r_B}{r_C}r_C \right)=k.$
\end{itemize}
\end{proof}

\begin{lemma}
The constructed bi-point solution has unit cost.
\end{lemma}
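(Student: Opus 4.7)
The plan is to compute, for each of the two client families $\J_A$ and $\J_B$, the distances to their closest facilities in $\F_1=A$ and in $\F_2=B\cup C$, and then verify the resulting cost equals $1$ by plugging in the identity that defines $\omega$.

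First I would argue that the ``obvious'' nearest facilities are in fact the nearest ones under the induced graph metric. For a client $j\in \J_A$ associated with a pair $(i_1,i_2)\in A\times C$, the direct edges give $d(j,i_1)=2-\ell$ and $d(j,i_2)=\ell$; I would show via the triangle inequality that any other $i_1'\in A$ is at distance $\ge d(j,i_2)+d(i_2,i_1')=\ell+2>2-\ell$, any other $i_2'\in C$ is at distance $\ge d(j,i_1)+d(i_1,i_2')=(2-\ell)+2>\ell$, and any $\beta(i_1')\in B$ is at distance $\ge \min\{d(j,i_1)+d(i_1,\beta(i_1)),\,d(j,i_2)+\ldots\}\ge 2\ell+(2-\ell)=2+\ell>\ell$. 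Hence the closest facility in $\F_1$ is $i_1$ at distance $2-\ell$, and in $\F_2$ it is $i_2$ at distance $\ell$. For a client $j\in\J_B$ colocated with $\beta(i_1)$, the closest facility in $\F_2$ is $\beta(i_1)$ at distance $0$, while the closest facility in $\F_1$ is $i_1$ at distance $2\ell$ (any other $i_1'\in A$ is at distance at least $2\ell+4$ through some $i_2$).

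Next I would plug these into $D_1$ and $D_2$. Since the $|\J_A|$ clients each carry demand $1/|\J_A|$ and the $|\J_B|$ clients each carry demand $a/|\J_B|$,
\begin{align*}
D_1 &= 1\cdot(2-\ell)+a\cdot(2\ell)=2-\ell+2a\ell, \\
D_2 &= 1\cdot\ell+a\cdot 0=\ell.
\end{align*}
Therefore
\begin{align*}
aD_1+bD_2 &= a(2-\ell+2a\ell)+b\ell = 2a+\ell\bigl(2a^2-a+b\bigr) = 2a+\ell(2a^2-2a+1),
\end{align*}
using $a+b=1$, and then $2a^2-2a+1=1-2ab=\tfrac{1+\omega^2}{2}$ since $a=(1-\omega)/2$, $b=(1+\omega)/2$.

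The last and genuinely substantive step is to show $aD_1+bD_2=1$, i.e.\
\[
(1-\omega)+\frac{1+\omega^2}{2\phi}=1
\quad\Longleftrightarrow\quad
\omega^2-2\phi\,\omega+1=0.
\]
The roots of this quadratic are $\omega=\phi\pm\sqrt{\phi^2-1}=\phi\pm\sqrt\phi$, where I use the defining identity $\phi^2=\phi+1$ of the golden ratio. Our choice $\omega:=\phi-\sqrt\phi$ is precisely the smaller root, so the identity holds and the cost is exactly $1$. The only real ``obstacle'' is this algebraic verification; in particular keeping track that the specific constants $r_B=\omega\sqrt\phi$ and $r_C=(1-\omega)\sqrt\phi$ appear only through $a,b,\ell$ in the cost computation, so the golden-ratio identity is what drives the unit-cost conclusion. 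Finally I would remark that the $O(1/k)$ rounding error incurred when replacing $r_B,r_C$ by rational approximations of denominator $k$ perturbs the cost by $O(1/k)$ and is absorbed into the $\eps$ in the statement of Theorem~\ref{thm:integrality_gap}.
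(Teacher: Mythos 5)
Your proposal is correct and follows essentially the same route as the paper: compute the per-class distances, reduce the fractional cost to $2a+\ell(1-2ab)=(1-\omega)+\frac{1+\omega^2}{2\phi}$, and close with the identity $\omega^2-2\phi\omega+1=0$ (which the paper verifies by direct expansion to $1+\phi-\phi^2=0$, while you note $\omega=\phi-\sqrt\phi$ is a root via $\phi^2-1=\phi$ — the same fact). Your extra verification that $i_1,i_2$ (resp.\ $\beta(i_1),i_1$) really are the nearest facilities in the induced metric is a welcome detail the paper leaves implicit, though to be fully rigorous it should be phrased as ``every path from $j$ exits through one of its two incident edges'' rather than as a triangle-inequality bound.
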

\begin{proof}
We have that
\begin{align*}
    \text{Cost of } a\F_1+b\F_2 &=\sum_{j\in J_A}u_j\big(ad(j,A)+bd(j,\F_2)\big)
    +\sum_{j\in J_B}u_j\big(ad(j,A)+bd(j,\F_2)\big)\\
    &=\frac{1}{|J_A|}\sum_{j\in J_A}\big(a(2-\ell)+b\ell\big)
    +\frac{a}{|J_B|}\sum_{j\in J_B}\big(a\cdot2\ell+b\cdot0\big)\\
    &=a(2-\ell)+b\ell+2a^2\ell\\
    &=a(2-\ell)+(1-a)\ell+2a(1-b)\ell\\
    &=2a+\ell(1-2ab)\\
    &=(1-\omega)+\frac{1}{\phi}\Big(1-2\cdot\frac12(1-\omega)\cdot\frac12(1+\omega)\Big)\\
    &=1-\omega+\frac{1}{\phi}\Big(\frac12+\frac12\omega^2\Big)\\
    &=1+\frac{1}{2\phi}(1-2\phi\omega+\omega^2)\\
    &=1+\frac{1}{2\phi}(1+\phi-\phi^2)=1.
\end{align*}

\end{proof}

\textbf{Remarks.} The golden ratio appears visually in the instance as the ratio between $d(A,C)$ and $d(A,B)$. Thus, the instance is aesthetically pleasing in an objective sense. Also interestingly, the instance has the facility and cost ratios
\[
    \frac{|\F_1|}{|\F_2|} = \frac{r_Bk}{r_Bk+r_Ck} = \frac{\omega\sqrt{\phi}}{\sqrt{\phi}}=\omega,
\] \[
    \frac{D_2}{D_1} = \frac{\ell}{2a\ell+2-\ell} = \frac{1}{2a+2/\ell-1} = \frac{1}{-\omega+2\phi} = \omega,
\]
as $-\omega^2+2\omega\phi-1 = \phi^2-\phi-1 = 0$ by definition of $\phi$.


\subsection{Integrality Gap Of The Golden Bi-point Solution}
We proceed to prove that any solution $\S$ to $\B(k)$ must cost at least $\sqrt\phi-O(1/k)$. To this end, we will consider $\S$ in terms of the following parameters. Let $P:= \{(i,\beta(i)) : i \in A\}$ be the set of pairings between $A$ and $B$. We define
\begin{align*}
x_{00}:=&|\{(i_1,i_2)\in P:i_1\not\in\S\land i_2\not\in\S\}|/|A|, \\
x_{01}:=&|\{(i_1,i_2)\in P:i_1\not\in\S\land i_2\in\S\}|/|A|, \\
x_{10}:=&|\{(i_1,i_2)\in P:i_1\in\S\land i_2\not\in\S\}|/|A|, \\
x_{11}:=&|\{(i_1,i_2)\in P:i_1\in\S\land i_2\in\S\}|/|A|, \\
x_A:=&|A\cap\S|/|A|=x_{10}+x_{11}, \\
x_B:=&|B\cap\S|/|B|=x_{01}+x_{11}, \\
x_C:=&|C\cap\S|/|C|.
\end{align*}

\begin{lemma} These parameters obey the following constraints:
\begin{itemize}
    \item All parameters above lie in $[0,1]$,
    \item $x_{00} \ge1-x_B-x_A$,
    \item $x_Ar_B+ x_Br_B+x_Cr_C = 1$.
\end{itemize}
\end{lemma}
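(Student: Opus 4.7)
The plan is to verify each of the three bullets by direct counting, relying only on the fact that $\beta\colon A\to B$ is a bijection and that $\{A,B,C\}$ partitions the facilities of $\B(k)$.

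For the first bullet, note that the four quantities $x_{00},x_{01},x_{10},x_{11}$ arise by partitioning the pair-set $P=\{(i,\beta(i)) : i\in A\}$ according to which of the two coordinates lies in $\S$; hence each is a nonnegative fraction of $|P|=|A|$ and they sum to $1$. The remaining parameters $x_A,x_B,x_C$ are fractions $|X\cap\S|/|X|$ for $X\in\{A,B,C\}$, so they too lie in $[0,1]$. Along the way we record the identities $x_A=x_{10}+x_{11}$ and $x_B=x_{01}+x_{11}$, which hold because $\beta$ is a bijection and $P$ enumerates each pair exactly once.

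For the second bullet, substitute the above identities into the partition equation:
\[
x_{00} \;=\; 1 - x_{01} - x_{10} - x_{11} \;=\; 1 - (x_B-x_{11}) - (x_A-x_{11}) - x_{11} \;=\; 1 - x_A - x_B + x_{11},
\]
and drop the nonnegative term $x_{11}$. For the third bullet, count $|\S|$ using the partition $\{A,B,C\}$ of the facility set:
\[
|\S| \;=\; |A\cap\S| + |B\cap\S| + |C\cap\S| \;=\; x_A\cdot r_B k + x_B\cdot r_B k + x_C\cdot r_C k,
\]
and use $|\S|=k$ to divide through by $k$.

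There is no real mathematical obstacle here; the lemma is essentially bookkeeping that sets up the more substantive integrality-gap computation to follow. The one subtlety worth flagging is that Theorem~\ref{thm:integrality_gap} permits $|\S|\le k+C(k)$ with $C(k)=o(k)$, so strictly speaking the equality in the third bullet should be read as $x_A r_B + x_B r_B + x_C r_C \le 1 + o(1)$; together with the $O(1/k)$ drift introduced by approximating $r_B,r_C$ by rationals of the form $t/k$, this error is negligible in the downstream bound of $\sqrt{\phi} - \eps$ and can be absorbed into the additive $\eps$.
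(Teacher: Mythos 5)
Your proof is correct and takes essentially the same approach as the paper: the first two bullets via the partition identity $x_{00}+x_{01}+x_{10}+x_{11}=1$ and the third by counting $|\S|$ over the partition $\{A,B,C\}$. The only slight difference is that you assert $|\S|=k$ outright, whereas the paper notes only $|\S|\le k$ is forced by feasibility and argues equality may be assumed WLOG (adding facilities cannot increase the cost); your remark about the $k+o(k)$ relaxation is handled separately in the paper's ``Final Gap'' subsection by relaxing that constraint, so the lemma itself is stated for solutions of size at most $k$.
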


\begin{proof}

The first item follows by definition. Next, observe that $$x_{00}+x_{01}+x_{10}+x_{11} =|P|/|A|=1.$$ 

Adding $x_{11}$ to both sides and using the fact that $x_{01}+x_{11} = x_B$ and $x_{10}+x_{11} = x_A$, we get
$$x_{00}+x_B+x_A=1+x_{11}, $$
which implies that $x_{00} \geq 1-x_B-x_A$.

Finally, a feasible solution must have $|\S| \le k$. For purposes of determining a minimum cost solution, we may assume equality, as adding facilities to any smaller solution cannot increase the cost. Thus, we have

$$ |\S| = x_A|A|+ x_B|B|+x_C|C|
= k.$$
Dividing both sides by $k$ gives $x_Ar_B+ x_Br_B+x_Cr_C = 1.$
\end{proof}


\begin{lemma}
For any feasible solution $\S$ to $\B(k)$, we have that
\begin{align}
    c(\S) \geq  x+2(1-x_{C})(1-\ell x_A)
+2a\ell(1-x_B) + 2a\max\{1-x_B-x_A,0\}.
\label{cost}
\end{align}
\end{lemma}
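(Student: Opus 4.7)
The approach is to split the total cost as $c(\S)=c(\J_A)+c(\J_B)$ and to bound each piece separately by casing on which facilities in the ``local neighborhood'' of each client are opened. The first two terms of the stated bound will come from the $\J_A$ analysis, and the last two from $\J_B$.

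For a client $j\in\J_A$ associated with pair $(i_1,i_2)\in A\times C$, I would first enumerate the graph-metric distances from $j$ to every facility type: $d(j,i_2)=\ell$, $d(j,i_1)=2-\ell$, $d(j,\beta(i_1))=2+\ell$ (via $i_1$), $d(j,a')=2+\ell$ for any other $a'\in A$ (via $i_2$), $d(j,c')=4-\ell$ for any other $c'\in C$ (via $i_1$), with all remaining facilities strictly farther. Consequently $c(j)=\ell$ when $i_2\in\S$, $c(j)=2-\ell$ when $i_2\notin\S$ but $i_1\in\S$, and $c(j)\ge 2+\ell$ otherwise. Because $\J_A\cong A\times C$ and demand is uniform, the fractions of pairs in these three categories are exactly $x_C$, $(1-x_C)x_A$, and $(1-x_C)(1-x_A)$. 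Summing the contributions yields
\[ c(\J_A)\;\ge\;\ell x_C+(2-\ell)(1-x_C)x_A+(2+\ell)(1-x_C)(1-x_A), \]
which a short algebraic manipulation rewrites as $\ell+2(1-x_C)(1-\ell x_A)$.

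For a client $j\in\J_B$ colocated with $\beta(i_1)$, the relevant distances are $d(j,\beta(i_1))=0$, $d(j,i_1)=2\ell$, $d(j,c)=2\ell+2$ for every $c\in C$, and every remaining facility at distance $\ge 2\ell+4$. Thus $c(j)=0$, $c(j)=2\ell$, or $c(j)\ge 2\ell+2$ in the three sub-cases ``$\beta(i_1)\in\S$'', ``only $i_1\in\S$'', and ``both closed''. Using the fractions $x_B$, $x_{10}$, $x_{00}$ with demand $a/|\J_B|$ and the identity $x_{10}+x_{00}=1-x_B$,
\[ c(\J_B)\;\ge\;a\bigl[x_{10}\cdot 2\ell+x_{00}(2\ell+2)\bigr]=2a\ell(1-x_B)+2a\,x_{00}. \]
The previously-established bound $x_{00}\ge\max\{1-x_B-x_A,0\}$ then yields the last two terms of the stated inequality, and adding $c(\J_A)$ and $c(\J_B)$ finishes the proof (with the leading ``$x$'' in the displayed statement being $\ell$).

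The main obstacle is the distance bookkeeping in the ``both closed'' sub-cases, since the graph metric in principle offers many alternative routes through distant facilities; one must verify that no detour (e.g.\ $j\to i_2\to a'\to\beta(a')$ or $j\to i_1\to c\to a'$) is shorter than the direct path through the client's local pair. This is a finite but tedious enumeration over facility types. Once these distances are established, the remainder is a three-case sum followed by routine algebra.
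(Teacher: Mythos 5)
Your proposal is correct and follows essentially the same argument as the paper: split the cost over $\J_A$ and $\J_B$, use the three-tier distance structure ($\ell$, $2-\ell$, $\ge 2+\ell$ for $\J_A$ and $0$, $2\ell$, $\ge 2+2\ell$ for $\J_B$) with the pair-fraction counts $x_C$, $x_A$, $x_B$, $x_{00}$, and then apply $x_{00}\ge\max\{1-x_A-x_B,0\}$. Your reading of the leading ``$x$'' in the displayed bound as $\ell$ also matches what the paper's derivation actually produces.
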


\begin{proof}

Given $j\in J_A$, let $i_1\in A$, $i_2\in C$ be the facilities directly connected to $j$ during construction. So $i_2$ is closest at distance $\ell$ and $i_1$ is the second closest at distance $2-\ell$ from $j$. The third nearest facility is distance $\geq 2+\ell$ from $j$. For a random client $j\in J_A$:
\begin{align*}
\E\big[c(j)\mid j\in J_A\big]
&\ge\Pr[i_2\in\S]\ell+\Pr[i_2\not\in\S\land i_1\in\S](2-\ell)+\Pr[i_2\not\in\S\land i_1\not\in\S](2+\ell)\\
&= x_{C}\ell+(1-x_{C})x_A(2-\ell)+(1-x_{C})(1-x_A)(2+\ell)\\
&= x_{C}\ell+(1-x_{C})(2+\ell-2\ell x_A)\\
&= \ell+2(1-x_{C})(1-\ell x_A).
\end{align*}

Given $j\in J_B$, let $i_1\in A$, $i_2\in B$ be the closest facilities. Note that $i_2$ is closest at distance $0$ and $i_1$ is the second closest at distance $2\ell$ from $j$. All other facilities are at distance $\geq 2+2\ell$ from $j$. For a random client $j\in J_B$:
\begin{align*}
\E\big[c(j)\mid j\in J_B\big]
&\ge\Pr[i_2\in\S]\cdot0+\Pr[i_2\not\in\S\land i_1\in\S]\cdot2\ell+\Pr[i_2\not\in\S\land i_1\not\in\S]\cdot(2+2\ell)\\
&=\Pr[i_2\in\S]\cdot0+\Pr[i_2\not\in\S]\cdot 2\ell+\Pr[i_2\not\in\S\land i_1\not\in\S]\cdot 2\\
&= 2\ell(1-x_B) + 2x_{00}\\
&\ge 2\ell(1-x_B) + 2\max\{1-x_B-x_A,0\}.
\end{align*}

Thus, the total cost of the solution is
\begin{align}
c(\S)&=\sum_{j\in J_A\cup J_B} u_j c(j) \nonumber\\
&=\sum_{j\in J_A} \frac{1}{|J_A|}c(j)
+\sum_{j\in J_B} \frac{a}{|J_B|}c(j)\nonumber\\
&=\E\big[c(j)\mid j\in J_A\big]+a\E\big[c(j)\mid j\in J_B\big]\nonumber\\
&\ge \ell+2(1-x_{C})(1-\ell x_A)
+2a\ell(1-x_B) + 2a\max\{1-x_B-x_A,0\}.
\end{align}

\end{proof}

Now the following system $\mathcal{P}$ describes a lower bound on the cost of any feasible solution $\S$:
\begin{align}
\text{ min}\quad&\ell+2(1-x_C)(1-\ell x_A)+2a\ell(1-x_B) + 2a\max\{1-x_B-x_A,0\}\\
\text{s.t.}\quad&x_Ar_B+ x_Br_B+x_Cr_C=1\label{lpk}\\
&0\le x_A,x_B,x_C\le 1
\end{align}
\begin{lemma}\label{lem:golden-min-P}
For $\v=(x_A,x_B,x_C)\in \mathcal{P}$, the objective function $f(\v)$ is minimized at at least one extremal point (one with at most one fractional component).
\end{lemma}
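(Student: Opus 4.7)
The plan is to exploit the fact that $\mathcal{P}$ is a $2$-dimensional polygon (one linear equality in three variables plus box constraints). First I would eliminate $x_B$ using (\ref{lpk}) to write $x_B = (1 - x_A r_B - x_C r_C)/r_B$, obtaining a convex polygon $Q$ in the $(x_A, x_C)$-plane. At every vertex of $Q$ two of the six box constraints are tight, so at least two of $\{x_A, x_B, x_C\}$ lie in $\{0, 1\}$; such a vertex is an extremal point of $\mathcal{P}$ in the sense of the lemma, and it suffices to show that $f$ attains its minimum at a vertex of $Q$.

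To handle the $\max\{1-x_A-x_B, 0\}$ term, I would split $Q$ along the chord $x_C = b$: a short calculation using $b = (1-r_B)/r_C$ shows that this chord is precisely the locus where $x_A + x_B = 1$, and its endpoints are the polygon vertices $(0, 1, b)$ and $(1, 0, b)$. Within each of the two resulting sub-regions, $f$ becomes a genuine bilinear function of $(x_A, x_C)$ whose only non-linear term is $2\ell x_A x_C$.

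Next I would argue that the minimum on each sub-region is attained at one of its vertices. Any interior critical point is a saddle, since the bilinear Hessian has eigenvalues $\pm 2\ell$. On any edge coming from the polygon boundary (one of $x_A, x_B, x_C$ fixed to $0$ or $1$), substituting back into $f$ produces either a linear function or a downward-opening quadratic in the remaining free variable: the $x_C^2$-coefficient works out to $-2\ell r_C / r_B \le 0$ whenever a quadratic term actually appears. On the chord $x_C = b$ itself, a pleasant cancellation (using $a + b = 1$) yields $f \equiv \ell + 2a$, matching its value at the two chord endpoints. Hence the minimum along every boundary segment of every sub-region is attained at a vertex of $Q$.

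Putting these pieces together, the minimum of $f$ on $\mathcal{P}$ is attained at one of the (at most five) vertices of $Q$, each of which is an extremal point. I expect the main obstacle to be the case-by-case verification that every polygon edge yields a concave or linear restriction of $f$; the algebra is routine but must be carried out for all edges, and one should also confirm that the chord constancy correctly glues the Case~1 and Case~2 minima together across the kink of the $\max$.
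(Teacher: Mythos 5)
Your argument is correct: eliminating $x_B$ via (\ref{lpk}) gives a polygon $Q$ whose vertices have at most one fractional coordinate (the two tight box constraints must involve two distinct variables, since the pair of lines for any single variable are parallel), the kink locus $x_A+x_B=1$ is indeed the chord $x_C=b$ with endpoints at the vertices $x_A=0,x_B=1$ and $x_A=1,x_B=0$, the cross term $2\ell x_A x_C$ makes every interior critical point a saddle, edges with $x_A$ or $x_C$ fixed give linear restrictions, edges with $x_B$ fixed give concave quadratics (coefficient $-2\ell r_C/r_B$), and on the chord $f$ collapses to the constant $\ell+2a=\sqrt\phi$, so the minimum lands on a vertex of $Q$. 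This is a genuinely different route from the paper's proof, which never projects to two dimensions: there, one starts at a minimizer and moves it along the constraint-preserving directions $\e_1=(1,-1,0)$, then $\e_2=(0,r_C,-r_B)$ or $\e_3=(r_C,0,-r_B)$, exploiting that $f$ is linear along $\e_1$ (the move preserves $x_A+x_B$, so the $\max$ term is untouched), and linear or concave along the later moves once the $\max$ has been linearized by an integral $x_A$ or $x_B$. The paper's local exchange argument avoids any case analysis of the geometry of $Q$ and of the kink; your domain-decomposition argument gives a more global picture and, as a bonus, your observation that $f\equiv\ell+2a$ on the chord explains in advance why the subsequent evaluation finds $f(\v_1)=f(\v_2)=\sqrt\phi$. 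One cosmetic caveat: a plane–cube intersection can have up to six vertices (the paper notes this), so ``at most five'' is specific to these parameters, though this has no bearing on the validity of your proof.
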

\begin{proof}
Let $\v_0$ be a minimizer of $f$ in P, and initialize variable $\v:=\v_0$. We will demonstrate that we can move $\v$ to an extremal point without increasing $f(\v)$, proving the lemma. Define vectors $\e_1:=(1,-1,0)$, $\e_2:=(0,r_C,-r_B)$, $\e_3:=(r_C,0,-r_B)$. Observe movement along these vectors preserves constraint (\ref{lpk}).

First, observe $f(\v+t\e_1)$ is linear in $t$, therefore at least one direction of $\pm\e_1$ is non-increasing. Move $\v$ maximally in that direction, until a new constraint in $\mathcal{P}$
becomes tight (or if already tight then we can skip this movement). Now either $x_B$ or $x_A$ is integral (0 or 1), so $\max\{1-x_B-x_A,0\}$ can be simplified to be linear in the remaining variables: either $1-x_A$, $1-x_B$, or $0$.

Now, if $x_A$ is integral, let us move $\v$ in direction $\pm\e_2$ until $x_C$ or $x_B$ becomes integral. Again, $f(\v+t\e_2)$ is linear in $t$, so at least one direction will be non-increasing.

If instead $x_B$ is integral, then we move $\v$ in direction $\pm\e_3$ until one of the remaining variables becomes integral. $f(\v+t\e_3)$ is linear in $t$ except for term $-2\ell r_Br_Ct^2$ which is concave in $t$, therefore at least one direction will be non-increasing.

In either case, we have produced $\v$ with $f(\v)\le f(\v_0)$, and at most 1 fractional component.
\end{proof}

Now we need only enumerate the extremal points of $\mathcal{P}$ and find the minimum value of $f$. $\mathcal{P}$ is the intersection of a plane with a 3-dimensional unit cube, which can have up to $6$ extreme points. For our chosen parameters $r_B, r_C$, the extreme points $(x_A,x_C,x_B)$ are
\begin{alignat*}{3}
\v_1&= \left(1, \frac{1-r_B}{r_C}, 0 \right)=(1, b, 0)&\approx(1,0.673,0), \\
\v_2&= \left(0, \frac{1-r_B}{r_C}, 1\right)=(0, b, 1)&\approx(0,0.673,1), \\
\v_3&= \left(1,\frac{1-2r_B}{r_C},1\right) \quad&\approx(1,0.144 ,1), \\
\v_4&= \left(0,1,\frac{1-r_C}{r_B}\right)=(0,1,2-\phi)\quad&\approx(0,1,0.382), \\
\v_5&= \left(\frac{1-r_C}{r_B},1, 0\right)= (2-\phi,1,0)&\approx(0.382,1,0).
\end{alignat*}

Substituting these points to $f$ gives
\begin{align*}
f(\v_1)=f(1, b, 0)&=\ell+2(1-b)(1-\ell)+2a\ell=\ell+2a(1-\ell)+2a\ell=\ell+2a= \sqrt{\phi},
\end{align*}
\begin{align*}
f(\v_2)=f(0, b, 1) &= \ell+2(1-b) = \ell + 2a = \sqrt{\phi},
\end{align*}
\begin{align*}
f(\v_4)=f(0,1,2-\phi) &= \ell+2a\ell(\phi-1)+2a(\phi-1)\\
&= \frac{1+2a\ell+2a}{\phi} \\
&= \frac{1+(1-\omega)(\phi-1)+(1-\omega)}{\phi}  \\
&= \frac{1+\phi-\phi^2 + \phi \sqrt{\phi}}{\phi}  \\
&= \sqrt{\phi},
\end{align*}
and numerically,
$$ f(\v_5)=f(2-\phi,1,0) = \frac{3}{\phi}+\frac{2}{\sqrt\phi}-2\approx1.426 > \sqrt{\phi}.$$
Also,
\begin{align*}
f(\v_3)=f(1, (1-2r_B)/2_C, 1) &= \ell +2\left(1-\frac{1-2r_B}{r_C}\right)(1-\ell) \\
&= \phi-1 + 2 \frac{(1-\omega)\phi - \sqrt{\phi} + 2\omega \phi}{(1-\omega)\phi}(2-\phi) \\
&= \phi-1 + 2 \frac{\omega(1+ \phi)}{(1-\omega)\phi}(2-\phi).
\end{align*}
We claim that the RHS is exactly $\sqrt{\phi}$ as
\begin{align*}
  \phi-1 + 2 \frac{\omega(1+\phi)}{(1-\omega)\phi}(2-\phi) &=  \sqrt{\phi} \\
  \Leftrightarrow (\phi-1)(1-\omega) + 2\omega(\phi-1) &= \sqrt{\phi}(1-\omega) \\
  \Leftrightarrow 2\omega(\phi-1)  &= (1-\omega)(\sqrt{\phi}-\phi+1) \\
  \Leftrightarrow 2\omega\phi -  2\omega  &= 1-2\omega+\omega^2 \\
  \Leftrightarrow 2\omega\phi   &= 1 + \omega^2 \\
  \Leftrightarrow 2\phi^2-2\phi\sqrt{\phi}   &= 1 + \phi^2 - 2\phi\sqrt{\phi} + \phi \\
  \Leftrightarrow \phi^2-\phi-1  &= 0, 
\end{align*}
which is true by definition of $\phi$.

In summary, $f(\v_5)>f(\v_1)=f(\v_2)=f(\v_3)=f(\v_4)=\sqrt\phi\approx1.272$. Therefore, the optimization program $\mathcal{P}$ is minimized at $\sqrt\phi$. Since the fractional cost of $\B(k)$ is 1, this implies the original $\I(k)$ has an integrality gap of $\sqrt\phi$.

\subsubsection{Final Gap}
As discussed earlier, our instance cannot represent the irrational constants $r_B$ and $r_C$ exactly, and so we instead set those constants to rational approximations with error $O(1/k)$. We claim this should only affect the solution of $\mathcal{P}$  by $O(1/k)$, which is less than any $\eps>0$ for sufficiently large $k$.

Furthermore, suppose we allow solutions to open $k+o(k)$ facilities. Then we must only slightly relax constraint (\ref{lpk}) by adding an $o(k)/k=o(1)$ term to the RHS. Again, the impact on the solution of $\mathcal{P}$ is bounded by any $\eps>0$  for sufficiently large $k$. 

\section{Discussion}\label{sec:discussion}
Our algorithm hierarchy arises from an attempt to increasingly tighten the $g(i)$-based bounds. Another way to utilize exact bounds could be with an appropriate randomized process, as was proposed at the end of \cite{Byrka-et-al}. Indeed our hierarchies may be viewed as increasingly-precise discretizations of such a process. In fact, if we fix all $\Omega$-variables in our NLP, and examine the dual of the resulting LP, we see that the dual solution represents an explicit probability distribution over our algorithms. This implies that for each $m$, there is a probability distribution (as a function of $\Omega$) over $\mathcal{ALG}_m$ which achieves the same approximation ratio as taking the best of all solutions. It would be interesting if there were a compact randomized algorithm that can provide the same result as $\lim_{m\to\infty}\mathcal{ALG}_m$ (as Jain and Vazirani essentially provided for $\mathcal{ALG}_1$).

We have shown that our algorithm hierarchy and analysis can prove a bi-point rounding factor between 1.2943 and 1.3064.
Our experiments loosely suggest that it may indeed achieve the lower bound as $m\to\infty$ in our hierarchy. However, even if true, we would still fail to match the known integrality gap.
Study of the integrality gap instance suggests that a matching approximation algorithm would need to leverage many more additional backup facilities per client, beyond the small number considered by current algorithms. The high-level idea is that if there are many nearby facilities, it should very likely to have a nearby backup bound---while if there are very few neighbors, then we can benefit from strong negative correlation with those few neighbors.

Star-rounding algorithms such as $\mathcal{SR}$ form a forest of $\F_1$-centric stars, and guarantee that each edge has an open endpoint. $\mathcal{ALG}_m$ does the same but with a forest of $\F_2$-centric stars. 
We claim that $\mathcal{SR}$ can actually be generalized to provide essentially the same guarantee for the forest of pseudo-trees (a tree plus one edge) which results from the union of both graphs. The fundamentals of the technique are similar to \cite{cohen2022improved}: trees can be broken up into constant-size components by removing a tiny fraction of nodes and/or edges.
This allows us to \emph{simultaneously} preserve both the \emph{root-or-leaf} guarantee of Li and Svensson, \emph{and} our ``$i_1$ or $\sigma_B(i_1)$" guarantee. This provides greatly reduced cost for the case when $\sigma_B(i_1)=i_2$, countering the costly $C_{1/g}(j)$ bound incurred by this case during $\mathcal{ALG}_m$. Although the overall improvement to our approximation factor appears quite small, we expect this may be a useful tool in matching the integrality gap of bi-point solutions. 

Ultimately, the new integrality gap shows that improving the bi-point factor alone cannot achieve better than $2\sqrt{\phi} \approx 2.544$. Improvement beyond this factor with bi-point solutions would require either improvement of the bi-point generation factor, or utilizing specifics of the bi-point generation algorithm for a more holistic analysis as in \cite{ahmadian2019bipoint,DBLP:conf/focs/CharikarG99}.

\paragraph*{Acknowledgments:} We thank the anonymous reviewers at SODA 2023 for their careful reading of our manuscript and valuable comments.

\bibliographystyle{plain}
\bibliography{references}
\appendix

\section{Computer-Assisted Techniques}\label{sec:comp_assist_techns} In this section, we describe the heuristics and computer-assisted techniques and implementation used to find the sets $\mathcal{ALG}'_2$ and $\mathcal{ALG}'_3$, and to obtain a rigorous upper bound on the bi-point rounding factor.

\subsection{Generating the Set of Algorithms}\label{sec:algogen}
Recall that $\mathcal{ALG}_m$ is the set of all valid algorithms with at most one fractional input parameter. As described in \Cref{sec:desc_alg_params} briefly, when solving the NLP, the set of parameters describing the bi-point solution are variable. Thus, $\mathcal{ALG}_m$ is a set of conditionally valid algorithms. This leads to a proliferation of NLPs, and we have to solve all of them in order to obtain the bi-point rounding factor. To avoid this, we construct algorithms that are piecewise valid (we refer to them as chains) (see the example in \Cref{sec:desc_alg_params}.) Chains can be considered as a collection of conditionally valid algorithms such that the validity range of these algorithms do not intersect and span the entire parameter space. It is straightforward to verify that chains satisfy all the properties in \Cref{def:valid_algos}, and hence they are valid.

Intuitively, each chain can be thought of as starting with some minimal feasible set of facilities in the partition $P$ (which is guaranteed to be smaller than $k$) and then opening additional facility sets in a pre-determined order, until we open $k$ in total. For the example stated in \Cref{sec:desc_alg_params} (i.e., $\A_8$ in \Cref{tab:10_algs}), the starting sets are $B_1$ and $B_2$, then the sets $A_2,C_2$ and $C_1$ are opened, in order. Thus, by enumerating all possible starting sets and orderings of the remaining facility sets, we can generate chains that guarantee to include all algorithms of $\mathcal{ALG}_m$ in at least one chain.

First, we have the following claim: every algorithm in $\mathcal{ALG}_m$ has at least $m$ input parameters that are equal to $1$. The proof follows from property (3) of \Cref{def:valid_algos}, wherein for every $t \in [m]$ at least one of $\{p_{A_t},p_{B_t},p_{C_t}\}$ is $1$.

Thus, by the above claim, we start by considering all possible $m$-sized subsets of $P$ which act as our starting sets and assign a value of $1$ to their corresponding parameters. Then, consider all possible orderings of the rest of the sets. Let $p_{W_1},p_{W_2},\cdots,p_{W_{2m}}$ be one such ordering. We now assign the fractional value obtained for $p_{W_1}$ by solving \Cref{eq:mass_normalized} and assuming all the remaining unassigned parameters to be $0$. This gives us the first piece of our chain. Next, assign the fractional value obtained for $p_{W_2}$ by solving \Cref{eq:mass_normalized} and assuming $p_{W_1}=1$ and all other unassigned parameters to be $0$. This gives the second piece of the chain. Continue this until the fractional value obtained is not a feasible quantity, i.e., we have run out of probability mass (whose total is $k$). In each step, make sure to check that the set of parameters form a valid algorithm. If it is not valid at any step, discard that ordering. Finally, replace every assigned fractional quantity $v$ with $\max\{0,\min\{1,v\}\}$. Repeat this process for all orderings, over all $m$-sized subsets to obtain the set of chains. Since we consider all possible $m$-sized subsets and all possible orderings of the rest of the sets, every valid algorithm will appear in at least one chain.

The resulting set of chains is excessive in the sense that the produced chains will have lots of overlap; indeed many chains will be entirely redundant and may be removed. We employ the following simple greedy heuristic: pick chains that cover the maximum number of uncovered valid algorithms iteratively. This produces a minimal set of chains such that every algorithm in $\mathcal{ALG}_m$ appears in at least one chain. For $m=2$ we get a set of $22$ chains, and for $m=3$ we get a set of $166$ chains, via this greedy heuristic.

Experimentally, we observe that not all constraints (corresponding to the chains) are tight. Thus, some of the chains can be dropped with little to no loss in the objective value of the NLP (this is valid since we get a relaxed NLP when we drop constraints (chains).) We employ the following heuristic we call the \emph{iterative addition} approach. We start with an initial set of chains ($\varnothing$ also works). In each iteration, solve the NLP with the current set, calculate the cost of the rest of the chains (not in the current set), and add the chain with the cheapest cost. We repeat until no new chain improves the cost. This heuristic helps reduce the number of chains by a significant amount. Namely, for $m=2$ we get $10$ chains (listed in \Cref{tab:10_algs}), and for $m=3$ we get $29$ chains (listed in \Cref{tab:30_algs}.)

\subsection{Bounding the NLP via Interval Arithmetic}\label{sec:interval_arithmetic}
Since the resulting NLPs are non-convex, it is difficult to solve them exactly. Thus, we employ the \emph{interval arithmetic} based approach of \cite{Zwick} to find a rigorous upper bound on the bi-point rounding factor. We follow a similar approach as in~\cite{Byrka-et-al}, with some additional adjustments to handle the non-smooth parameter functions, that arise due to chains, and division by $0$ in some cases. Let $\mathcal{ALG}'_m$ denote the set of chains obtained from the previous section. Also, let $\Omega$ denote the parameter space $\{0\le b \le 1\}\times \{0 \le \gamma_{A_1}\}\times \ldots \times \{0\le \gamma_{A_m}\}$.

First, consider the constraints corresponding to each algorithm $\mathcal{A}(p_{A_1}$, $\cdots$, $p_{A_m}$, $p_{B_1}$, $\cdots$, $p_{B_m}$, $p_{C_1}$, $\cdots$, $p_{C_m})$ in $\mathcal{ALG}'_m$, i.e., the constraint $X \le cost(\mathcal{A})$ (in \ref{eq:NLP}). Recall that $cost(\mathcal{A})$ is essentially a function of the input parameters $p_W$ for all $W \in P$~(\ref{eq:cost_alg}), and each $p_W$ is a function of non-linear variables~(\ref{eq:mass_normalized}). Let $\mathcal{I}$ be an interval in $\Omega$. Also, let 
\begin{align*}
    p_W^0 = \max\left\{0,\min_{\omega \in \mathcal{I}} p_W\right\},~~~~ p_W^1 = \min\left\{1,\max_{\omega \in \mathcal{I}} p_W\right\}.
\end{align*}
Then, let $cost'(\mathcal{A})$ be the value obtained by replacing the terms $p_W$ with $p_W^1$ and $1-p_W$ with $1-p_W^0$ in $cost(\mathcal{A})$. Observe that $\max_{\omega \in I} cost(\mathcal{A}) \le cost'(\mathcal{A})$. Therefore, we can relax the constraint $X \le cost(\mathcal{A})$ by substituting it with the constraint $X \le cost'(\mathcal{A})$.

Next, consider the normalization constraint $(1-b)D_1 + bD_2 = 1$. Let $b^0 = \min_{\omega \in \mathcal{I}} b$ and $b^1 = \max_{\omega \in \mathcal{I}} b$. This constraint can be simplified and relaxed to $D_2 + (D_1-D_2)(1-b) \le 1$. Then, given the interval $\mathcal{I}$, we substitute this constraint with the relaxation $D_2 + (D_1-D_2)(1-b^1) \le 1$.

Now, consider the constraint corresponding to the cost of $\mathcal{SR}$, i.e., $X \le (1-b)D_1+b(3-2b)D_2$. This constraint can be simplified as $X \le (1-b)D_1+bD_2 + 2b(1-b)D_2 \le 1+2b(1-b)D_2$ (since $(1-b)D_1+bD_2 \le 1$). Therefore, we can substitute this constraint with the relaxed constraint $X \le 1+2b^1(1-b^0)D_2$. Note that this is a tighter relaxation compared to $X \le (1-b^0)D_1+b^1(3-2b^0)D_2$; when $b\to 1$ and $D_2 \ll D_1$, the latter is unbounded.

Therefore, given an interval $\mathcal{I}$ in $\Omega$, the new and relaxed program obtained, by replacing the constraints in the NLP~(\ref{eq:NLP}) with their corresponding relaxations discussed, is an LP which can be solved efficiently with great precision. The value obtained by this LP will be an upper bound to our original NLP constrained to the interval $\mathcal{I}$. With sufficiently small intervals, we can obtain the value with desired precision. 

We run the interval arithmetic routine for the cases of $m=2$ with the chains listed in \Cref{tab:10_algs}, and $m=3$ with the chains listed in \Cref{tab:30_algs}. In our implementation, we start with $\Omega$ as our initial interval, i.e., $\mathcal{I}_0=\{b:[0,1],\gamma_{A_2}:[0,\infty),\cdots,\gamma_{A_m}:[0,\infty)\}$. Note that $\gamma_{A_1}$ doesn't appear in any of our chains, so we can drop it. In each iteration, if the LP corresponding to the interval achieves a value greater than our estimate (obtained by Mathematica's NLP solver), we split the interval into multiple sub-intervals ($8$ for $m=2$ and $16$ for $m=3$), typically dividing by half for each variable, and then solve the relaxed LP on each of the new sub-intervals. For an interval of the form $[0,\infty)$, we split it as $[0,N]$ and $[N,\infty)$, for some sufficiently large $N$ (in our implementation, $N=2$). The interval $[N,\infty)$ is not divided further.

Due to the nature of the chains, we may get terms involving $\frac{1}{0}$ and $\frac{0}{0}$ during the relaxation. We handle this by simply assigning $0$ when computing $p_W^0$ and $1$ when computing $p_W^1$, since they are the worst case lower and upper bounds for $p_W$. However, in the case when both the lower and upper bound calculation simplifies to the term $\frac{1}{0}$, we assign $p_W^0=1$ and $p_W^1=0$. This is valid since the denominator being $0$ essentially means that the corresponding set is empty, and the best lower and upper bounds would be $1$ and $0$, respectively. Alternatively, we could also just assign the cost variables corresponding to this set to be $0$ since the corresponding client sets are empty.

Using this approach, we obtain the value $1.3103$ for $m=2$ and $1.30634$ for $m=3$. This was implemented in \emph{Python~3} and we used IBM's CPLEX solver (version $20.1$) for solving the LPs. The $m=2$ case examined around $2.75$ million intervals and ran for around $36$ hours on a $12$-Core Intel Core i7 2.2 GHz machine. The $m=3$ case examined roughly $75$ million intervals and ran for around $10$ days on a $64$-Core 3rd Gen Intel Xeon 3.5 GHz machine. 

\section{\texorpdfstring{\boldmath$\mathcal{ALG}'_3$}{ALG'3}}\label{tab:30_algs}
All values are truncated to the interval $[0,1]$.
\begin{enumerate}
    \item $\mathcal{A}_1=(0$, $0$, $1$, $1$, $1$, $\frac{b-\gamma_{C_3}}{\gamma_{A_3}}$, $\frac{b-\gamma_{A_3}-\gamma_{C_3}}{1-\gamma_{C_2}-\gamma_{C_3}}$, $\frac{b+\gamma_{C_2}-1-\gamma_{A_3}}{\gamma_{C_2}}$, $\frac{b}{\gamma_{C_3}})$
    \item $\mathcal{A}_2=(0$, $\frac{b+\gamma_{A_2}+\gamma_{A_3}-\gamma_{C_2}-\gamma_{C_3}}{\gamma_{A_2}}$, $\frac{b+\gamma_{A_3}-1-\gamma_{A_2}}{\gamma_{A_3}}$, $1$, $\frac{b+\gamma_{A_3}-1}{\gamma_{A_2}}$, $0$, $\frac{b+\gamma_{A_3}-\gamma_{C_2}-\gamma_{C_3}}{1-\gamma_{C_2}-\gamma_{C_3}}$, $1$, $1)$
    \item $\mathcal{A}_3=(1$, $1$, $1$, $0$, $\frac{b-\gamma_{A_3}}{\gamma_{A_2}}$, $\frac{b}{\gamma_{A_3}}$, $\frac{b-\gamma_{A_2}-\gamma_{A_3}-\gamma_{C_2}-\gamma_{C_3}}{1-\gamma_{C_2}-\gamma_{C_3}}$, $\frac{b-\gamma_{A_2}-\gamma_{A_3}-\gamma_{C_3}}{\gamma_{C_2}}$, $\frac{b-\gamma_{A_2}-\gamma_{A_3}}{\gamma_{C_3}})$
    \item $\mathcal{A}_4=(0$, $1$, $1$, $1$, $\frac{b+\gamma_{C_2}-1-\gamma_{A_3}}{\gamma_{A_2}}$, $\frac{b+\gamma_{C_2}+\gamma_{C_3}-1}{\gamma_{A_3}}$, $\frac{b}{1-\gamma_{C_2}-\gamma_{C_3}}$, $0$, $\frac{b+\gamma_{C_2}+\gamma_{C_3}-1-\gamma_{A_3}}{\gamma_{C_3}})$
    \item $\mathcal{A}_5=(0$, $\frac{b-\gamma_{A_3}-\gamma_{C_3}}{\gamma_{A_2}}$, $\frac{b}{\gamma_{A_3}}$, $1$, $1$, $1$, $\frac{b-\gamma_{A_2}-\gamma_{A_3}-\gamma_{C_2}-\gamma_{C_3}}{1-\gamma_{C_2}-\gamma_{C_3}}$, $\frac{b-\gamma_{A_2}-\gamma_{A_3}-\gamma_{C_3}}{\gamma_{C_2}}$, $\frac{b-\gamma_{A_3}}{\gamma_{C_3}})$
    \item $\mathcal{A}_6=(0$, $\frac{b-\gamma_{C_3}}{\gamma_{A_2}}$, $\frac{b+\gamma_{A_3}-\gamma_{C_3}}{\gamma_{A_3}}$, $1$, $1$, $\frac{b-\gamma_{A_2}-\gamma_{C_3}}{\gamma_{A_3}}$, $\frac{b-\gamma_{A_2}-\gamma_{A_3}-\gamma_{C_3}}{1-\gamma_{C_2}-\gamma_{C_3}}$, $0$, $1)$
    \item $\mathcal{A}_7=(1$, $1$, $1$, $0$, $\frac{b+\gamma_{C_2}+\gamma_{C_3}-1}{\gamma_{A_2}}$, $0$, $\frac{b}{1-\gamma_{C_2}-\gamma_{C_3}}$, $\frac{b+\gamma_{C_2}+\gamma_{C_3}-1-\gamma_{A_2}}{\gamma_{C_2}}$, $\frac{b+\gamma_{C_3}-1-\gamma_{A_2}}{\gamma_{C_3}})$
    \item $\mathcal{A}_8=(1$, $1$, $1$, $0$, $\frac{b-\gamma_{C_2}}{\gamma_{A_2}}$, $\frac{b-\gamma_{A_2}-\gamma_{C_2}}{\gamma_{A_3}}$, $\frac{b-\gamma_{A_2}-\gamma_{A_3}-\gamma_{C_2}}{1-\gamma_{C_2}-\gamma_{C_3}}$, $\frac{b}{\gamma_{C_2}}$, $0)$
    \item $\mathcal{A}_9=(0$, $1$, $0$, $1$, $0$, $\frac{b+\gamma_{A_3}-1}{\gamma_{A_3}}$, $\frac{b+\gamma_{A_3}-\gamma_{C_3}}{1-\gamma_{C_2}-\gamma_{C_3}}$, $\frac{b+\gamma_{A_3}+\gamma_{C_2}-1}{\gamma_{C_2}}$, $1)$
    \item $\mathcal{A}_{10}=(0$, $1$, $1$, $1$, $0$, $\frac{b-\gamma_{C_2}}{\gamma_{A_3}}$, $\frac{b-\gamma_{A_3}-\gamma_{C_2}-\gamma_{C_3}}{1-\gamma_{C_2}-\gamma_{C_3}}$, $\frac{b}{\gamma_{C_2}}$, $\frac{b-\gamma_{A_3}-\gamma_{C_2}}{\gamma_{C_3}})$
    \item $\mathcal{A}_{11}=(0$, $\frac{b+\gamma_{C_2}+\gamma_{C_3}-1-\gamma_{A_3}}{\gamma_{A_2}}$, $1$, $1$, $1$, $\frac{b}{\gamma_{A_3}}$, $\frac{b-\gamma_{A_3}}{1-\gamma_{C_2}-\gamma_{C_3}}$, $0$, $0)$
    \item $\mathcal{A}_{12}=(1$, $1$, $1$, $0$, $\frac{b-\gamma_{A_3}-\gamma_{C_2}}{\gamma_{A_2}}$, $\frac{b-\gamma_{C_2}}{\gamma_{A_3}}$, $\frac{b-\gamma_{A_2}-\gamma_{A_3}-\gamma_{C_2}-\gamma_{C_3}}{1-\gamma_{C_2}-\gamma_{C_3}}$, $\frac{b}{\gamma_{C_2}}$, $\frac{b-\gamma_{A_2}-\gamma_{A_3}-\gamma_{C_2}}{\gamma_{C_3}})$
    \item $\mathcal{A}_{13}=(0$, $0$, $0$, $1$, $1$, $\frac{b+\gamma_{A_3}-\gamma_{C_3}}{\gamma_{A_3}}$, $\frac{b-\gamma_{C_3}}{1-\gamma_{C_2}-\gamma_{C_3}}$, $\frac{b+\gamma_{C_2}-1}{\gamma_{C_2}}$, $1)$
    \item $\mathcal{A}_{14}=(1$, $1$, $1$, $0$, $0$, $0$, $\frac{b-\gamma_{C_3}}{1-\gamma_{C_2}-\gamma_{C_3}}$, $\frac{b+\gamma_{C_2}-1}{\gamma_{C_2}}$, $\frac{b}{\gamma_{C_3}})$
    \item $\mathcal{A}_{15}=(0$, $0$, $\frac{b}{\gamma_{A_3}}$, $1$, $1$, $1$, $\frac{b-\gamma_{A_3}-\gamma_{C_2}-\gamma_{C_3}}{1-\gamma_{C_2}-\gamma_{C_3}}$, $\frac{b-\gamma_{A_3}}{\gamma_{C_2}}$, $\frac{b-\gamma_{A_3}-\gamma_{C_2}}{\gamma_{C_3}})$
    \item $\mathcal{A}_{16}=(1$, $1$, $1$, $0$, $0$, $0$, $\frac{b-\gamma_{C_2}-\gamma_{C_3}}{1-\gamma_{C_2}-\gamma_{C_3}}$, $\frac{b}{\gamma_{C_2}}$, $\frac{b-\gamma_{C_2}}{\gamma_{C_3}})$
    \item $\mathcal{A}_{17}=(0$, $0$, $0$, $1$, $\frac{b+\gamma_{A_2}+\gamma_{A_3}-\gamma_{C_2}-\gamma_{C_3}}{\gamma_{A_2}}$, $\frac{b+\gamma_{A_3}-\gamma_{C_2}-\gamma_{C_3}}{\gamma_{A_3}}$, $\frac{b-\gamma_{C_2}-\gamma_{C_3}}{1-\gamma_{C_2}-\gamma_{C_3}}$, $1$, $1)$
    \item $\mathcal{A}_{18}=(0$, $1$, $1$, $1$, $\frac{b}{\gamma_{A_2}}$, $\frac{b-\gamma_{A_2}}{\gamma_{A_3}}$, $\frac{b-\gamma_{A_2}-\gamma_{A_3}}{1-\gamma_{C_2}-\gamma_{C_3}}$, $0$, $0)$
    \item $\mathcal{A}_{19}=(0$, $1$, $1$, $1$, $\frac{b+\gamma_{C_2}-1}{\gamma_{A_2}}$, $0$, $\frac{b-\gamma_{C_3}}{1-\gamma_{C_2}-\gamma_{C_3}}$, $0$, $\frac{b}{\gamma_{C_3}})$
    \item $\mathcal{A}_{20}=(1$, $1$, $1$, $0$, $0$, $\frac{b+\gamma_{C_2}+\gamma_{C_3}-1}{\gamma_{A_3}}$, $\frac{b}{1-\gamma_{C_2}-\gamma_{C_3}}$, $\frac{b+\gamma_{C_2}+\gamma_{C_3}-1-\gamma_{A_3}}{\gamma_{C_2}}$, $0)$
    \item $\mathcal{A}_{21}=(0$, $0$, $0$, $1$, $\frac{b+\gamma_{A_2}+\gamma_{A_3}-1}{\gamma_{A_2}}$, $\frac{b+\gamma_{A_3}-1}{\gamma_{A_3}}$, $\frac{b+\gamma_{A_2}+\gamma_{A_3}-\gamma_{C_2}-\gamma_{C_3}}{1-\gamma_{C_2}-\gamma_{C_3}}$, $1$, $1)$
    \item $\mathcal{A}_{22}=(1$, $1$, $1$, $0$, $\frac{b-\gamma_{C_3}}{\gamma_{A_2}}$, $0$, $\frac{b-\gamma_{A_2}-\gamma_{C_3}}{1-\gamma_{C_2}-\gamma_{C_3}}$, $0$, $\frac{b}{\gamma_{C_3}})$
    \item $\mathcal{A}_{23}=(0$, $\frac{b+\gamma_{C_2}-1-\gamma_{A_3}}{\gamma_{A_2}}$, $\frac{b+\gamma_{C_2}-1}{\gamma_{A_3}}$, $1$, $1$, $1$, $\frac{b}{1-\gamma_{C_2}-\gamma_{C_3}}$, $0$, $\frac{b+\gamma_{C_2}+\gamma_{C_3}-1}{\gamma_{C_3}})$
    \item $\mathcal{A}_{24}=(0$, $1$, $1$, $1$, $\frac{b+\gamma_{C_3}-1}{\gamma_{A_2}}$, $0$, $\frac{b-\gamma_{C_2}}{1-\gamma_{C_2}-\gamma_{C_3}}$, $\frac{b}{\gamma_{C_2}}$, $\frac{b+\gamma_{C_3}-1-\gamma_{A_2}}{\gamma_{C_3}})$
    \item $\mathcal{A}_{25}=(0$, $\frac{b+\gamma_{A_2}-\gamma_{C_2}-\gamma_{C_3}}{\gamma_{A_2}}$, $\frac{b+\gamma_{A_2}+\gamma_{A_3}-\gamma_{C_2}-\gamma_{C_3}}{\gamma_{A_3}}$, $1$, $\frac{b-\gamma_{C_2}-\gamma_{C_3}}{\gamma_{A_2}}$, $0$, $\frac{b-\gamma_{A_2}-\gamma_{C_2}-\gamma_{C_3}}{1-\gamma_{C_2}-\gamma_{C_3}}$, $1$, $1)$
    \item $\mathcal{A}_{26}=(0$, $0$, $0$, $1$, $1$, $1$, $\frac{b-\gamma_{C_2}}{1-\gamma_{C_2}-\gamma_{C_3}}$, $\frac{b}{\gamma_{C_2}}$, $\frac{b+\gamma_{C_3}-1}{\gamma_{C_3}})$
    \item $\mathcal{A}_{27}=(1$, $1$, $1$, $0$, $0$, $\frac{b-\gamma_{C_2}}{\gamma_{A_3}}$, $\frac{b-\gamma_{A_3}-\gamma_{C_2}}{1-\gamma_{C_2}-\gamma_{C_3}}$, $\frac{b}{\gamma_{C_2}}$, $0)$
    \item $\mathcal{A}_{28}=(0$, $1$, $1$, $1$, $\frac{b-\gamma_{A_3}}{\gamma_{A_2}}$, $\frac{b}{\gamma_{A_3}}$, $\frac{b-\gamma_{A_2}-\gamma_{A_3}-\gamma_{C_3}}{1-\gamma_{C_2}-\gamma_{C_3}}$, $0$, $\frac{b-\gamma_{A_2}-\gamma_{A_3}}{\gamma_{C_3}})$
    \item $\mathcal{A}_{29}=(0$, $0$, $0$, $1$, $\frac{b+\gamma_{A_2}-1}{\gamma_{A_2}}$, $\frac{b+\gamma_{A_2}+\gamma_{A_3}-\gamma_{C_2}-\gamma_{C_3}}{\gamma_{A_3}}$, $\frac{b+\gamma_{A_2}-\gamma_{C_2}-\gamma_{C_3}}{1-\gamma_{C_2}-\gamma_{C_3}}$, $1$, $1)$
\end{enumerate}

\section{Table from \texorpdfstring{\Cref{sec:lower_bounds}}{Section 3}}
\begin{table}[htpb]
\centering
\setlength{\tabcolsep}{8pt}
\renewcommand{\arraystretch}{1.5}
\begin{tabular}{c|c|c|c|c|c|c}
Algorithm          & $p_{A_1}$ & $p_{A_2}$ & $p_{B_1}$ & $p_{B_2}$ &  $p_{C_1}$ & $p_{C_2}$ \\[3pt] \hline
$\mathcal{A}_1$    &   0         &    0       &      0     &     1      & $\frac{b}{1-\gamma_{C_2}}$ & $\frac{b+\gamma_{C_2}-1}{\gamma_{C_2}}$  \\[3pt] \hline
$\mathcal{A}_2$    &   0         &     0      &      0     &      $\frac{b+\gamma_{A_2}-1}{\gamma_{A_2}}$     &  $\frac{b+\gamma_{A_2}-\gamma_{C_2}}{1-\gamma_{C_2}}$ & 1 \\[3pt]  \hline
$\mathcal{A}_3$    &   0         &     0      &      0     &      $\frac{b + \gamma_{A_2} - \gamma{C_2}}{\gamma_{A_2}}$     &  $\frac{b - \gamma_{C_2} }{1 - \gamma_{C_2}}$ & 1 \\[3pt]  \hline
$\mathcal{A}_4$    &   0         &     1      &      0     &      0     &  $\frac{b}{1 - \gamma_{C_2}}$ & $\frac{b + \gamma_{C_2}-1}{\gamma_{C_2}}$ \\[3pt]  \hline
$\mathcal{A}_5$    &   0         &    1      &      0     &     $ \frac{ b + \gamma_{C_2}-1}{\gamma_{A_2}} $   &  $\frac{b}{1 - \gamma_{C_2}}$ & 0 \\[3pt]  \hline
$\mathcal{A}_6$    &   0         &     1     &      0     &      $\frac{b}{\gamma_{A_2}}$     &  $\frac{b - \gamma_{A_2}}{1 - \gamma_{C_2}}$ & 0 \\[3pt]  \hline
$\mathcal{A}_7$    &   0         &     1      &      0     &     $\frac{b}{\gamma_{A_2}}$    &  $\frac{b - \gamma_{A_2} - \gamma_{C_2}}{1 - \gamma_{C_2}}$ & $\frac{b - \gamma_{A_2}}{\gamma_{C_2}}$ \\[3pt]  \hline
$\mathcal{A}_8$    &   0         &     1      &      0     &      $\frac{b - \gamma_{C_2}}{\gamma_{A_2}}$     &  $\frac{b - \gamma_{A_2} - \gamma_{C_2}}{1 - \gamma_{C_2}}$ & $\frac{b}{\gamma_{C_2}}$  \\[3pt]  \hline
$\mathcal{A}_9$    &   0         &    $\frac{ b + \gamma_{A_2}-1}{\gamma_{A_2}}$      &      0     &      0     &  $\frac{b + \gamma_{A_2} - \gamma_{C_2}}{1 - \gamma_{C_2}}$ & 1 \\[3pt]  \hline
$\mathcal{A}_{10}$    &   0         &    $\frac{ b + \gamma_{C_2}-1}{\gamma_{A_2}}$      &      0     &      1     &  $\frac{b}{1 - \gamma_{C_2}}$ & 0 \\[3pt]  \hline
$\mathcal{A}_{11}$    &   0         &    $\frac{b}{\gamma_{A_2}}$      &      0     &      1     &  $\frac{b - \gamma_{A_2}}{1 - \gamma_{C_2}}$ & 0 \\[3pt]  \hline
$\mathcal{A}_{12}$    &   0         &     $\frac{b + \gamma_{A_2} - \gamma_{C_2}}{\gamma_{A_2}}$     &      0     &      0     &  $\frac{b - \gamma_{C_2}}{1 - \gamma_{C_2}}$ & 1 \\[3pt]  \hline
$\mathcal{A}_{13}$    &   0         &     $\frac{b - \gamma_{C_2}}{\gamma_{A_2}}$     &      0     &      1     &  $\frac{b - \gamma_{A_2} - \gamma_{C_2}}{1 - \gamma_{C_2}}$ & $\frac{b}{\gamma_{C_2}}$ \\[3pt]  \hline
$\mathcal{A}_{14}$    &   0         &     $\frac{b - \gamma_{C_2}}{\gamma_{A_2}}$      &      0     &      $\frac{b + \gamma_{A_2} - \gamma_{C_2}}{\gamma_{A_2}} $    &  $\frac{b - \gamma_{A_2} - \gamma_{C_2}}{1 - \gamma_{C_2}}$ & 1 \\[3pt]  \hline
\end{tabular}
\vspace{3pt}
\caption{Set of chains for all valid algorithms when $g_1 := \hat{g}, g_2 := \hat{g}+\epsilon$ for some small $\epsilon$, and $m = 2$.}\label{tab:lower_bound}
\end{table}
\end{document}